\newcommand{\graph}{\mathcal{G}}
\newcommand{\kavg}{\overline{k}} 
\newcommand{\ent}[1]{H(#1)}
\newcommand{\idxset}[1]{\dom{K}^\dom{#1}}
\newcommand{\E}[1]{\mathrm{E}\{#1\}}
\newcommand{\ami}[2]{I^{adj}(#1, #2)}
\newcommand{\mutinf}[1]{I(#1)}
\newcommand{\kld}[2]{D(#1\Vert #2)}
\newcommand{\kldr}[2]{\overline{D}(#1\Vert #2)}
\newcommand{\objective}[1]{\dom{J}(#1)}
\renewcommand{\binom}[2]{\left ( \genfrac{}{}{0pt}{}{#1}{#2} \right )}
\renewcommand{\Pr}[1]{\mathbb{P}(#1)}
\newcommand{\dom}[1]{\mathcal{#1}}
\newcommand{\indicator}[2]{\mathbb{I}_{#1}(#2)}
\newcommand{\clutrue}{\dom{Y}^{\mathrm{true}}}
\newcommand{\clupred}{\dom{Y}^{\mathrm{pred}}}
\renewcommand{\clupred}{\dom{Y}} 
\DeclareMathOperator*{\argmax}{arg\,max}
\DeclareMathOperator*{\argmin}{arg\,min}
\journalname{Data Mining and Knowledge Discovery}
\begin{document}
	
	\title{Synwalk - Community Detection via Random Walk Modelling
	}
	
	
	\author{Christian Toth  \and
		Denis Helic     \and
		Bernhard C. Geiger
	}
	
	
	\institute{C. Toth \at
		Signal Processing and Speech Communication Laboratory,
		Graz University of Technology,
		Inffeldgasse 16c, 8010 Graz, Austria \\
		Tel.: +43 316 873 - 4365\\
		\email{christian.toth@tugraz.at} \\
		ORCID: 0000-0002-0552-2874
		\and
		D. Helic \at
		Institute of Interactive Systems and Data Science,
		Graz University of Technology,
		Inffeldgasse 16c/I, 8010 Graz, Austria \\
		Tel.: +43 316 873 - 30610\\
		\email{dhelic@tugraz.at} \\
		ORCID: 0000-0003-0725-7450
		\and
		B. C. Geiger \at
		Know-Center GmbH,
		Inffeldgasse 13/6, 8010 Graz, Austria \\
		Tel.: +43 316 873 - 30881\\
		\email{geiger@ieee.org} \\
		ORCID: 0000-0003-3257-743X
	}
	
	\date{}
	
	\global\let\makeheadbox\relax
	\maketitle
	\vspace{-3em}
	\begin{abstract}
  Complex systems, abstractly represented as networks, are ubiquitous in everyday life. Analyzing and understanding these systems requires, among others, tools for community detection. As no single best community detection algorithm can exist, robustness across a wide variety of problem settings is desirable. 
  In this work, we present Synwalk, a random walk-based community detection method. Synwalk builds upon a solid theoretical basis and detects communities by synthesizing the random walk induced by the given network from a class of candidate random walks.
  We thoroughly validate the effectiveness of our approach on synthetic and empirical networks, respectively, and compare Synwalk's performance with the performance of Infomap and Walktrap. Our results indicate that Synwalk performs robustly on networks with varying mixing parameters and degree distributions. We outperform Infomap on networks with high mixing parameter, and Infomap and Walktrap on networks with many small communities and low average degree. Our work has a potential to inspire further development of community detection via synthesis of random walks and we provide concrete ideas for future research.
  
\keywords{Community detection \and Clustering \and Random walk modelling}
\end{abstract}
	
	\section{Introduction}\label{sec:introduction}

Large-scale systems of various kinds including social, informational or biological systems are pervasive in human life. Prominent examples of such systems are online social networks, the Internet, power grids or neural networks in the brain. Naturally, there is a strong interest in analyzing and understanding these systems, e.g., to estimate the effects of interventions on parts of the system, to alter or preserve their functionality, or to predict their future evolution. Commonly, we abstract such systems as networks, where nodes represent the entities in the system, and links between nodes represent the (form of) interaction between the entities.

We identify three basic necessities for a feasible empirical research of large networks:
\begin{inparaenum}[(i)]
\item identification of functional groups, frequently called \emph{communities}~\citep{Girvan2002,Radicchi2004a},
\item models for the interaction within and between these functional groups,
\item visualization of large networks and their dynamics in human graspable form.
\end{inparaenum}
Our research field established community detection~\citep{Girvan2002,Fortunato2010,Fortunato2016} as a tool satisfying these three basic requirements, as the identification of functional groups in networks is often facilitated by implicitly or explicitly assuming specific interaction models and further allows to visualize the network at a more granular level. Consequently, researchers proposed numerous community detection algorithms in recent years~\citep{Girvan2002,Clauset2004,Rosvall2008,Rosvall2009,Pons2005,Blondel2008,Raghavan2007,Reichardt2006}.

This plethora of algorithms leaves us frequently wondering which is the \enquote{best} community detection algorithm for a given practical application? Typically, researchers compare algorithms based on their ability to identify ground truth communities in artificially generated benchmark networks~\citep{Orman2009,Yang2016} or ground truth extracted from node-metadata in empirical networks. However, \citet{Peel2017} recently showed that such an evaluation is more delicate: they provide a No Free Lunch theorem, stating that there can be no single best algorithm for all possible detection scenarios, and furthermore, community evaluation on empirical networks based on node meta-data is contestable in a general case. Hence, as different algorithms (potentially) uncover different structural aspects when applied to a given network, the choice of method depends, among other criteria, on the type of community we are looking for.

One prominent class of community detection methods characterizes communities based on random walks on a network, with Infomap~\citep{Rosvall2008,Rosvall2009} and Walktrap~\citep{Pons2005} being two popular representatives (see Section~\ref{sec:related_work} for a compact description). Along the lines of the No Free Lunch theorem, both methods have their strengths and their weaknesses. Whereas Infomap accurately uncovers communities that are strongly connected internally (as characterized by the mixing parameter; see Section~\ref{ssec:clusterings}), it fails to do so for loosely connected communities (cf. ~\citep{Yang2016}, Section~\ref{sec:lfr_experiments}). On the other hand, while Walktrap delivers reasonable results over a broader spectrum of the strength of the community structures, we find that its performance strongly depends on the degree distribution of the network. In addition, Walktrap requires a selection of a hyper-parameter that is commonly chosen empirically. As we typically do not know about the community structure of a network a priori, it is unclear for practitioners how to select among these two random walk methods.

This raises an interesting question: can we combine the strengths of both methods to arrive at a community detection method that is more robust across a wider range of problem settings? In this paper, we tackle this question by presenting Synwalk -- a community detection method where we model community properties by designing a synthetic random walk model.
Specifically, Synwalk assumes a class of random walks with independent and identically distributed (i.i.d.) movements within and between candidate communities. It then simultaneously optimizes the distribution parameters of these i.i.d.\ movements (closed-form solution) and the candidate community structure (combinatorial optimization) such that the thus synthesized random walk resembles the random walk induced by the network under consideration.
Due to the structure of the i.i.d.\ movements and the aim to synthesize an existing random walk, Synwalk thus shares ideas from both Infomap and stochastic block modelling (Section~\ref{sec:rationale}). We discuss the properties of the resulting Synwalk objective in Section~\ref{sec:derivation} and thoroughly investigate and compare the behavior of Synwalk to Infomap and Walktrap on generated benchmark graphs in Section~\ref{sec:lfr_experiments}. Furthermore, we illustrate the applicability of our method on empirical undirected networks with non-overlapping communities (Section~\ref{sec:realworld}). 

Our work presents a novel
instance of community detection via random walk modelling, which adapts the concept of (stochastic) block modelling to random walk-based community detection. At the same time, Synwalk combines the strengths of the popular random walk-based community detection algorithms Infomap and Walktrap, achieving more robust results across a range of generated and empirical networks without the need for hyper-parameter optimization. We believe that our method and results can initiate future theoretical and practical work to fully unlock the potential of synthetic random walk models for community detection by, e.g., 
\begin{inparaenum}[(i)]
\item designing objective functions that enable robust detection of communities on specific classes of networks, or 
\item designing random walk models tailored for detecting specific types of communities.
\end{inparaenum} 

	\section{Related Work}\label{sec:related_work}

Different approaches to community detection have been inspired by different definitions of communities (see \citep{Fortunato2016} for an excellent survey). Accordingly, \citet{Rosvall2019} argue that different approaches to community detection can be categorized into four big groups, i.e., cut-based community detection, clustering, stochastic block modeling, and community detection based on network flows or random walks. We will now briefly summarize the concepts and approaches relevant for this work.

In a classical view, communities are densely connected subnetworks of a network that are well separated, which resonates with cut- or clustering-based community detection. This view takes the internal and external node degrees w.r.t.\ an assumed community structure into account. Popular metrics for measuring the existence and strength of a community structure are the mixing parameter~\citep{Lancichinetti2008, Lancichinetti2009b} and the modularity~\citep{Newman2004, Newman2006}. The mixing parameter $\mu$ of a node is the ratio between the number of links to nodes outside of its community and the total number of its links. The related quantity modularity compares the density of links within communities to links between communities and is used as an objective function for community detection~\citep{Clauset2004,Blondel2008}.

While we will use the mixing parameter and modularity in setting up and evaluating our experiments in Sections~\ref{sec:lfr_experiments} and~\ref{sec:realworld}, the method we propose falls into the category of random walk-based community detection methods. Random walks provide a simple proxy for diffusion processes describing the dynamics of a network. Here, the notion of a community is related to the average time a random walker spends within a certain subgroup of nodes of a network. Two prominent examples for random walk-based community detection methods are Infomap~\citep{Rosvall2008, Rosvall2009} and Walktrap~\citep{Pons2005}. 

Assuming a certain clustering (i.e., a candidate community structure), Infomap encodes the movements of a random walker on a network with a two-level codebook scheme. Each cluster has its own codebook with codewords for each member node, plus a dedicated exit codeword. Additionally, there is a global index codebook with codewords for each cluster. Now, for every move of the random walker, Infomap records the codeword of the next node from the codebook of its containing cluster. Moreover, whenever the random walker changes clusters, Infomap records the exit codeword of the old cluster's codebook and the codeword of the new cluster from the index codebook before it records the new node. By minimizing the average description length of realizations of such a random walk, Infomap obtains a clustering that compactly describes the network dynamics and hence should fit the true community structure well. Notably, it is not necessary to actually simulate random walks, as the movements of the random walker are characterized by the network topology, allowing to compute the average description length via the map equation~\citep{Rosvall2008, Rosvall2009}.

Walktrap formulates a random walk-based distance measure between clusters. Given a fixed number of steps, a random walker starting at a certain node will visit a neighboring node with a given probability. These probabilities hold information about how well two nodes are connected. Now, assuming two nodes are within the same community (i.e., well-connected), their probabilities to reach any other node within the network for a given number of steps should be similar. This observation yields a distance measure based on a weighted mean squared difference of such probabilities. Walktrap greedily merges nodes/clusters based on the described distance to arrive at a suitable clustering~\citep{Pons2005}.

Finally, the authors of~\citep{Hurley2015,Hurley2016} proposed a method for community detection that combines a random walk-based approach with (classic) block modelling. More specifically, the authors model aim to find a candidate clustering of the network under investigation such that the random walk induced on this clustering is similar to an arbitrarily chosen target random walk, where similarity is measured by the Kullback-Leibler divergence.

While the method proposed in~\citep{Hurley2015,Hurley2016} is based on random walks on clusters, Infomap, Walktrap, and Synwalk are based on the random walks induced by the network under investigation. Further, while Infomap and Walktrap predict clusterings by analyzing these random walks, our method predicts clusterings by synthesizing the network-induced random walk from a restricted class of candidate random walks. Searching for a proper random walk within this class makes our method robust across different network types. Additionally, being able to design the candidate class opens up possibilities for exploring alternative designs in future research.
	
\section{Preliminaries} \label{sec:preliminaries}
\subsection{Networks and Clusterings}\label{ssec:clusterings}
Let $\graph:=(\dom{X},E,W)$ be a weighted network with nodes $\dom{X}=\{1,\dots,N\}$, links $E\subseteq\dom{X}^2$ and weight matrix $W$. The weight matrix is given by $W:=[w_{\alpha\to\beta}]_{\alpha,\beta\in\dom{X}}$ where $w_{\alpha\to\beta}\ge 0$ denotes the weight of the link $(\alpha,\beta)\in E$ starting at node $\alpha$ and pointing at node $\beta$. (We use Greek letters to indicate nodes.) For an undirected network we set $(\alpha,\beta) = (\beta,\alpha)$ and require that either $(\alpha,\beta)\in E$ or $(\beta,\alpha)\in E$ to avoid the double counting of edges. A set $C\subseteq\dom{X}$ is a clique if it is a complete subnetwork of $\graph$, i.e., for any two distinct nodes $\alpha,\beta\in C$ there exists a connecting link $(\alpha,\beta) \in E$. 

For an undirected network, the degree $k_\alpha$ of node $\alpha$ is the number of links connected to it. We denote the average degree of the network as $\kavg$. The network density $\rho$ is defined as 
\begin{align}
    \rho = 2 \cdot \frac{|E|}{|\dom{X}| (|\dom{X}| - 1)} = \frac{\kavg}{|\dom{X}| - 1}.
\end{align}

Consider a \emph{clustering} $\dom{Y}$ of $\dom{X}$ into a set of $K$ nonempty elements $\dom{Y}_i$, i.e., $\dom{Y}:=\{\dom{Y}_i | i\in\idxset{Y}\}$ where $\idxset{Y} = \{1, \dots, K\}$ denotes the index set of $\dom{Y}$. We index the elements of such a clustering by Roman letters and refer to them as \emph{clusters} or \emph{communities}. If the clusters are disjoint we call them \emph{non-overlapping} and the clustering a \emph{partition}. A partition induces a mapping function $m{:}\ \dom{X}\to\idxset{Y}$, mapping each node of $\graph$ to the index of its containing cluster, i.e., $m(\alpha) = i$ iff $\alpha\in\dom{Y}_i$. For the remainder of this paper we assume all clusterings to be partitions.

For an undirected, unweighted network and a candidate clustering $\dom{Y}$, the mixing parameter of node $\alpha$ is defined as~\citep{Lancichinetti2008, Lancichinetti2009b}
\begin{align}
\mu(\alpha) = \frac{k^{ext}_\alpha}{k_\alpha}
\end{align}
where $k^{ext}_\alpha$ is the number of links between $\alpha$ and nodes outside of its community $\dom{Y}_{m(\alpha)}$. A cluster $\dom{Y}_i$ is a strong community~\citep{Radicchi2004a} if for all of its nodes $\mu(\alpha) < 0.5$, but communities can be defined in a weak sense also for larger values~\citep{Lancichinetti2009b}. Similarly, for an undirected, unweighted network and a candidate clustering $\dom{Y}$, we can define the modularity of the clustering as~\citep{Newman2004, Newman2006}
\begin{equation}\label{eq:modularity}
    Q=\sum_{i\in\mathcal{K}^{\dom{Y}}} \frac{|E_i|}{|E|} -\left ( \frac{1}{2|E|} \sum_{\alpha\in\dom{Y}_i} k_\alpha \right)^2
\end{equation}
where $|E_i|$ denotes then number of internal edges in cluster $\dom{Y}_i$. \citet{Brandes2008} showed that the modularity ranges from $-1/2$ to $1$, with small values indicating weak community structures of the candidate clustering $\dom{Y}$.

\subsection{Random Walks}
We consider random walks $\{X_t\}_{t\in\mathbb{N}}$ on the network $\graph$, i.e., $\{X_t\}$ is a first-order Markov chain on $\dom{X}$. We assume that its stationary transition probability matrix $P:=[p_{\alpha\to\beta}]_{\alpha,\beta\in\dom{X}}$ is derived from the network's weight matrix $W$ and that the random walk is initialized according to an invariant state distribution $p := [p_\alpha]_{\alpha\in\dom{X}}$ that satisfies
\begin{equation}\label{eq:one-step}
 p_\beta = \sum_{\alpha} p_\alpha p_{\alpha \to \beta}.
\end{equation}
We assume that the network is strongly connected, thus $p$ is unique and positive.

Setting $Y_t:=m(X_t)$ defines a stationary process $\{Y_t\}$ on the clusters. Specifically, the marginal and joint probabilities describing $\{Y_t\}$ are obtained as
\begin{subequations}\label{eq:cluster-probabilities}
 \begin{align}
  p_i &:= \Pr{Y_t=i}=\sum_{\alpha\in i }p_\alpha, \quad i \in \idxset{Y}
 \end{align}
and
  \begin{align}
  p_{i,j}&:=\Pr{Y_{t+1}=j,Y_{t}=i} = \sum_{\alpha\in i}\sum_{\beta\in j}p_\alpha p_{\alpha\to\beta}, \quad i,j \in \idxset{Y}.
 \end{align}
\end{subequations}
We further abbreviate $p_{\neg i} := 1 - p_i = \Pr{Y_t \neq i}$ for the marginal complement, $p_{i,\neg j} := p_i - p_{i,j} = \Pr{Y_{t+1}\neq j,Y_{t}=i}$ for the joint complement, and $ p_{i\to j} := \Pr{Y_{t+1}=j|Y_{t}=i}$, respectively $p_{i\not\to j} := 1- p_{i\to j} = \Pr{Y_{t+1}\neq j|Y_{t}=i}$ for the conditional and its complement.

\subsection{Information Theory} \label{ssec:infotheory}
We make use of the following quantities from information theory that are well-described by~\citet[Chapter~2]{Cover2006}. Let $Y,Z$ denote random variables (RV), then we call $\ent{Z}$ the entropy of $Z$, $H(Y|Z)$ the conditional entropy of $Y$ given $Z$ and $\mutinf{Y;Z}$ the mutual information between $Y$ and $Z$. Furthermore, let $p$ and $q$ denote discrete probability distributions over the same alphabet. Then we call $\kld{p}{q}$ the Kullback-Leibler divergence between $p$ and $q$.
If $p,q$ are Bernoulli distributions i.e., $p=[p_1,1-p_1]$ and $q=[q_1,1-q_1]$, then we abbreviate $\kld{p_1}{q_1} :=\kld{p}{q}$. Furthermore, let $P:=[p_{\alpha\to\beta}]_{\alpha,\beta\in\dom{Z}}$ and $Q:=[q_{\alpha\to\beta}]_{\alpha,\beta\in\dom{Z}}$ be transition probability matrices of equal size. The Kullback-Leibler divergence rate $\kldr{\cdot}{\cdot}$ between two stationary Markov chains governed by $P$ and $Q$ is
\begin{align}
 \kldr{P}{Q} := \sum_{\alpha\in\dom{Z}}\sum_{\beta\in\dom{Z}} p_\alpha p_{\alpha\to\beta}\log \frac{p_{\alpha\to\beta}}{q_{\alpha\to\beta}}
\end{align}
given that the Markov chains are irreducible~\citep[Th.~1]{Rached2004}.
	\section{Community Detection via Random Walk Modelling}
We now introduce the Synwalk objective, derive some of its properties, and discuss its relations to Infomap, stochastic block modelling, and model reduction techniques for random walks. For the sake of readability we defer proofs to Appendix~\ref{appx:proofs}.

\subsection{Derivation and Properties of the Synwalk Objective}\label{sec:derivation}

Assume a network $\graph = (\dom{X}, E, W)$ with an inherent community structure $\clutrue$. Consider further a random walker moving on $\graph$ governed by the transition probability matrix $P$, which is derived from the weight matrix $W$. We refer to this random walker as the \emph{network-induced} random walker, as its movements depend on the topology of $\graph$ (i.e., implicitly its community structure). In the next step we design a \emph{synthetic} random walker, governed by some transition probability matrix $Q^\dom{Y}$ which, in contrast to $P$, explicitly depends on some candidate partition $\dom{Y}$. In essence, our approach then aims to find a partition $\dom{Y}$ such that the synthetic random walker behaves (stochastically) as similarly to the network-induced walker as possible. Intuitively, the resulting partition will resemble the intrinsic partition $\clutrue$ very closely. We formalize this concept in the following.

The transition probability matrix that governs the synthetic random walker has a particular structure that depends on a candidate partition. Specifically, suppose that at a given time step the synthetic random walker is at node $\alpha$ in cluster $\dom{Y}_i\in\dom{Y}$. We decide whether to leave or to stay in the current cluster $\dom{Y}_i$ in the next time step based on a cluster-specific Bernoulli distribution $[s_i, 1-s_i]$. In case of a cluster change, we choose a new cluster $\dom{Y}_j \neq \dom{Y}_i$ according to a distribution over clusters $[u_i]_{i\in\idxset{Y}}$. Finally, we choose the next node $\beta$ lying in the new cluster $\dom{Y}_j$ by a cluster-specific distribution over nodes $[r_\beta^j]_{\beta\in \dom{Y}_j}$ (note that $\dom{Y}_j=\dom{Y}_i$ if we stay in the current cluster). This particular structure yields the transition probability matrix $Q^\dom{Y}=[q_{\alpha\to\beta}]_{\alpha,\beta\in\dom{X}}$ where
\begin{align}\label{eq:qmat}
q_{\alpha\to\beta} = 
\begin{cases}
r_\beta^{m(\beta)}  \cdot (1-s_{m(\alpha)}), & m(\alpha)=m(\beta),\\
r_\beta^{m(\beta)}  \cdot s_{m(\alpha)}     \cdot\frac{u_{m(\beta)}}{1 - u_m{(\alpha)}}, &\text{otherwise.}
\end{cases}
\end{align}
Note that when switching clusters we have to normalize the distribution over clusters by $1 - u_{m(\alpha)} = \sum_{k\neq m(\alpha)} u_k$ since we exclude the current cluster $m(\alpha)$ as a choice.

Given the network-induced and the synthetic random walker, the aim is now to optimize the candidate partition $Q^\dom{Y}$ and its parameters to maximize the similarity between the resulting random walks. We quantify this similarity via the Kullback-Leibler divergence rate $\kldr{P}{Q^\dom{Y}}$, i.e., the lower $\kldr{P}{Q^\dom{Y}}$, the more similar are $P$ and $Q^\dom{Y}$, and the more likely it is that the synthetic random walker produces realizations of random walks that are also \emph{typical} for the network-induced random walker~\citep{Kesidis1993}. Hence, the optimal partition $\dom{Y}^*$ satisfies
\begin{align}\label{eq:original}
    \dom{Y}^* \in \argmin_{\dom{Y}} \left [ \min_{\{[r_\alpha^i]_{\alpha\in \dom{Y}_i},\ s_i,\ u_i\}_{i\in\idxset{Y}}}  \kldr{P}{Q^\dom{Y}} \right ].
\end{align}

The cluster-specific distributions over nodes and the cluster-specific Bernoulli distributions minimizing~\eqref{eq:original} can be shown to be
\begin{align}
    r_\alpha^{i,*} &= \frac{p_\alpha}{p_i} = \Pr{X_t=\alpha,Y_t=i} \\
    s_i^* &= p_{i\not\to i} = \Pr{Y_{t+1}\neq i|Y_{t}=i}.
\end{align}
Regarding the distribution over clusters $[u_i]_{i\in\idxset{Y}}$ there exists no closed-form solution to the best of our knowledge. Nevertheless, by choosing
\begin{align}
u_i &= p_i = \Pr{Y_t=i}
\end{align}
as a sub-optimal solution we can relax the original optimization problem in~\eqref{eq:original} to arrive at (see Proposition~\ref{prop:mainProp} in Appendix~\ref{appx:mainProp})
\begin{align}\label{eq:synwalk}
    \dom{Y}^* \in \argmax_{\dom{Y}} \sum_{i\in\idxset{Y}} p_i\kld{p_{i\to i}}{p_i}.
\end{align}
We hence define the \emph{Synwalk objective} as follows.
\begin{definition}\label{def:synwalk}
The Synwalk objective for a given partition $\dom{Y}$ is
\begin{align}\label{eq:si_objective}
\objective{\dom{Y}} :&= \sum_{i\in\idxset{Y}} p_{i,i}\log \frac{p_{i\to i}}{p_i} + p_{i,\neg i}\log \frac{p_{i\not\to i}}{p_{\neg i}} = \sum_{i\in\idxset{Y}} p_i\kld{p_{i\to i}}{p_i} 
\end{align}
\end{definition}

Optimizing the Synwalk objective is a combinatorial and non-convex problem. As we show in Proposition~\ref{prop:bounds} in Appendix~\ref{appx:bounds}, $\objective{\dom{Y}}$ is bounded via
\begin{align}\label{eq:bounds}
    0 \leq \objective{\dom{Y}} \leq \mutinf{Y_t;Y_{t-1}} \leq \mutinf{X_t;X_{t-1}}.
\end{align}

The following observation supports the rationale behind Synwalk's aptness as a community detection method. Consider an unweighted network of disconnected cliques. As we show in Appendix~\ref{app:proof_optimal}, the Synwalk objective achieves its global maximum for a community structure identical to the clique structure of this network. Although isolated cliques are an unrealistic scenario for community detection, they carry the intuition of the concept of a community, i.e., strong internal and weak external connections. Synwalk's optimal behaviour in this idealized edge case theoretically grounds our strong experimental results in Section~\ref{sec:lfr_experiments}. 

 For additional insights based on theoretical considerations and synthetic toy data we refer the reader to~\citep[Sections~3.2 \& 5.1]{Toth2020}.

\subsection{Relation to Infomap and (Stochastic) Block Modeling}\label{sec:rationale}

The design of our random walk model was inspired by Infomap's coding scheme. Recall Infomap's two-level codebook structure described in Section~\ref{sec:related_work}, i.e., the cluster codebooks with node and exit codewords, and the global index codebook. The distributions assembling the dynamics of our synthetic random walker in~\eqref{eq:qmat} correspond to these codebooks:
\begin{inparaenum}[(i)]
\item the cluster-specific distributions over nodes $[r_\alpha^i]_{\alpha\in \dom{Y}_i}$ correspond to the cluster codebooks,
\item the cluster-specific Bernoulli distributions $\{s_i\}_{i\in\idxset{Y}}$ determining a cluster change correspond to the exit codewords, and
\item the distribution over clusters $[u_i]_{i\in\idxset{Y}}$ corresponds to the index codebook.
\end{inparaenum} Thus, while Infomap takes an \emph{analytic} approach to community detection by applying the minimum description length principle with a specific codebook structure, Synwalk takes a \emph{synthetic} approach by trying to mimic the network-induced random walk with our synthetic random walk model. 

The definition of $Q^\dom{Y}$ in~\eqref{eq:qmat} and of the optimization problem~\eqref{eq:original} are reminiscent of stochastic block modeling under Kullback-Leibler divergence. The main difference is that in stochastic block modeling, one tries to infer model parameters -- e.g., community structure, inter- and intra-community edge probabilities -- such that the likelihood of a given graph is maximized. In other words, block modeling infers the parameters of a random graph model, i.e., a generative model from which graphs can be drawn, such that the likelihood of the graph under consideration is maximized. An essential point for stochastic block models is that these models have limited degrees of freedom, and that a good fit between the model and the graph is achieved by selecting an appropriate candidate clustering for the former. In contrast, Synwalk first transforms the graph under consideration to a random walk model, characterized by the transition probability matrix $P$. Then, the aim of Synwalk is to infer the parameters -- i.e., the community structure and parameters of $Q^\dom{Y}$ -- of another random walk model such that the resulting random walk is "close" to the original one in a well-defined sense. Furthermore, it is essential that $Q^\dom{Y}$ has less degrees of freedom than $P$; while, for $N$ nodes and $K$ candidate clusters, $P$ has $N(N-1)$ degrees of freedom, the degrees of freedom of $Q^\dom{Y}$ are limited to $K + (K-1) + (N-K)=N+K-1$. Thus, Synwalk can adequately be interpreted as an approach to "random walk modeling".

Finally,~\citet{Hurley2015,Hurley2016} proposed a method that combines random walks on networks with (generalized) block modelling. While they consider the network-induced random walk clusters rather than on nodes, they also use the Kullback-Leibler divergence to measure the similarity with a target random walk and, thus, the fitness of the candidate clustering. For a specific target random walk it can be shown that their approach becomes equivalent to the goal of maximizing $\mutinf{Y_t;Y_{t-1}}$~\citep[Sec.~III.A]{Hurley2015}. The same cost function is also obtained by further relaxing our Synwalk objective~\eqref{eq:si_objective} (cf.~Appendix~\ref{appx:bounds}).

	\section{Experiments on the LFR Benchmark}\label{sec:lfr_experiments}


To ensure a fair comparison, we used the optimization heuristic of Infomap to maximize the Synwalk objective (see~\citep[Appendix A]{Rosvall2009, Toth2020} for algorithmic details) with the same set of default hyper-parameters for Infomap and Synwalk. 

The modified framework used in the course of this work can be found at \url{https://github.com/synwalk/synwalk}. For Walktrap we assume a default value of $T=4$ where $T$ is the hyper-parameter describing the random walk length used to compute the node and cluster distances. We use the same setup in our experiments with empirical networks in Section~\ref{sec:realworld}.

To validate and compare the results of community detection methods it is common practice to evaluate their performance on benchmark networks~\citep{Yang2016, Fortunato2016, Newman2004, Lancichinetti2009c, Orman2009} where the ground truth community structure is known. The prevalent benchmark in more recent studies~\citep{Yang2016, Orman2009} is the LFR benchmark~\citep{Lancichinetti2008, Lancichinetti2009b} and hence, we adopt it in our experiments. We generate the LFR benchmark networks with parameters as given in Table~\ref{tab:lfr}. 

\begin{table}[b]
\centering
\caption{Parameter setup for generating the LFR benchmark networks.} \label{tab:lfr}
\resizebox{\textwidth}{!}{ 
\begin{tabular}{@{}c|l|c|c@{}}
\toprule
\textbf{Parameter} 		& {\textbf{Description}} 	& {\textbf{Parameter Set A}}    & {\textbf{Parameter Set B}}	\\ \midrule
$N_c^{max}$ & Maximum community size            & $0.2 \cdot N$             & $0.1 \cdot N$         \\ \midrule
$N_c^{min}$ & Minimum community size            & $0.25 \cdot N_c^{max}$    & $10$    \\ \midrule
$k^{max}$   & Maximum node degree               & $0.95 \cdot N_c^{max}$    & $0.95 \cdot N_c^{max}$	\\ \midrule
$\kavg$     & Average node degree               & $\{15, 25, 50\}$          & $20$	    \\ \midrule
$\beta$ 	& Community size distribution exponent 	& $1.0$                 & $1.0$	                \\ \midrule
$\gamma$	& Degree distribution exponent 			& $2.0$                 & $2.0$                 \\ \bottomrule
\end{tabular}
}
\end{table}

We employ the adjusted mutual information (AMI,~\citep{Vinh2010}) as a performance measure when comparing the partitions found by different community detection algorithms with the ground truth community structure.
AMI values close to $1$ indicate high similarity between the found partition and the ground truth, whereas a values around $0$ reflect low similarity. Let $\clutrue$ denote the ground truth clustering and $\clupred$ any predicted clustering, then the AMI is defined as
\begin{align}\label{eq:ami}
\ami{\clutrue}{\clupred} = \frac{\mutinf{\clutrue; \clupred} - \E{\mutinf{\clutrue; \clupred}}}{\frac{1}{2}[ H(\clutrue) + H(\clupred)] - \E{\mutinf{\clutrue; \clupred}}},
\end{align}
where $\E{\cdot}$ denotes the expectation operator with respect to a chosen permutation model. We normalize the AMI by the arithmetic mean as in~\eqref{eq:ami}

The source code required for executing the experiments of this section can be found at \url{https://github.com/synwalk/synwalk-analysis}.


\subsection{AMI as a Function of the Mixing Parameter}\label{sssec:ami_vs_mu}
In this experiment we use parameter set A (see Table~\ref{tab:lfr}) to generate the LFR benchmark networks. We fix the network size and average degree of the generated LFR networks while varying their mixing parameter $\mu$ between $0.2$ and $0.8$. Experiments with varying network sizes are shown in Appendix~\ref{appx:additional_lfr}.

The results for the AMI as a function of the mixing parameter are shown in Fig.~\ref{fig:lfr_ami_vs_mu}. As can be seen, Infomap correctly identifies the communities for sufficiently small values of $\mu$ and transitions to vanishing AMI around $\mu \approx 0.5$. This behavior reflects the definition of communities in a strong and weak sense as proposed by~\citet{Radicchi2004a} and was also observed by~\citet{Yang2016}. We explain this behaviour by looking at Infomap's coding scheme. If $\mu>0.5$, then the random walker will have a higher probability of exiting a community than staying within it. Hence, for the ground truth community structure, the coding overhead due to sending exit codewords will dominate, and clusterings resulting in more efficient encodings can be found, e.g., by putting all nodes into a single common cluster. Indeed, we observed exactly this behavior for Infomap in our experiments.

Unlike Infomap, Synwalk does not penalize frequent transitions between communities, although our random walk model resembles Infomap's coding scheme (cp. Section~\ref{sec:rationale}). Thus, the performance transitions of Synwalk, similarly to Walktrap, occur at increasing values of $\mu$ for increasing network densities (Fig.~\ref{fig:lfr_ami_vs_mu}, columns from top to bottom and rows from right to left). Intriguingly, for roughly the same network density the transition phases shift to higher values of the mixing parameter as the average degree increases (cp. Appendix~\ref{appx:additional_lfr}). Hence, neither the mixing parameter nor the network density sufficiently characterizes the AMI performance of Synwalk and Walktrap. We analyze this phenomenon further in Section~\ref{sssec:node_statistics}. In contrast,  our experiments show that even as we vary the average degree, Infomap's performance mainly depends on the mixing parameter of the networks.

Overall, Synwalk outperforms Infomap in terms of AMI on sufficiently dense networks or networks with mixing parameters $\mu \gtrapprox 0.5$. We perform approximately on par with Walktrap, where we see slightly better performance on networks with lower density (Fig.~\ref{fig:lfr_ami_vs_mu}, top row) and a slight disadvantage on networks with higher density (Fig.~\ref{fig:lfr_ami_vs_mu}, bottom row).

\begin{figure*}[t]
    \centering
    \begin{subfigure}{0.32\textwidth}
        \centering
        \includegraphics[width=1.0\textwidth]{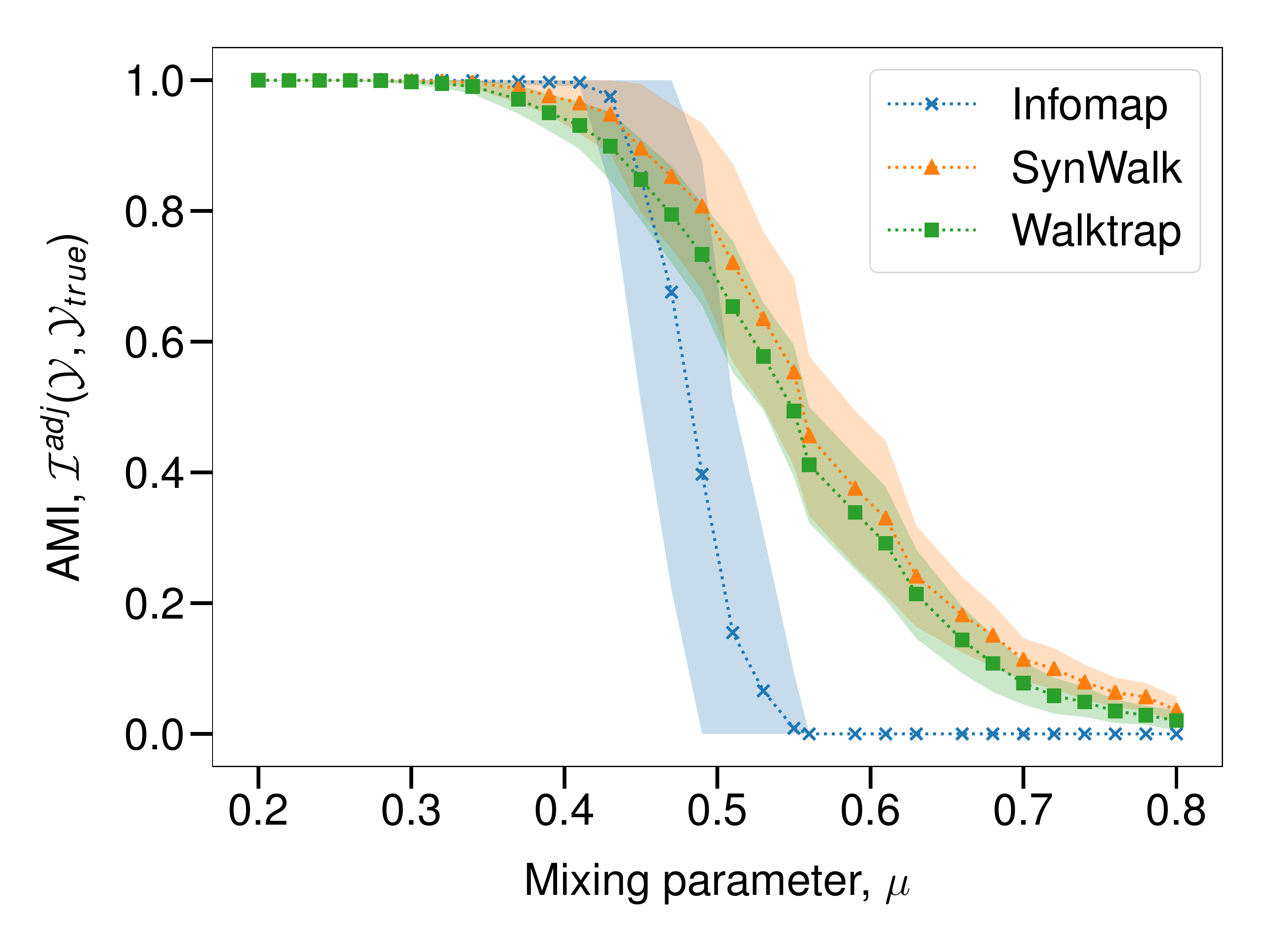}
        \caption{$\kavg = 15$, $N = 300$, $\rho = 0.050$.}

    \end{subfigure}
    \begin{subfigure}{0.32\textwidth}
        \centering
        \includegraphics[width=1.0\textwidth]{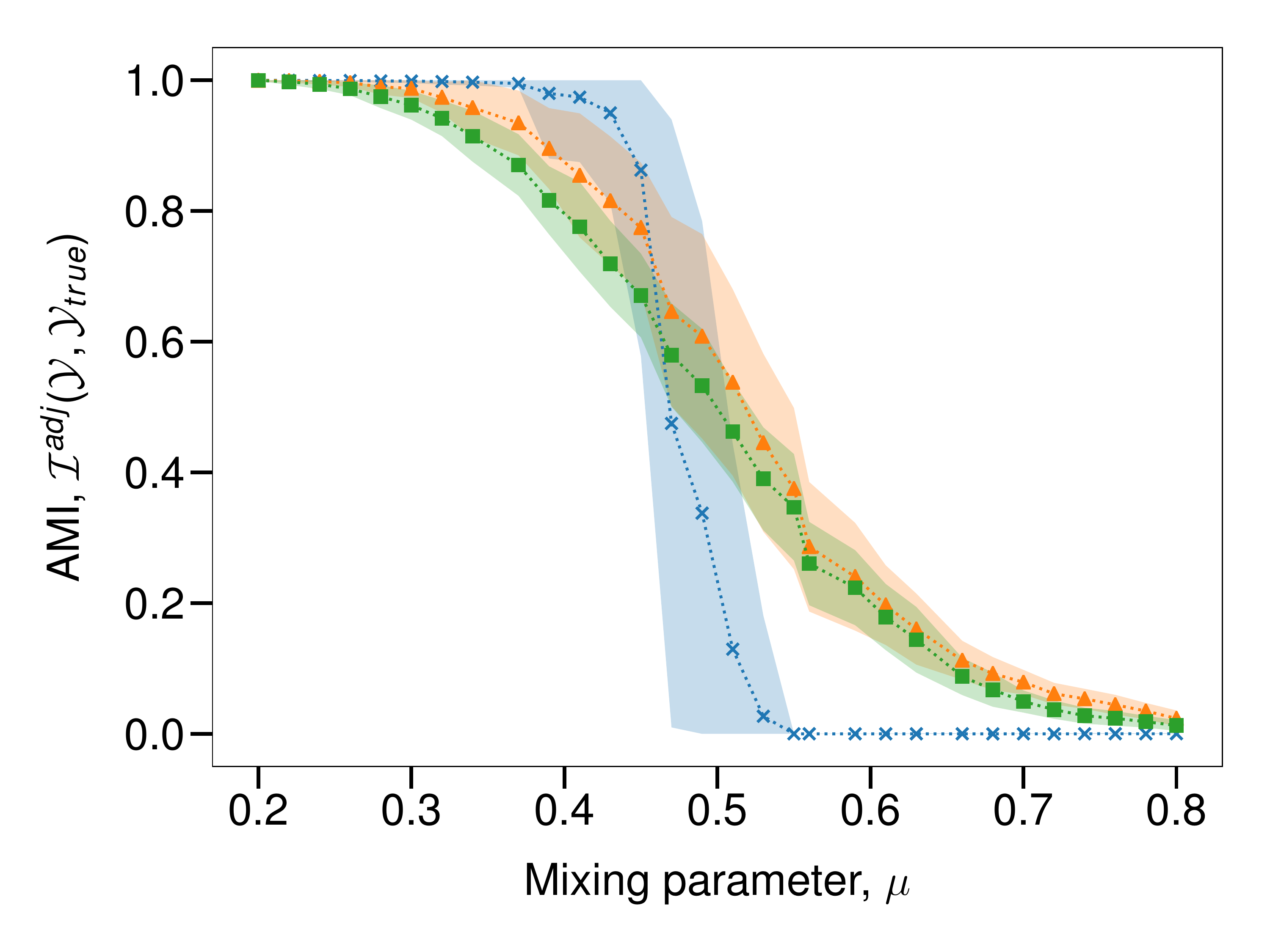}
        \caption{$\kavg = 15$, $N = 600$, $\rho = 0.025$.}
    \end{subfigure}
    \begin{subfigure}{0.32\textwidth}
        \centering
        \includegraphics[width=1.0\textwidth]{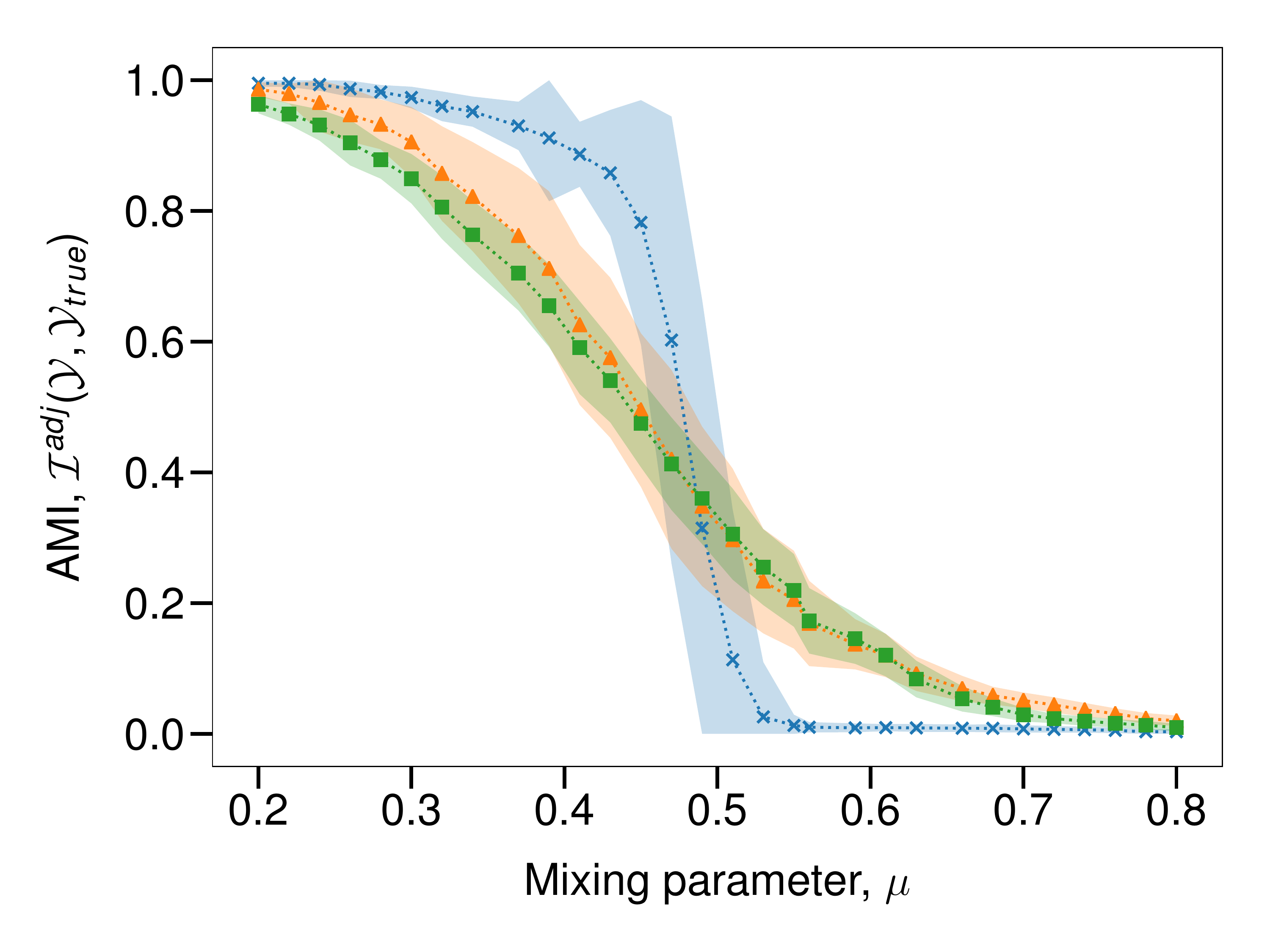}
        \caption{$\kavg = 15$, $N = 1200$, $\rho = 0.013$.}
    \end{subfigure}
    \\
    \begin{subfigure}{0.32\textwidth}
        \centering
        \includegraphics[width=1.0\textwidth]{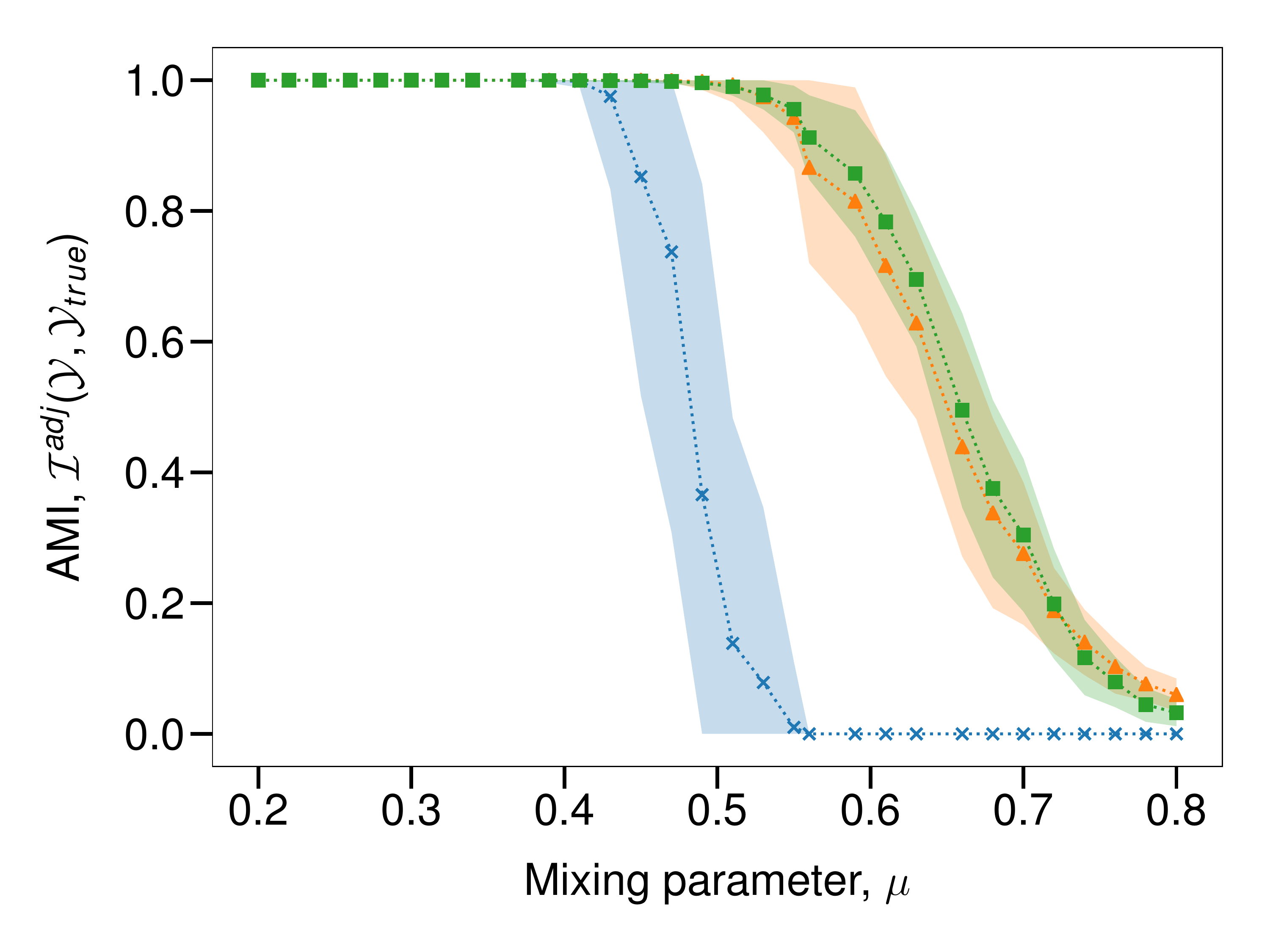}
        \caption{$\kavg = 25$, $N = 300$, $\rho = 0.084$.}
    \end{subfigure}
    \begin{subfigure}{0.32\textwidth}
        \centering
        \includegraphics[width=1.0\textwidth]{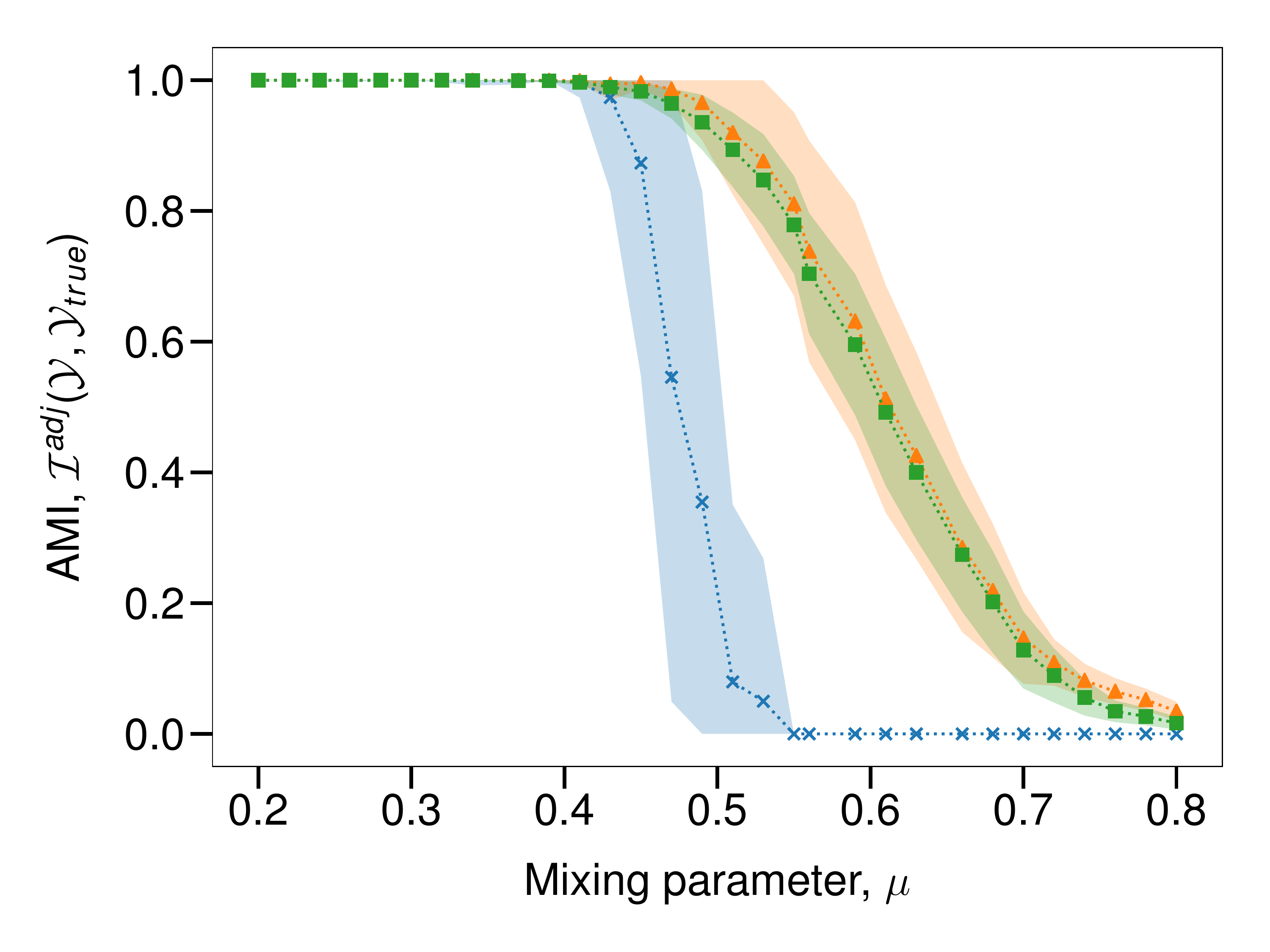}
        \caption{$\kavg = 25$, $N = 600$, $\rho = 0.042$.}
    \end{subfigure}
    \begin{subfigure}{0.32\textwidth}
        \centering
        \includegraphics[width=1.0\textwidth]{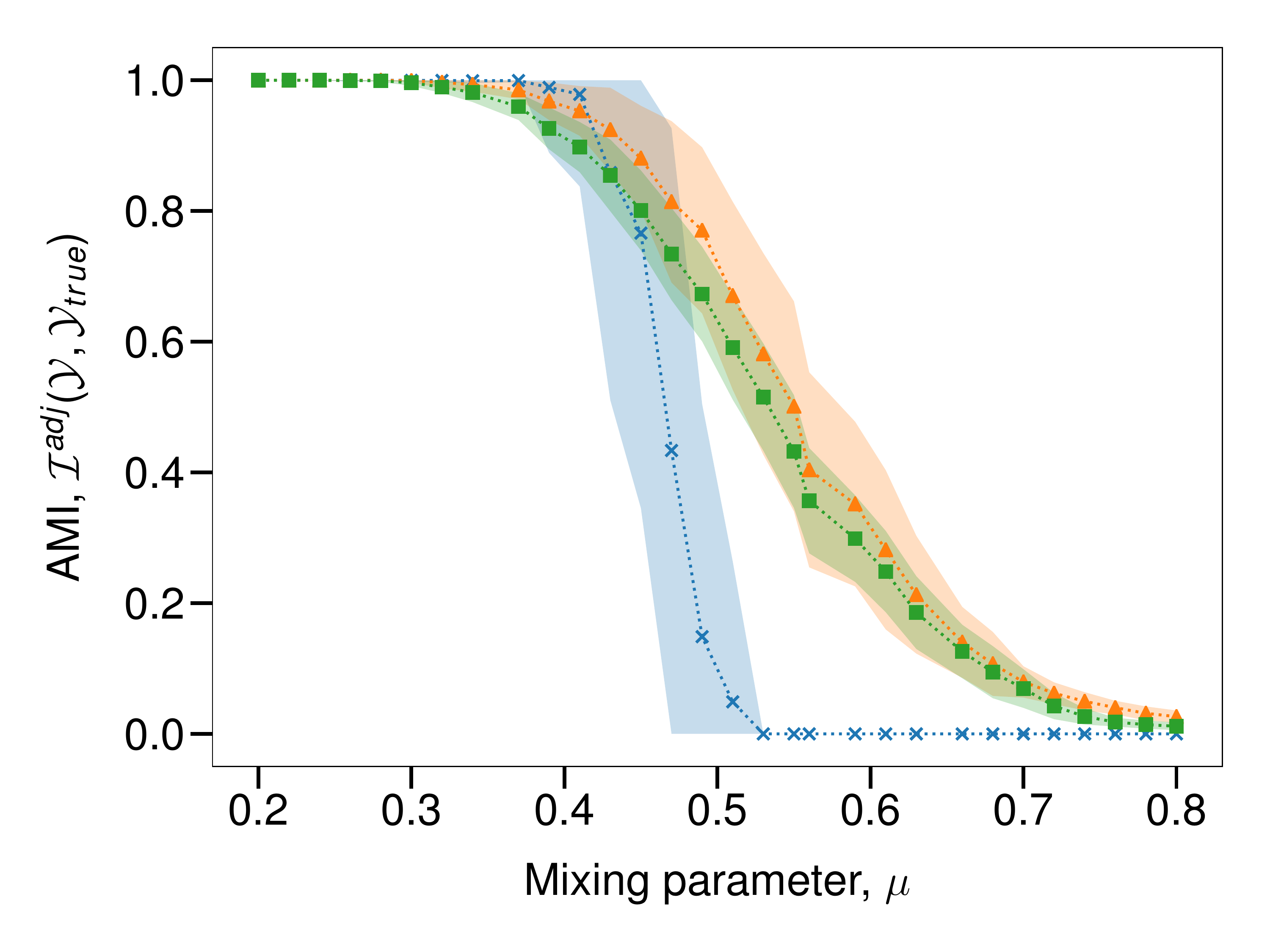}
        \caption{$\kavg = 25$, $N = 1200$, $\rho = 0.021$.}
    \end{subfigure}
    \\
    \begin{subfigure}{0.32\textwidth}
        \centering
        \includegraphics[width=1.0\textwidth]{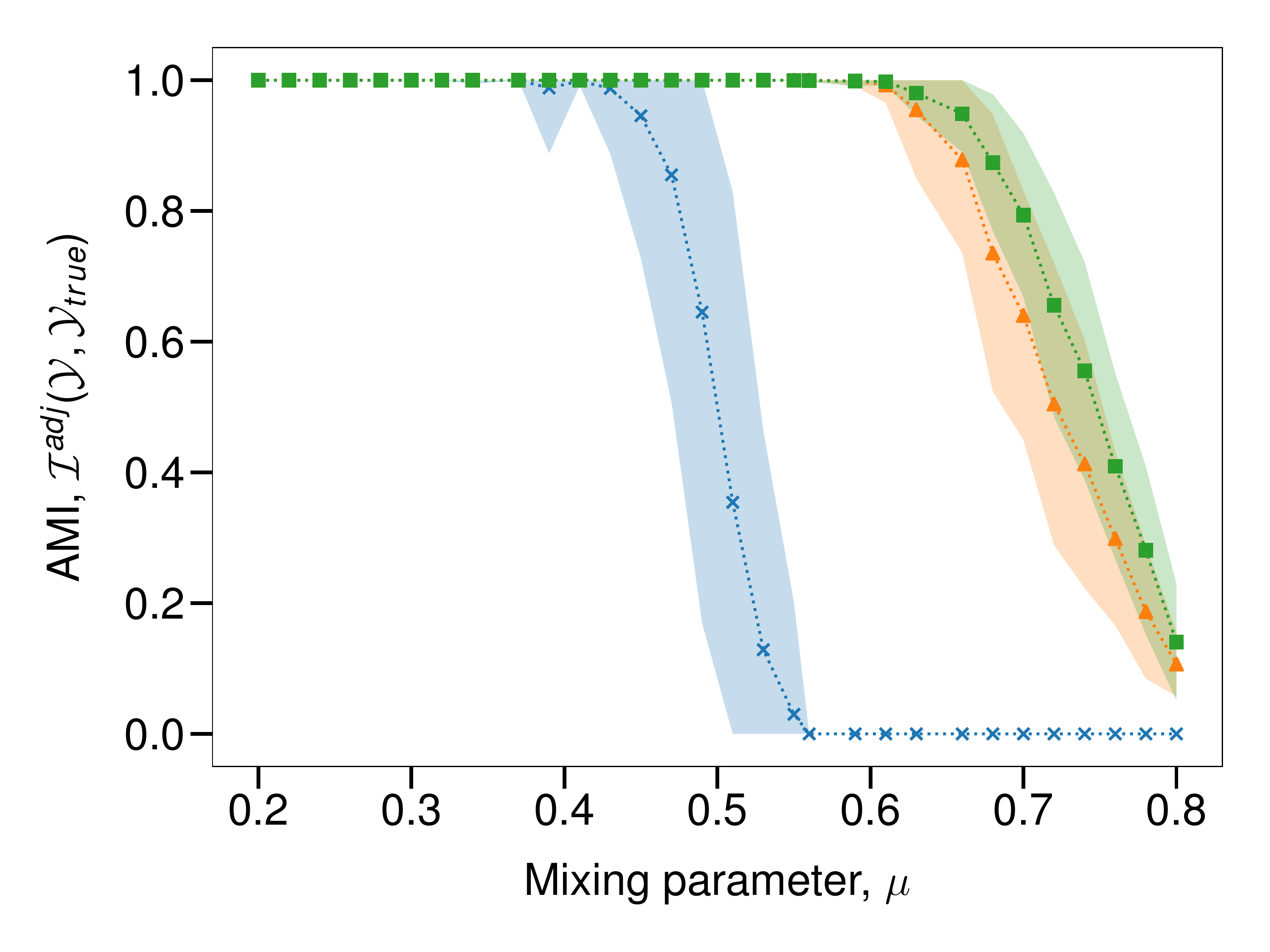}
        \caption{$\kavg = 50$, $N = 300$, $\rho = 0.167$.}
    \end{subfigure}
    \begin{subfigure}{0.32\textwidth}
        \centering
        \includegraphics[width=1.0\textwidth]{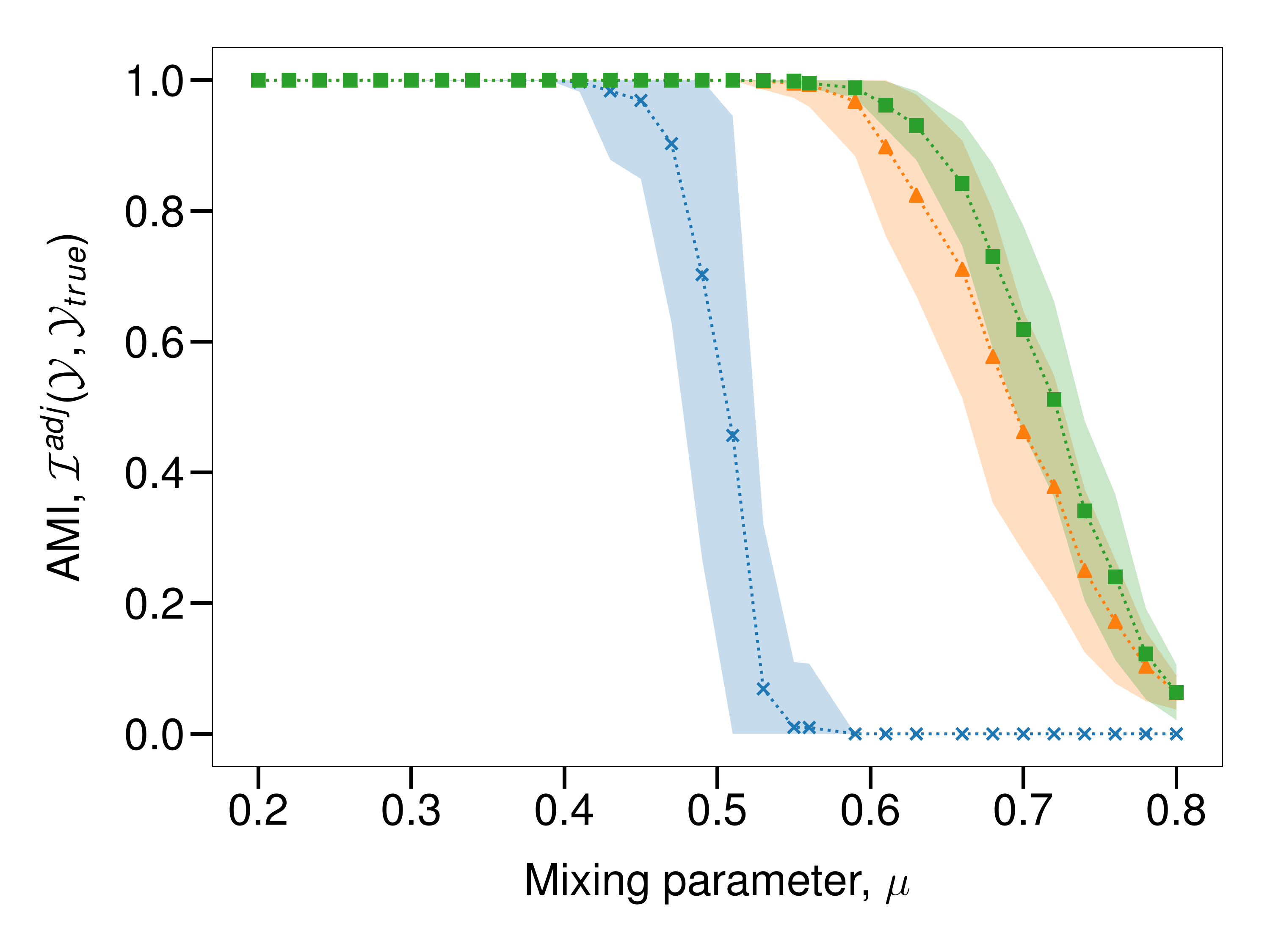}
        \caption{$\kavg = 50$, $N = 600$, $\rho = 0.083$.}
    \end{subfigure}
    \begin{subfigure}{0.32\textwidth}
        \centering
        \includegraphics[width=1.0\textwidth]{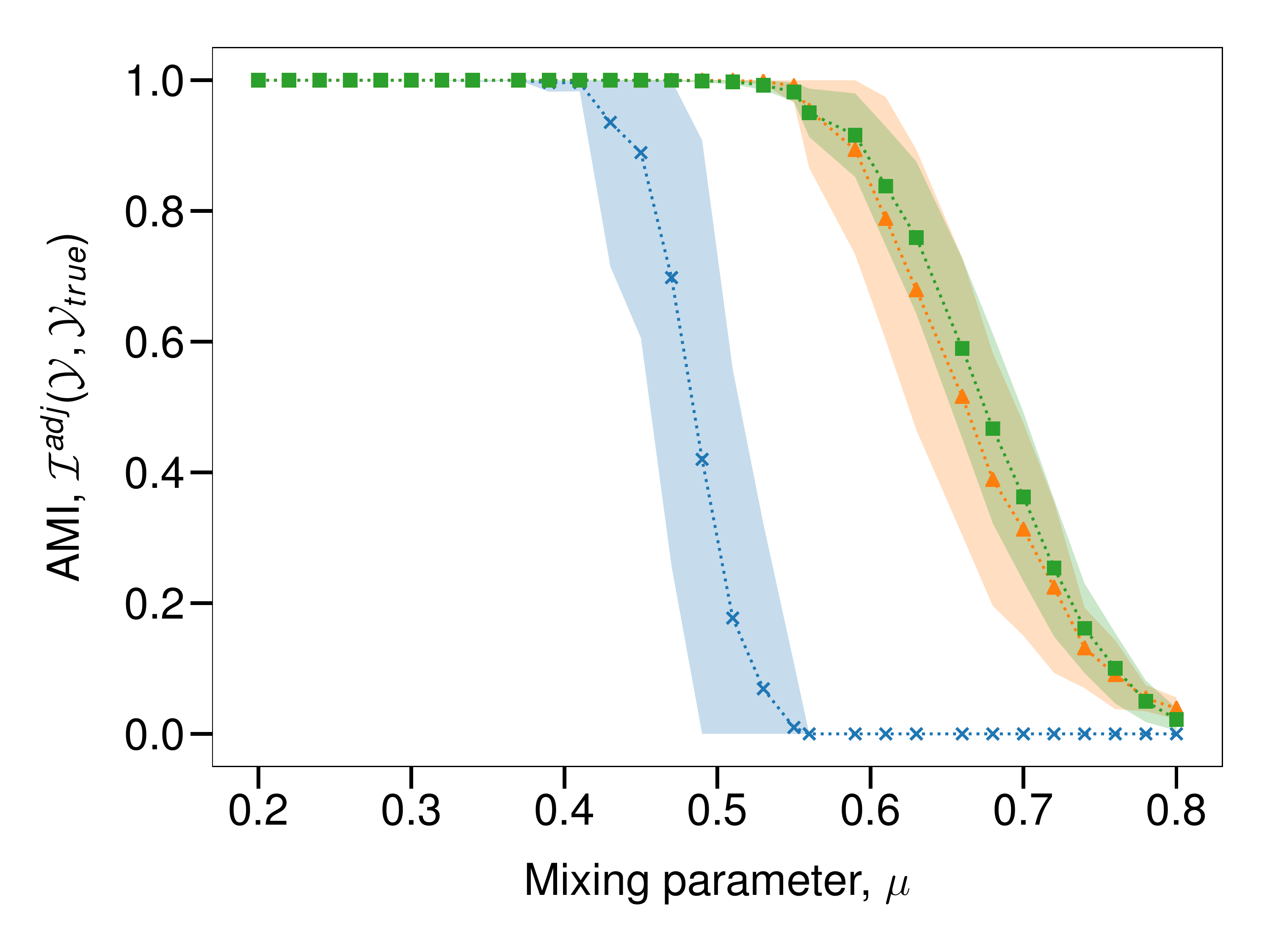}
        \caption{$\kavg = 50$, $N = 1200$, $\rho = 0.042$.}
    \end{subfigure}
    \caption{Comparison of Infomap, Synwalk and Walktrap on LFR benchmark networks with given average degree and network size. The lines and shaded areas show the mean and standard deviation of AMI as a function of the mixing parameter, obtained from $100$ different network realizations. Synwalk outperforms Infomap for sufficiently high mixing parameter and network density. Performance of Synwalk and Walktrap increases with higher average degrees while holding the network density fixed.}\label{fig:lfr_ami_vs_mu}
\end{figure*}

\subsection{Classification Analysis using Node Statistics}\label{sssec:node_statistics}
As we have seen in Section~\ref{sssec:ami_vs_mu}, the AMI performance of Synwalk and Walktrap on LFR networks transitions smoothly for varying values of the mixing parameter. To get deeper insights into the behavioral differences between Synwalk and Walktrap\footnote{Due to the absence of a smooth transition phase for Infomap we do not analyze its behavior in this section.} we analyze the different qualities of their predictions in these transition phases. 

For this purpose we analyze networks with varying network sizes and average degrees that are generated with parameter set A (see Table~\ref{tab:lfr}) while trying to keep the network density and the AMI (by appropriately setting the mixing parameter) constant (cp.\ main diagonal in Fig.~\ref{fig:lfr_ami_vs_mu}). We align any predicted partitions to their respective ground truth partitions using a greedy matching algorithm as described in Appendix~\ref{appx:cluster_matching}. We then consider the nodes in the intersection of the ground truth communities with their aligned counterparts as correctly classified nodes, whereas the residual set of nodes form the group of misclassified nodes.

Given this distinction, we can compare the degree distributions of correctly classified and misclassified nodes. In addition to the node degree $k_\alpha$, we consider the normalized local degree (NLD) $\hat{k}_\alpha$, which we define as the ratio between the node degree and the maximum number of possible links in its containing cluster:
\begin{align}
    \hat{k}_\alpha = \frac{k_\alpha}{\binom{|\dom{Y}_{m(\alpha)}|}{2}} \quad \text{for} \quad |\dom{Y}_{m(\alpha)}| \geq 2.
\end{align}
Note that in general the NLD of a node will be different when computed w.r.t.\ its ground truth community or its predicted community.

\begin{figure*}[t]
    \centering
    \begin{subfigure}{0.32\textwidth}
        \centering
        \includegraphics[width=1.0\textwidth]{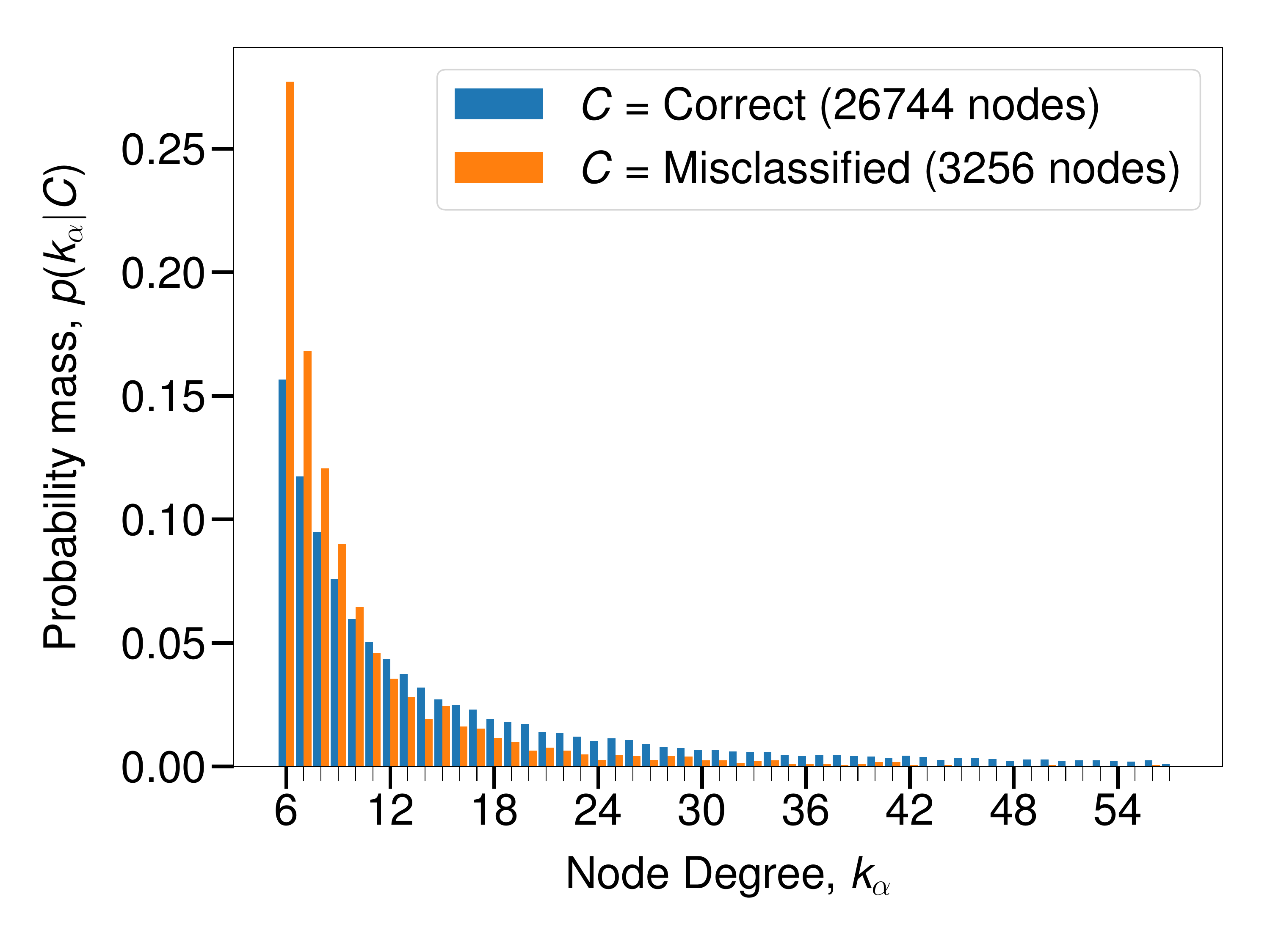}
        \caption{$\kavg = 15$, $n = 300$, $\mu = 0.45$.}
    \end{subfigure}
   \begin{subfigure}{0.32\textwidth}
        \centering
        \includegraphics[width=1.0\textwidth]{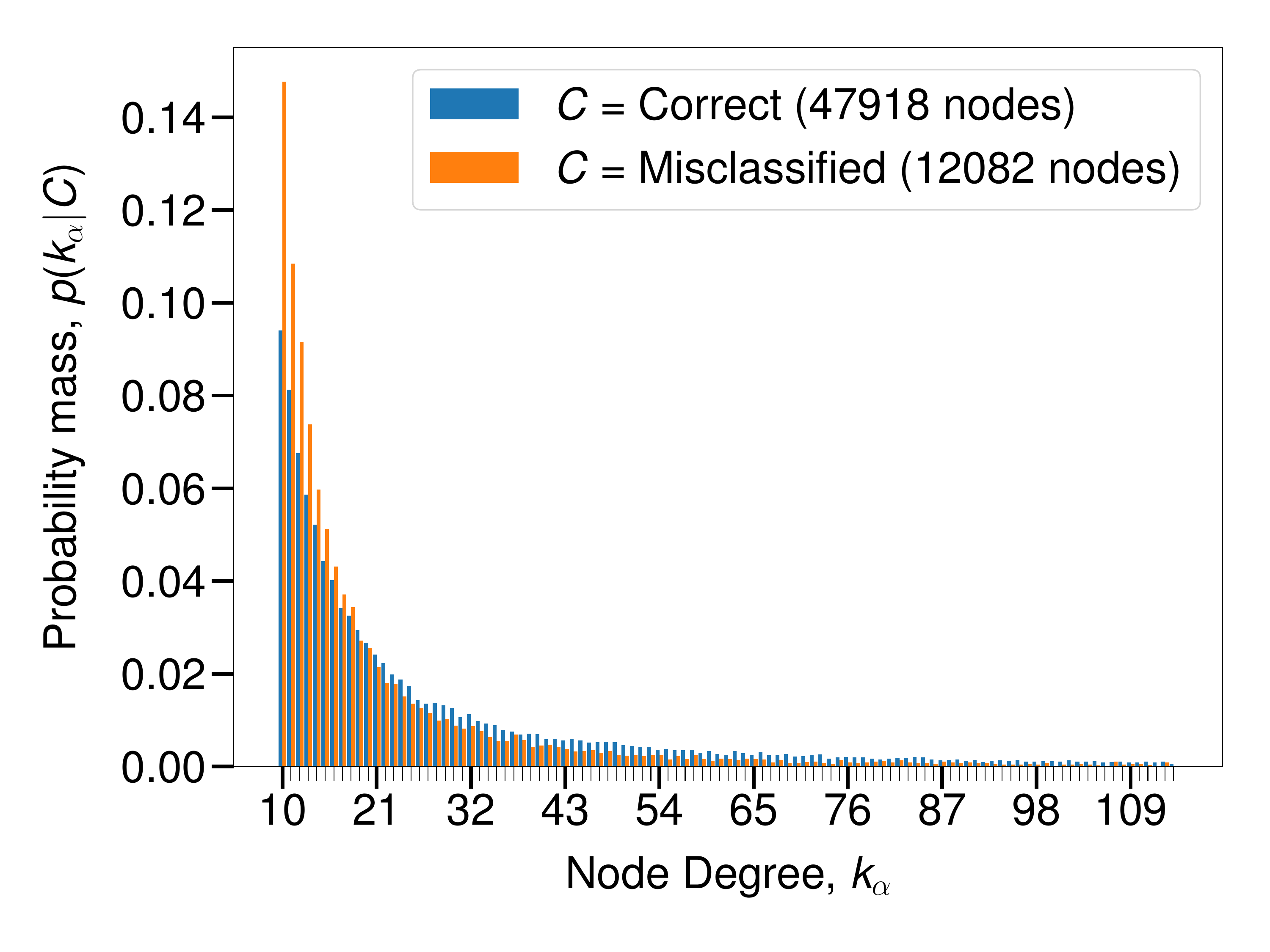}
        \caption{$\kavg = 25$, $n = 600$, $\mu = 0.55$.}
    \end{subfigure}
    \begin{subfigure}{0.32\textwidth}
        \centering
        \includegraphics[width=1.0\textwidth]{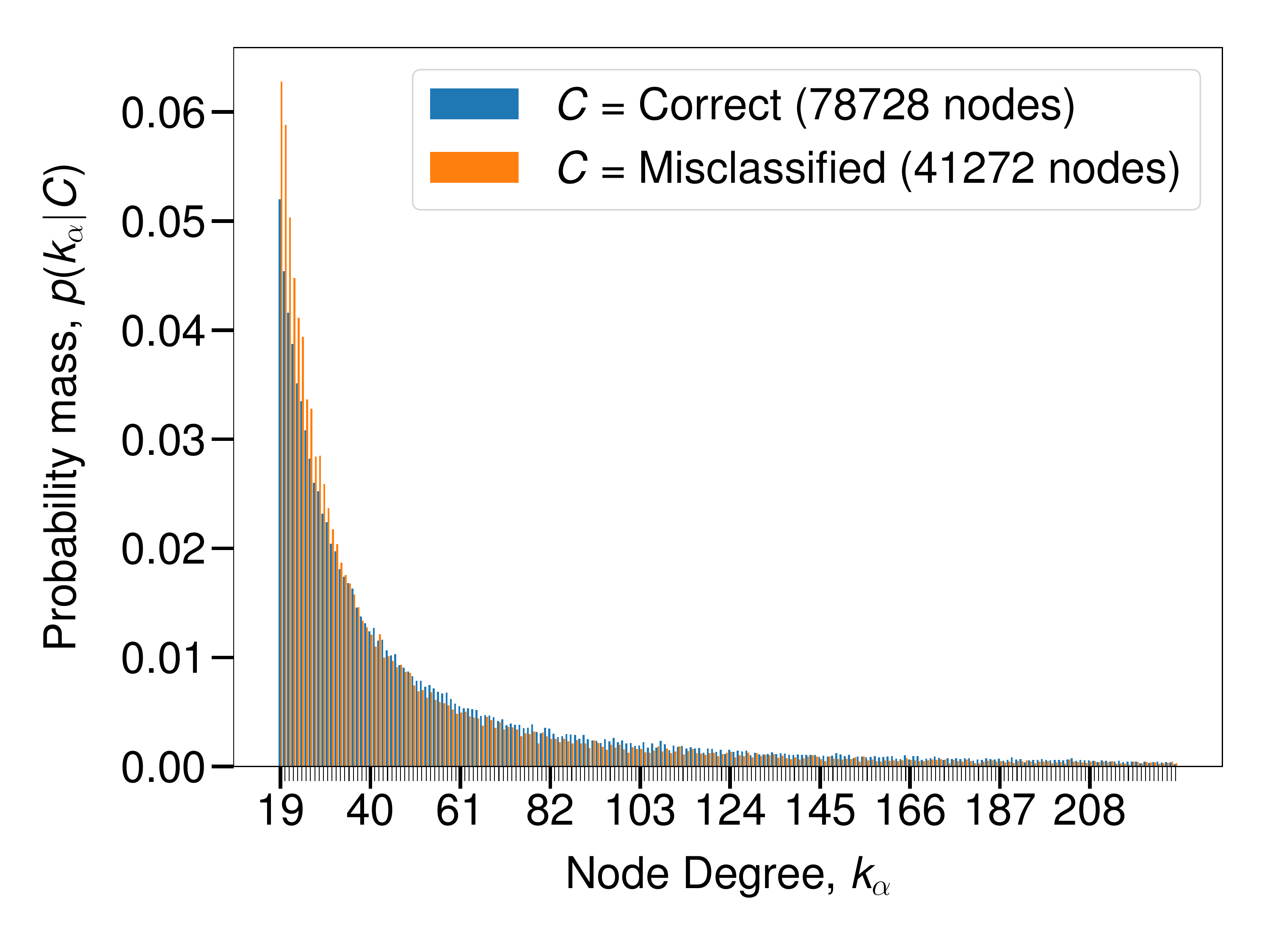}
        \caption{$\kavg = 50$, $n = 1200$, $\mu = 0.63$.}
    \end{subfigure}
    \\
    \begin{subfigure}{0.32\textwidth}
        \centering
        \includegraphics[width=1.0\textwidth]{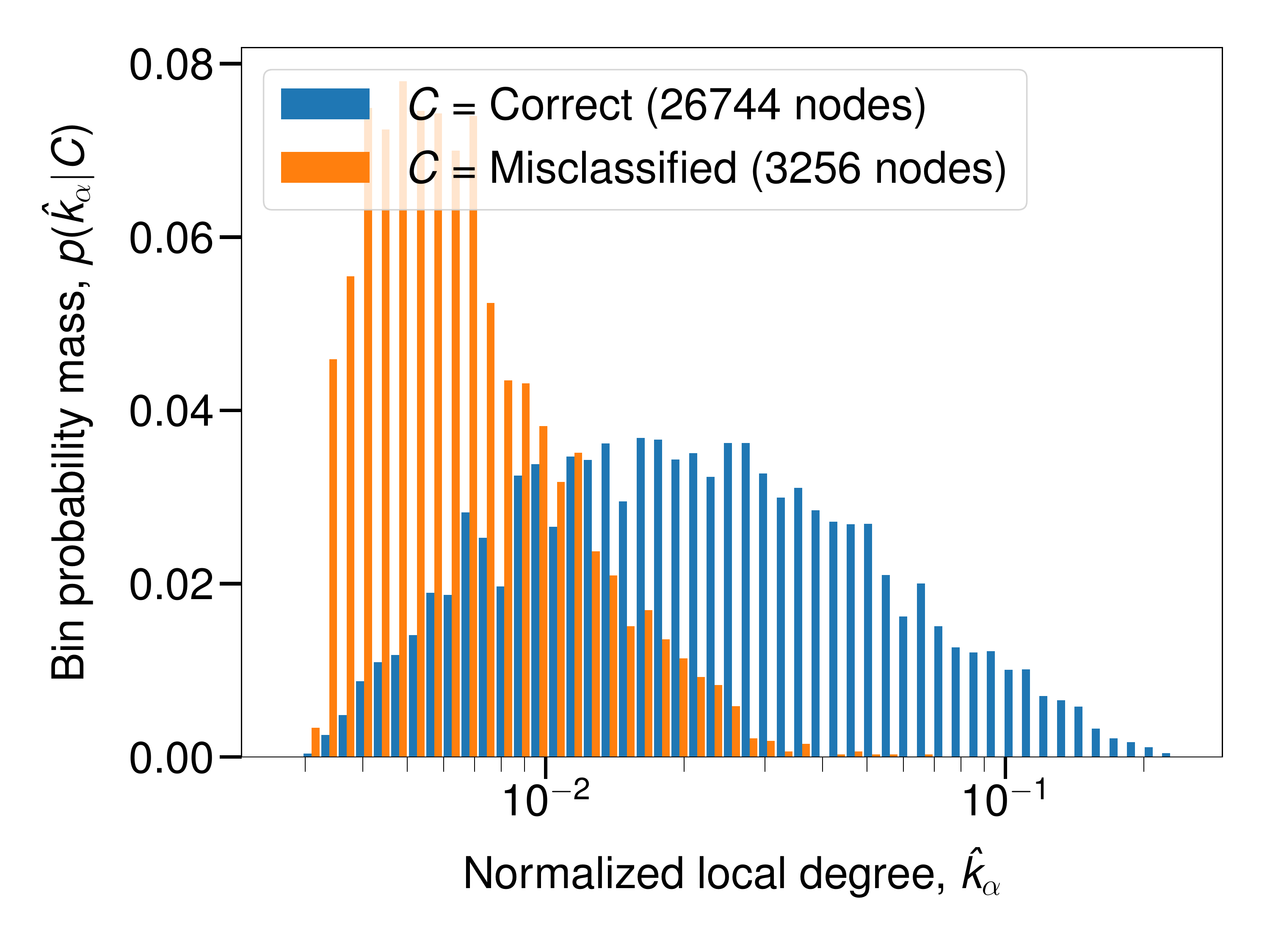}
        \caption{$\kavg = 15$, $n = 300$, $\mu = 0.45$.}
    \end{subfigure}
    \begin{subfigure}{0.32\textwidth}
        \centering
        \includegraphics[width=1.0\textwidth]{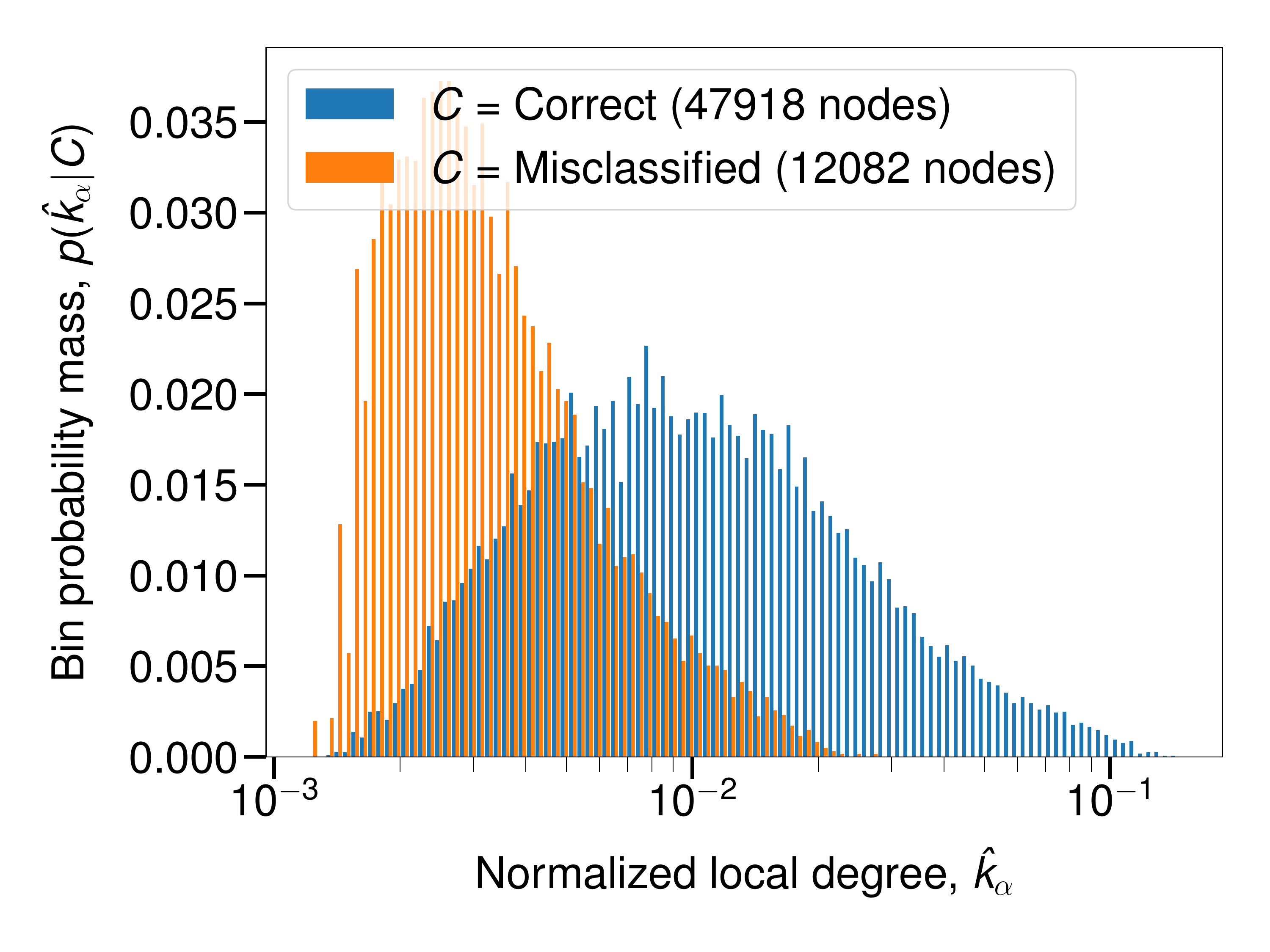}
        \caption{$\kavg = 25$, $n = 600$, $\mu = 0.55$.}
    \end{subfigure}
    \begin{subfigure}{0.32\textwidth}
        \centering
        \includegraphics[width=1.0\textwidth]{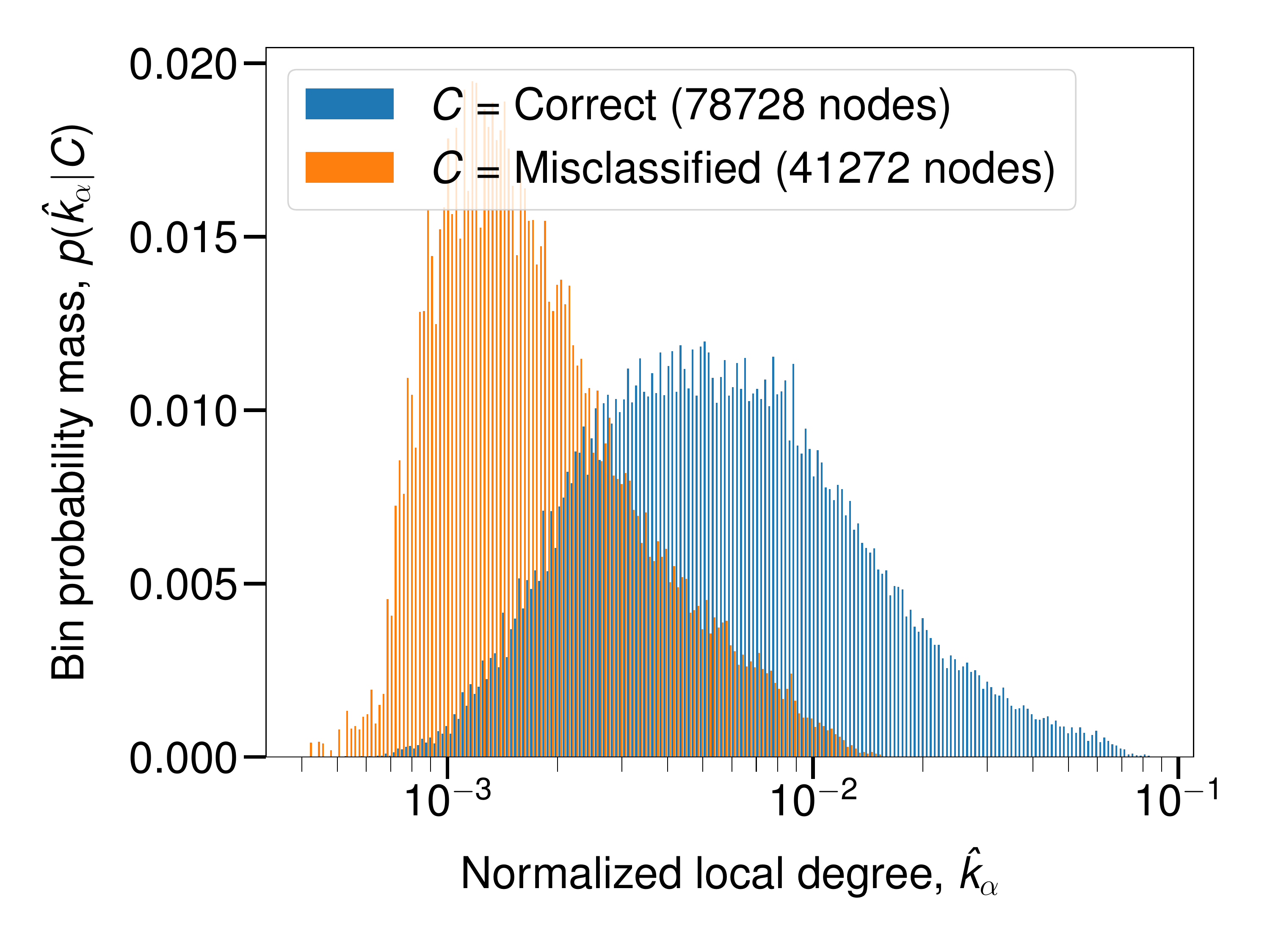}
        \caption{$\kavg = 50$, $n = 1200$, $\mu = 0.63$.}
    \end{subfigure}
    \\
    \begin{subfigure}{0.32\textwidth}
        \centering
        \includegraphics[width=1.0\textwidth]{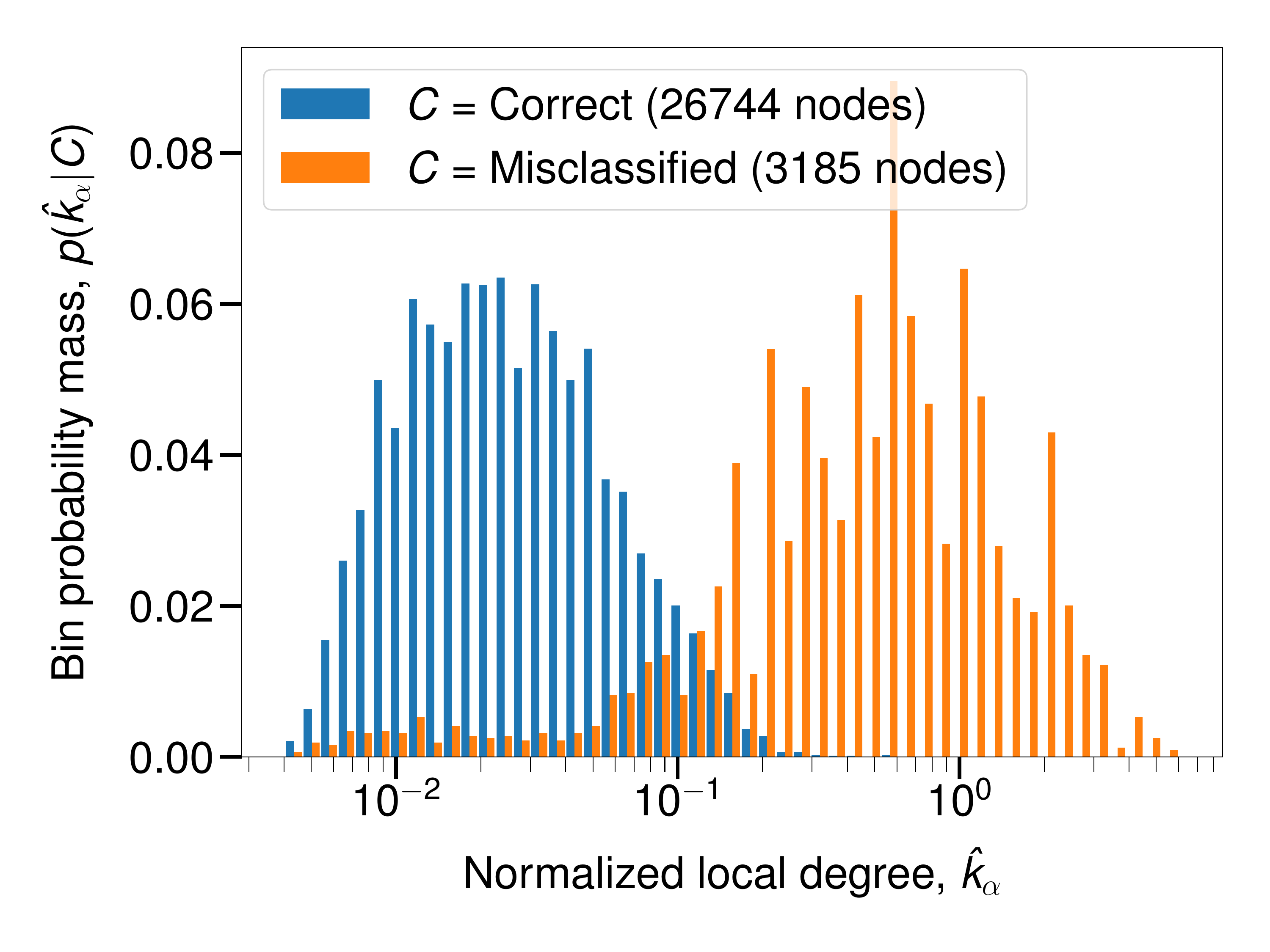}
        \caption{$\kavg = 15$, $n = 300$, $\mu = 0.45$.}
    \end{subfigure}
    \begin{subfigure}{0.32\textwidth}
        \centering
        \includegraphics[width=1.0\textwidth]{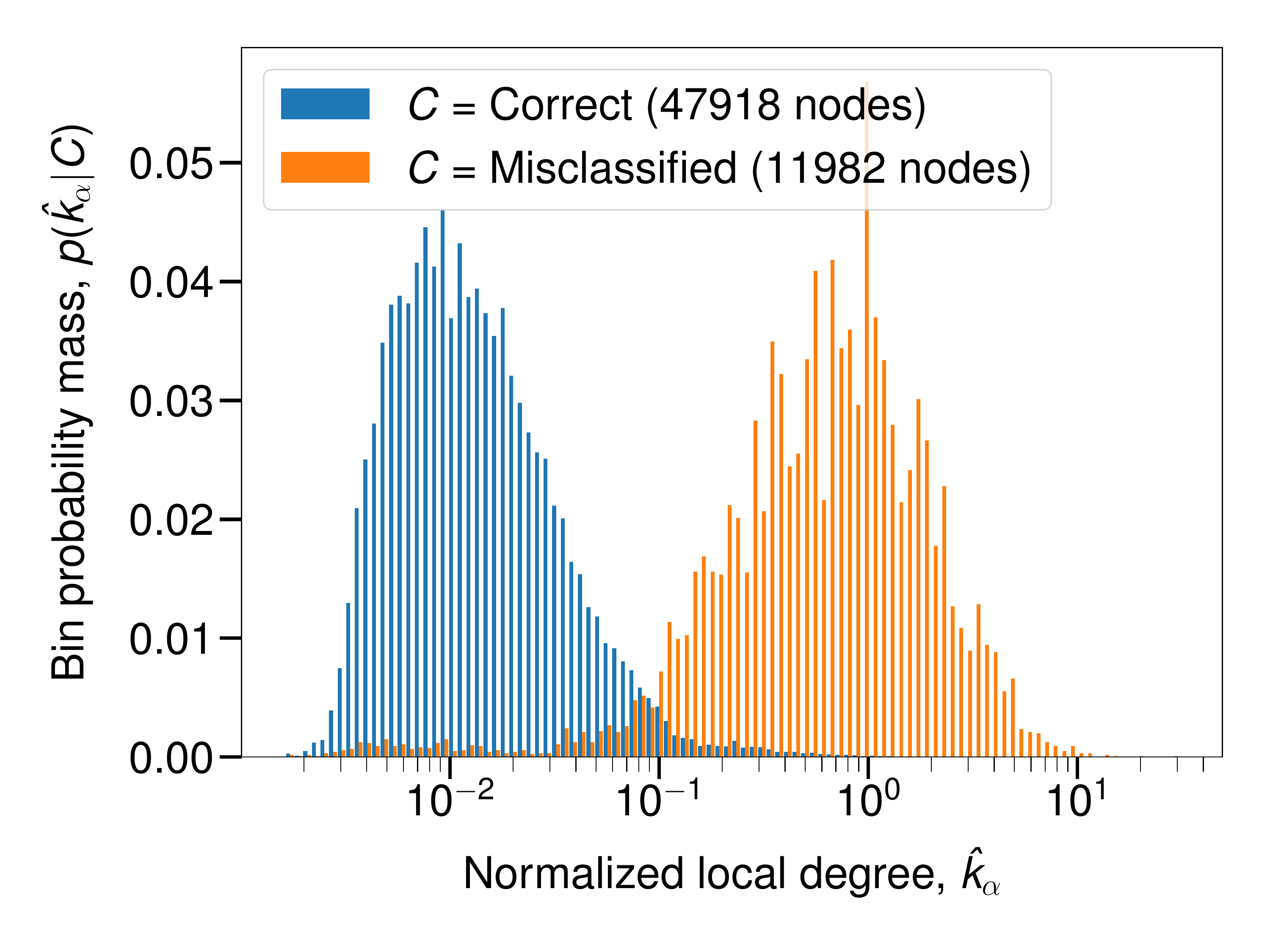}
        \caption{$\kavg = 25$, $n = 600$, $\mu = 0.55$.}
    \end{subfigure}
    \begin{subfigure}{0.32\textwidth}
        \centering
        \includegraphics[width=1.0\textwidth]{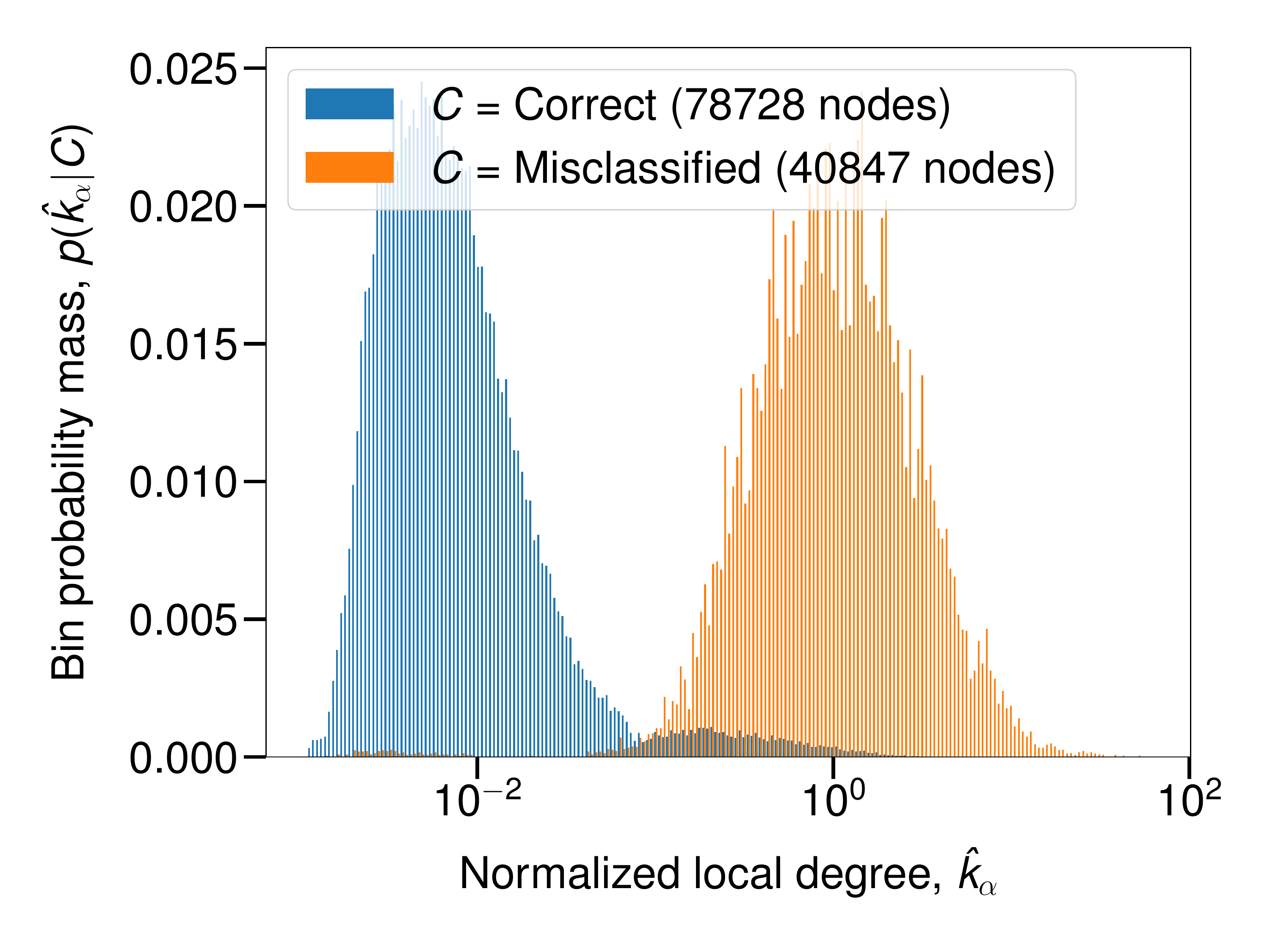}
        \caption{$\kavg = 50$, $n = 1200$, $\mu = 0.63$.}
    \end{subfigure}
    \caption{Degree distributions for correctly classified and misclassified nodes in Synwalk results, obtained from $100$ different LFR networks with common average degree, network size and mixing parameter. The top row shows the distributions of the node degrees, the middle row shows the distribution of the normalized local degrees w.r.t.\ the ground truth communities and the bottom row shows the distributions of the normalized local degrees w.r.t.\ the predicted communities. Synwalk tends to misclassify nodes with low normalized local degree (w.r.t.\ the ground truth communities), whereas the influence of the absolute node degree is negligible. The statistics of the normalized local degrees w.r.t.\ predicted communities indicate that misclassified nodes are assigned to additional, small communities.}\label{fig:nodestat_synwalk}
\end{figure*}

\begin{figure*}[t]
    \centering
    \begin{subfigure}{0.32\textwidth}
        \centering
        \includegraphics[width=1.0\textwidth]{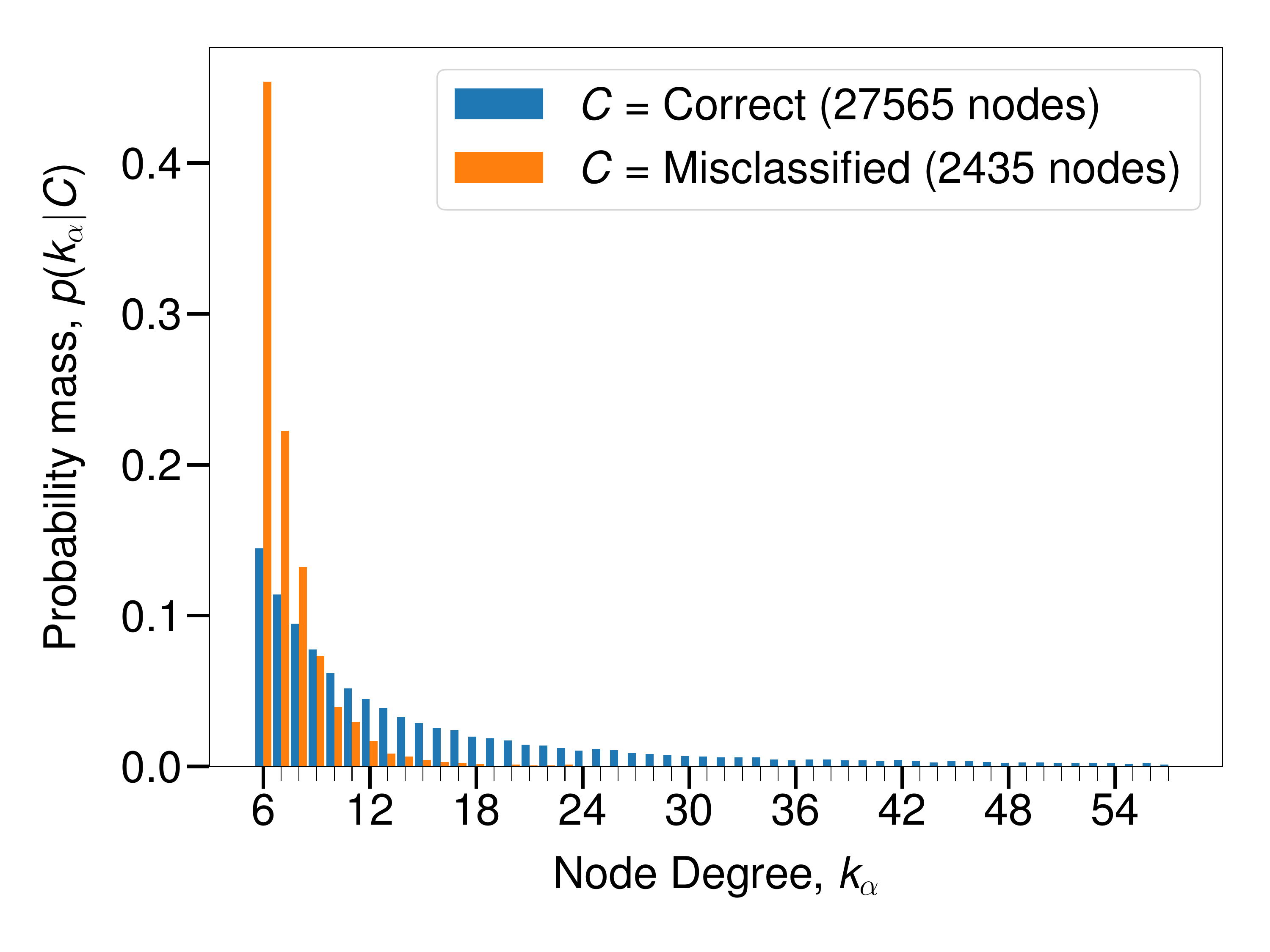}
        \caption{$\kavg = 15$, $n = 300$, $\mu = 0.45$.}
    \end{subfigure}
   \begin{subfigure}{0.32\textwidth}
        \centering
        \includegraphics[width=1.0\textwidth]{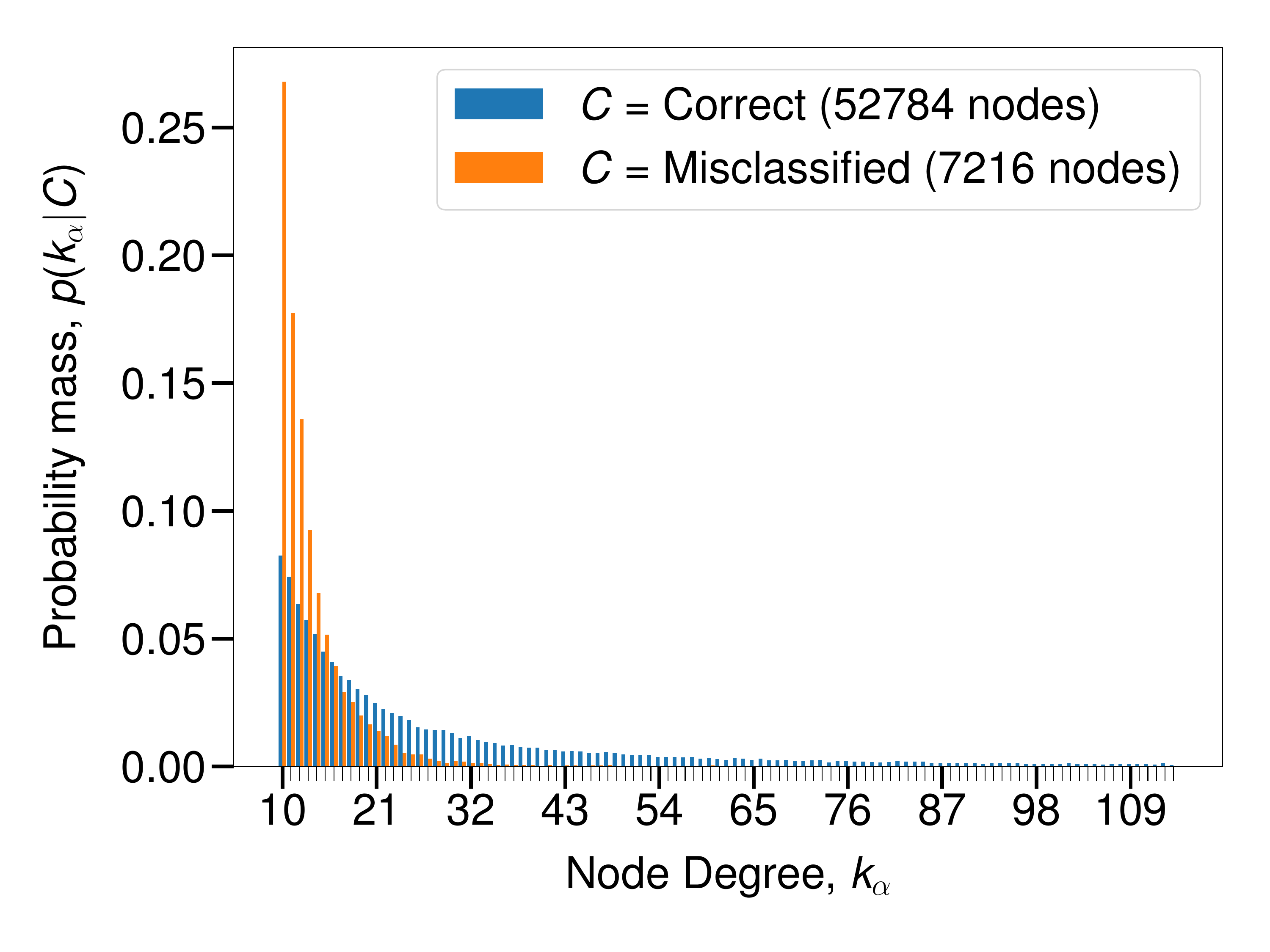}
        \caption{$\kavg = 25$, $n = 600$, $\mu = 0.55$.}
    \end{subfigure}
    \begin{subfigure}{0.32\textwidth}
        \centering
        \includegraphics[width=1.0\textwidth]{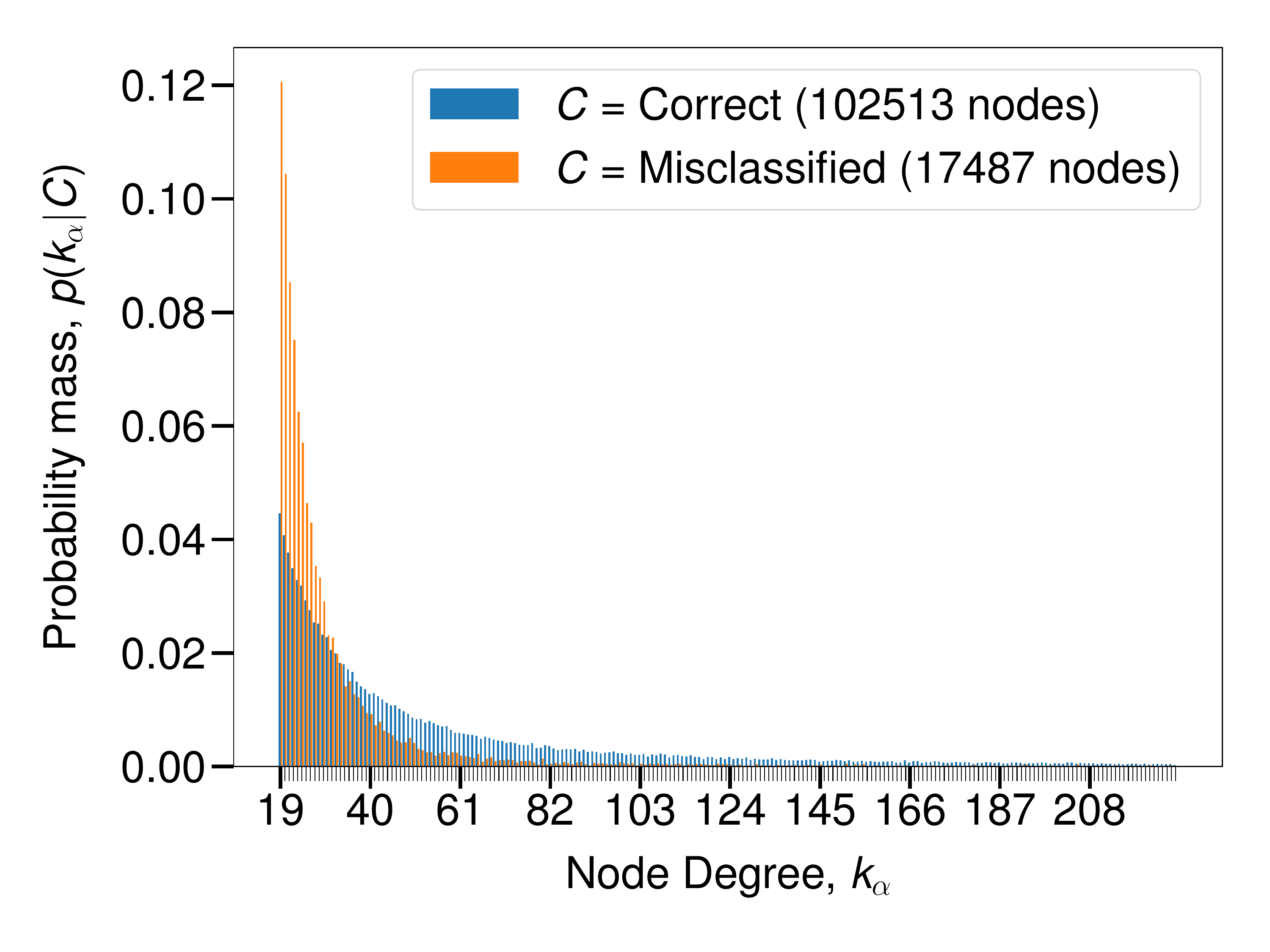}
        \caption{$\kavg = 50$, $n = 1200$, $\mu = 0.63$.}
    \end{subfigure}
    \\
    \begin{subfigure}{0.32\textwidth}
        \centering
        \includegraphics[width=1.0\textwidth]{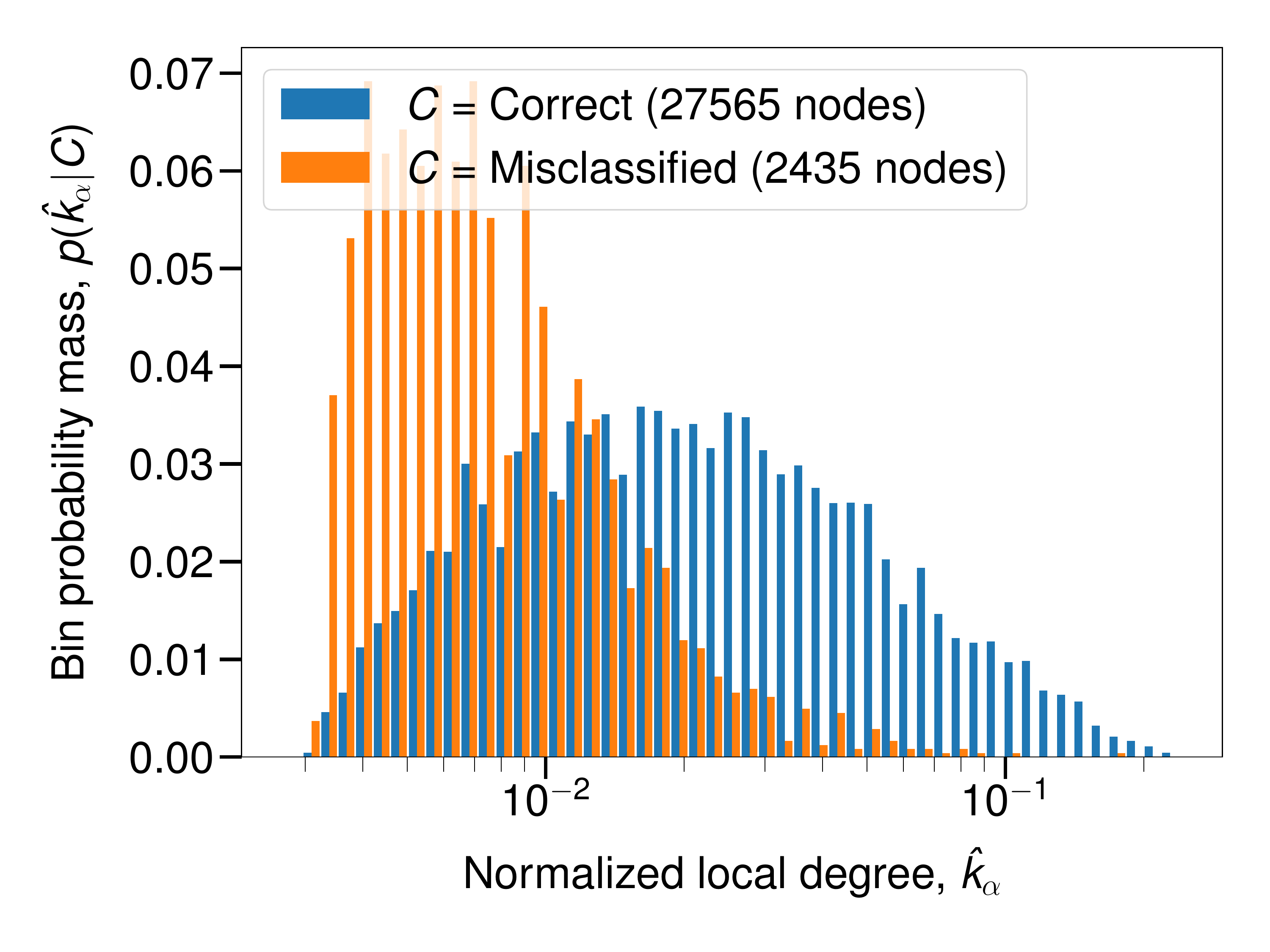}
        \caption{$\kavg = 15$, $n = 300$, $\mu = 0.45$.}
    \end{subfigure}
    \begin{subfigure}{0.32\textwidth}
        \centering
        \includegraphics[width=1.0\textwidth]{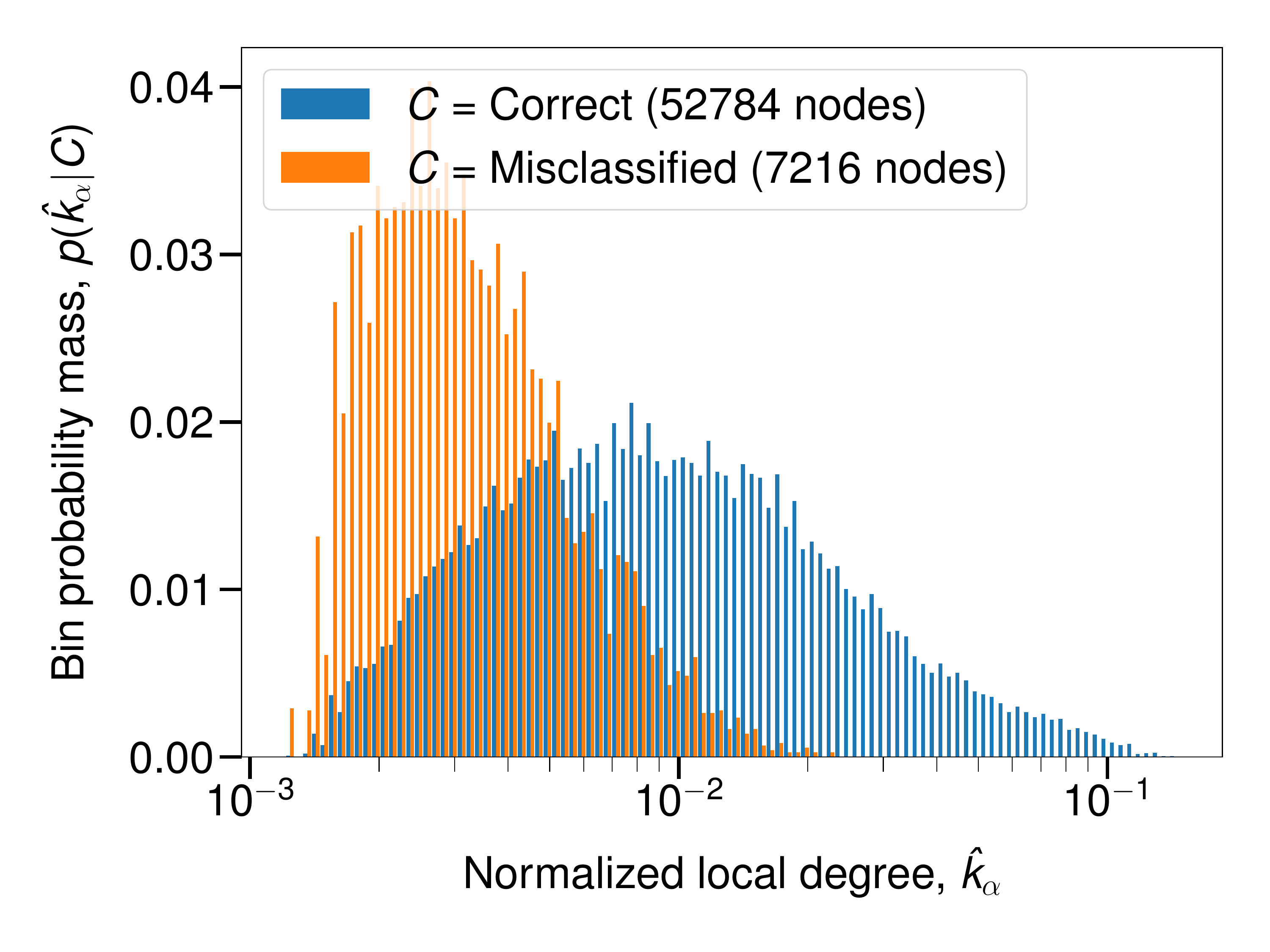}
        \caption{$\kavg = 25$, $n = 600$, $\mu = 0.55$.}
    \end{subfigure}
    \begin{subfigure}{0.32\textwidth}
        \centering
        \includegraphics[width=1.0\textwidth]{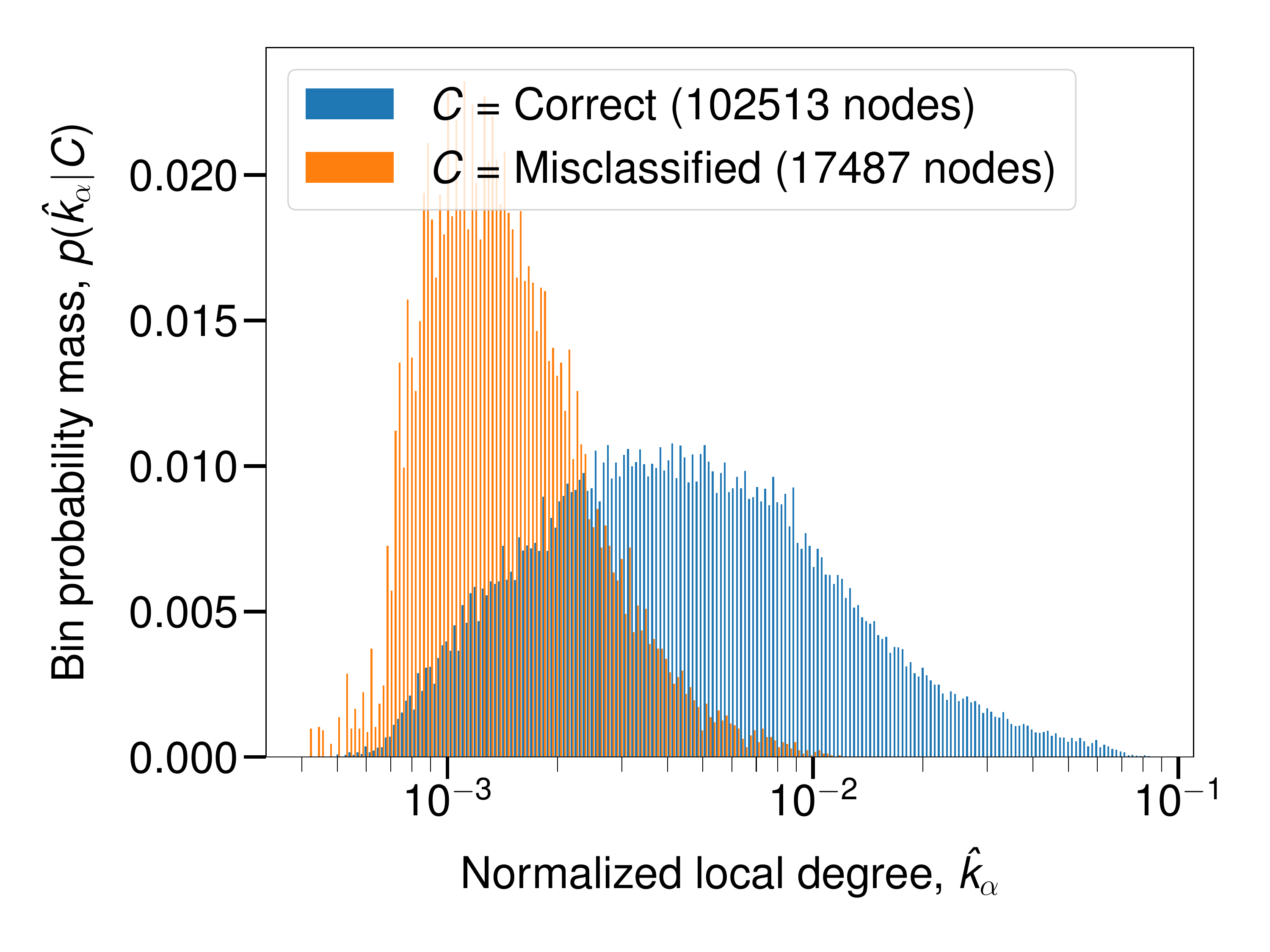}
        \caption{$\kavg = 50$, $n = 1200$, $\mu = 0.63$.}
    \end{subfigure}
    \\
    \begin{subfigure}{0.32\textwidth}
        \centering
        \includegraphics[width=1.0\textwidth]{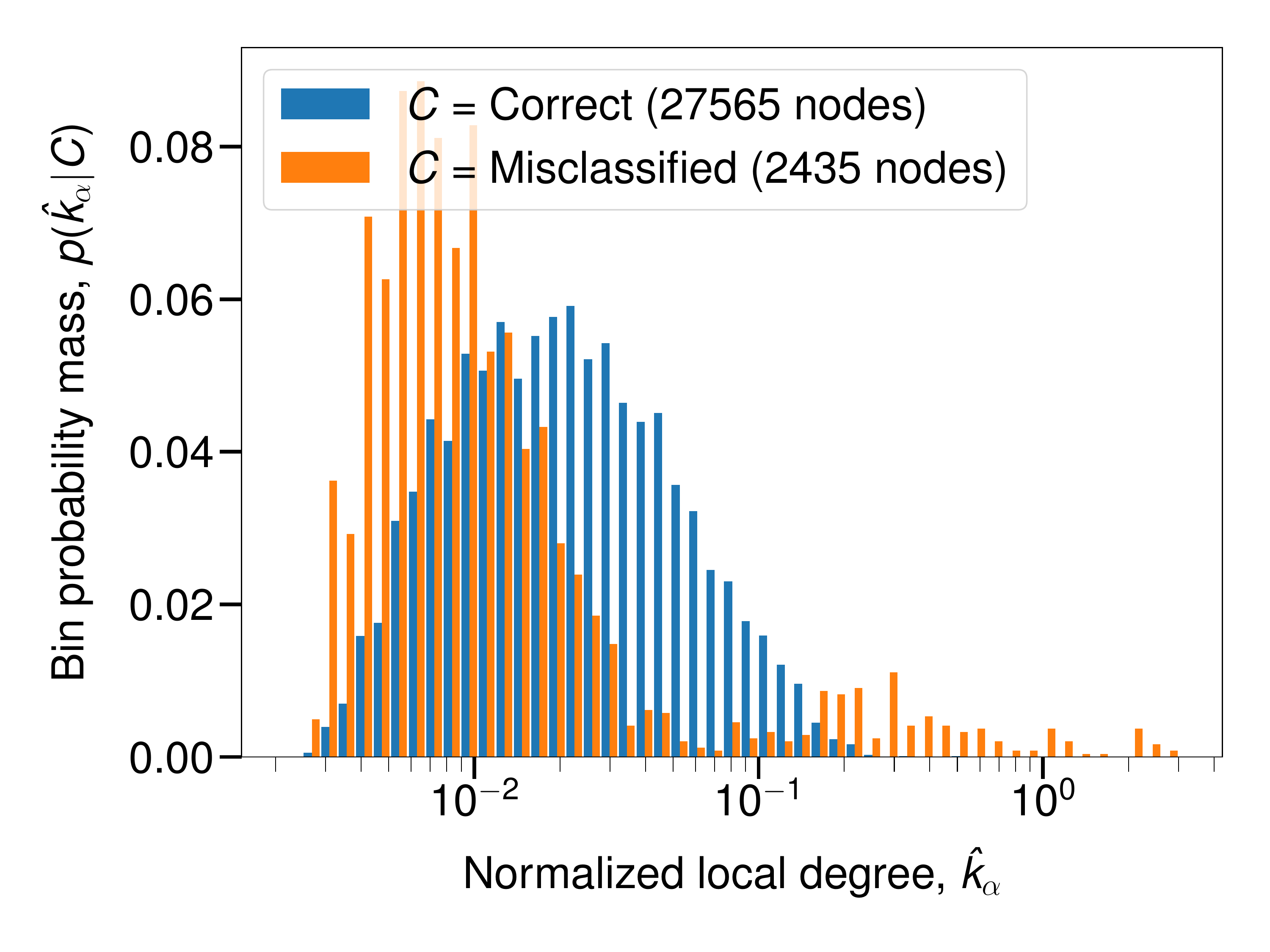}
        \caption{$\kavg = 15$, $n = 300$, $\mu = 0.45$.}
    \end{subfigure}
    \begin{subfigure}{0.32\textwidth}
        \centering
        \includegraphics[width=1.0\textwidth]{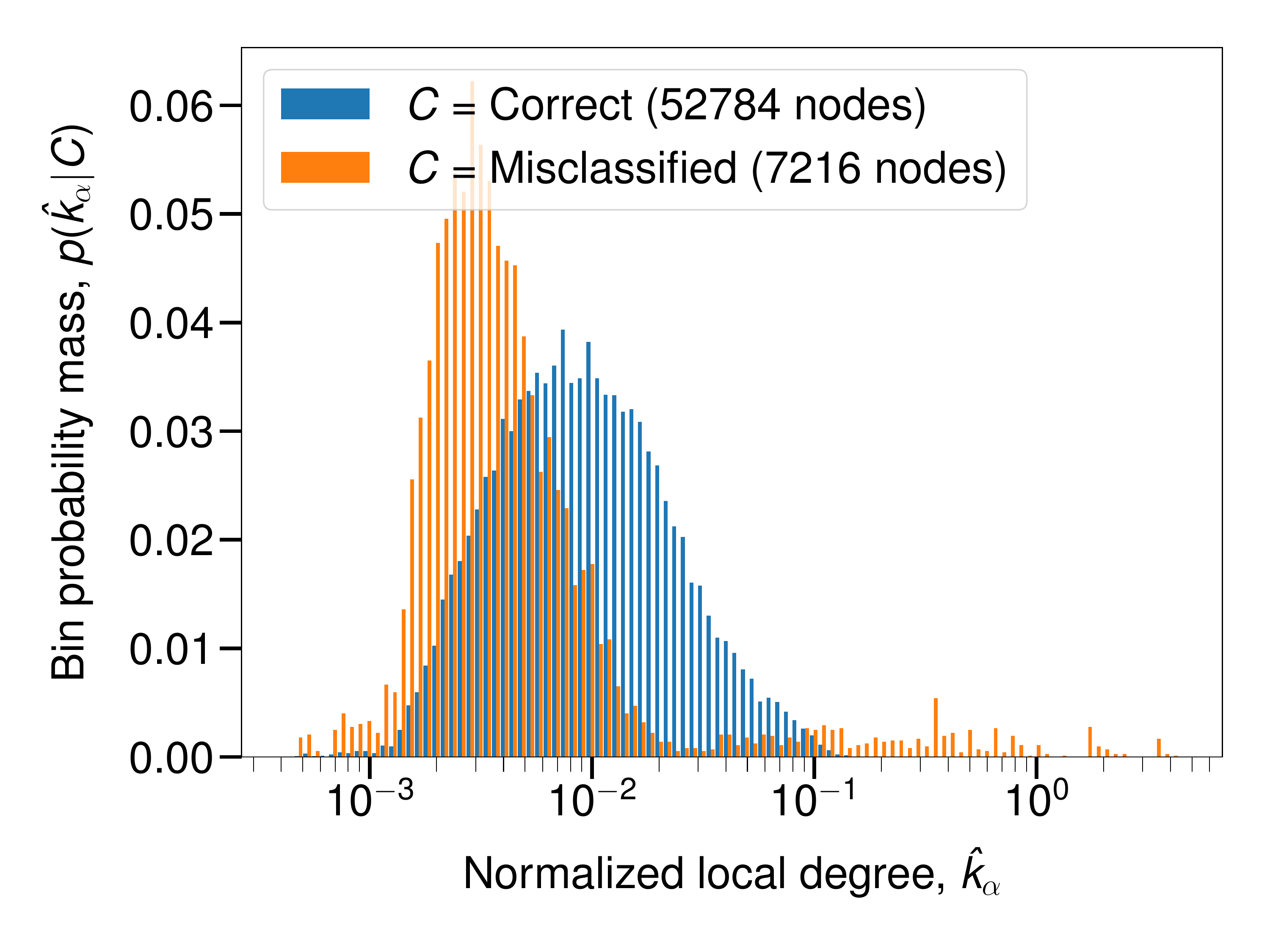}
        \caption{$\kavg = 25$, $n = 600$, $\mu = 0.55$.}
    \end{subfigure}
    \begin{subfigure}{0.32\textwidth}
        \centering
        \includegraphics[width=1.0\textwidth]{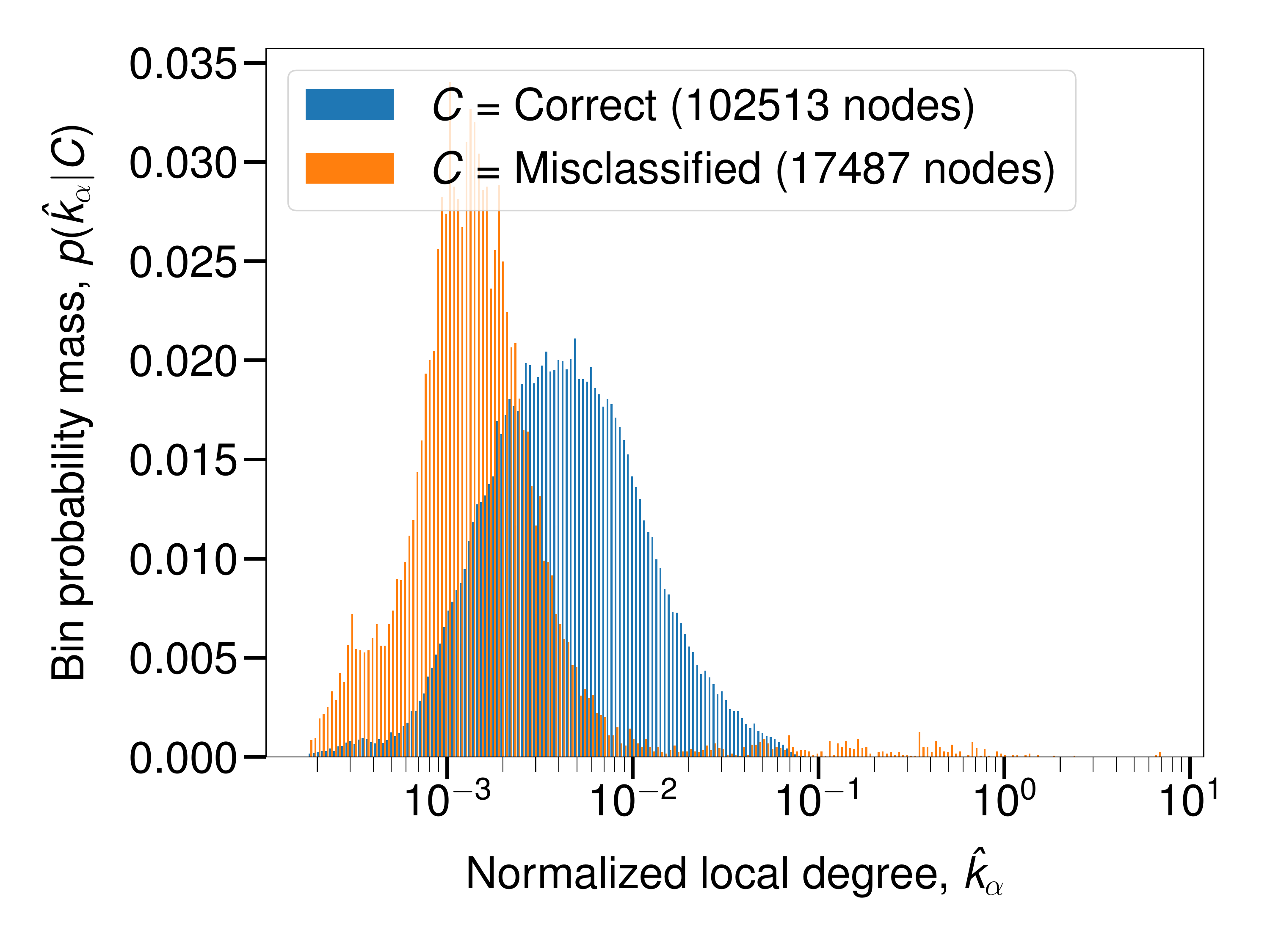}
        \caption{$\kavg = 50$, $n = 1200$, $\mu = 0.63$.}
    \end{subfigure}
    \caption{Degree distributions for correctly classified and misclassified nodes in Walktrap results, obtained from 100 different LFR networks with common average degree, network size and mixing parameter. The top row shows the distributions of the node degrees, the middle row shows the distribution of the normalized local degrees w.r.t.\ the ground truth communities and the bottom row shows the distributions of the normalized local degrees w.r.t.\ the predicted communities. Walktrap tends to misclassify nodes with low normalized local degree (w.r.t.\ the ground truth communities) and/or low absolute node degree. The statistics of the normalized local degrees w.r.t.\ predicted communities resemble the ones w.r.t.\ the ground truth communities.}\label{fig:nodestat_walktrap}
\end{figure*}

The degree and NLD distributions are visualized in Figs.~\ref{fig:nodestat_synwalk} and ~\ref{fig:nodestat_walktrap}. Although in Figs.~\ref{fig:lfr_ami_vs_mu} and~\ref{fig:lfr_ami_vs_n} we saw an apparently strong dependence of the AMI performance on the average degree for Synwalk and Walktrap, for Synwalk a significant dependence is not visible in the class distributions (Fig.~\ref{fig:nodestat_synwalk}, top row). Nevertheless, the distributions of the NLDs w.r.t.\ the ground truth communities (Fig.~\ref{fig:nodestat_synwalk}, middle row) reveal that misclassified nodes are more likely to exhibit a low NLD than correctly classified ones.

In contrast, although the latter observation holds for Walktrap as well (Fig.~\ref{fig:nodestat_walktrap}, middle row), the node degrees of its misclassified nodes appear to be smaller than those of correctly classified nodes (Fig.~\ref{fig:nodestat_walktrap}, top row). This behavior appears plausible when considering the mechanics of Walktrap: nodes are grouped based on cluster/node distances that are computed by considering random walks of a specified length $T$ (in our setup $T = 4$). For low values of $T$, low-degree nodes are rarely visited, resulting in frequent ties in distance calculations, whereas in the limit of $T\rightarrow\infty$ the distances are determined by the proportionality of the stationary distribution to the node degrees. Hence, it is necessary to make a trade-off regarding the random walk length $T$, which is typically chosen heuristically.

These differing properties of Synwalk and Walktrap manifest in contrasting detection behaviors on the LFR networks (cf. Fig.~\ref{fig:lfr_sample}). Synwalk identifies smaller communities with greater accuracy than larger ones (dependence on the normalized local degree), i.e., the majority of misclassified nodes occur in the largest communities. While Walktrap follows this trend, misclassified nodes occur in smaller communities with increasing frequency (stronger dependence on node degree).

Another interesting difference appears when inspecting to which clusters misclassified nodes are assigned. Synwalk tends to place misclassified nodes in additional (i.e., clusters with no matching ground truth community), small clusters. Such behavior is indicated by the NLD distributions w.r.t.\ predicted communities as well, where misclassified nodes exhibit a significantly higher NLD than correctly classified ones (cp. Fig.~\ref{fig:nodestat_synwalk}, bottom row). This results in detected ground truth communities being \enquote{pure}, i.e., they do not contain nodes from other ground truth communities.
In contrast, Walktrap mainly confuses node memberships within clusters that do have a matching ground truth community. Again, these observations are supported by the NLD distributions w.r.t predicted communities as well, where misclassified nodes exhibit a similar NLD to correctly classified ones (cp. Fig.~\ref{fig:nodestat_walktrap}, bottom row). 

These behavioral difference between Synwalk and Walktrap are visible in the AMI performance as well: whereas both methods misclassify approximately the same amount of nodes in the sample network in Fig.~\ref{fig:lfr_sample}, Synwalk achieves a significantly higher AMI value.

\begin{figure*}[t]
    \centering
    \begin{subfigure}{0.45\textwidth}
        \centering
        \includegraphics[width=1.0\textwidth]{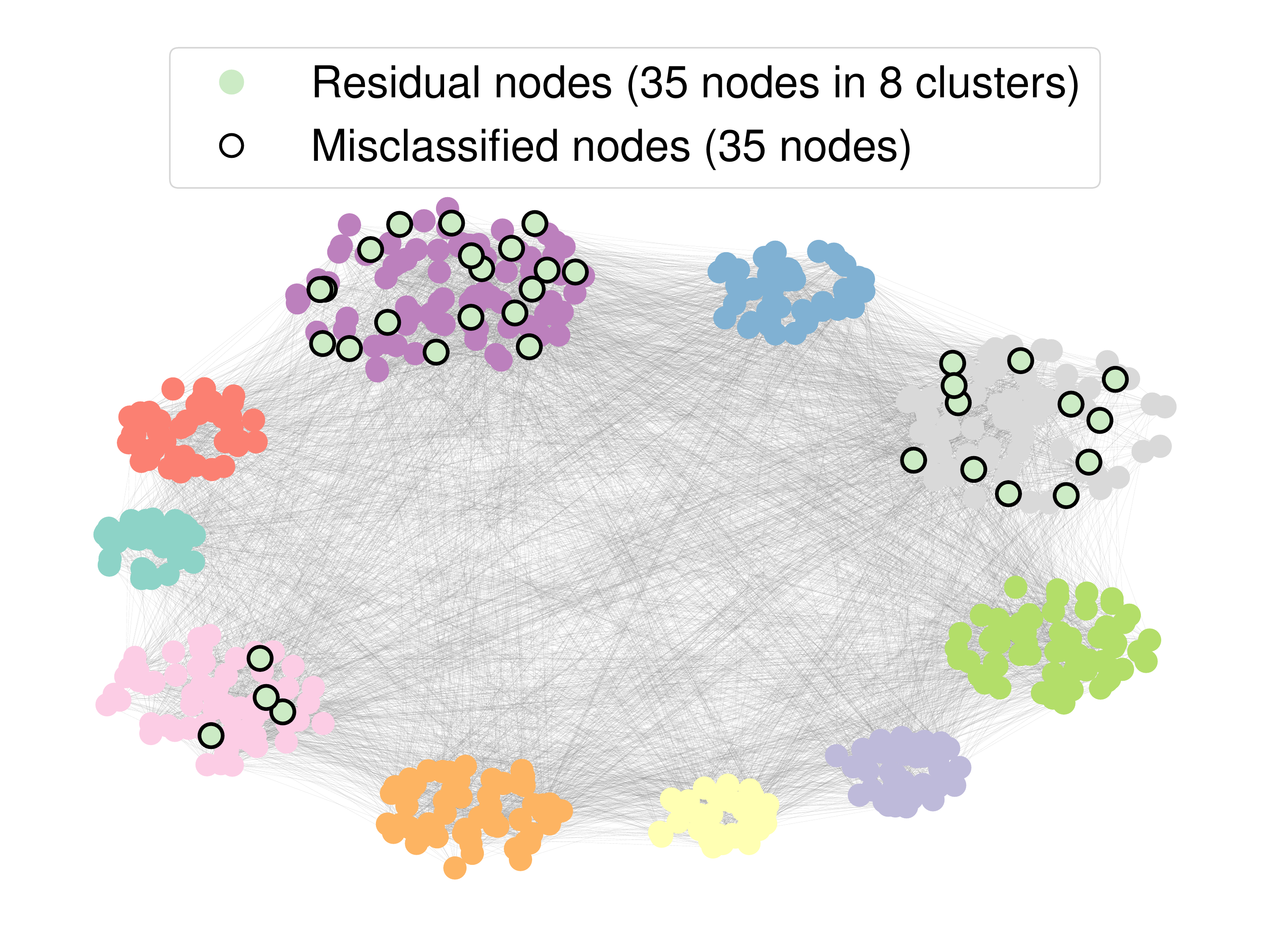}
        \caption{Synwalk, $\ami{\dom{Y}}{\dom{Y}_{true}} = 0.94$.}
    \end{subfigure}
    \begin{subfigure}{0.45\textwidth}
        \centering
        \includegraphics[width=1.0\textwidth]{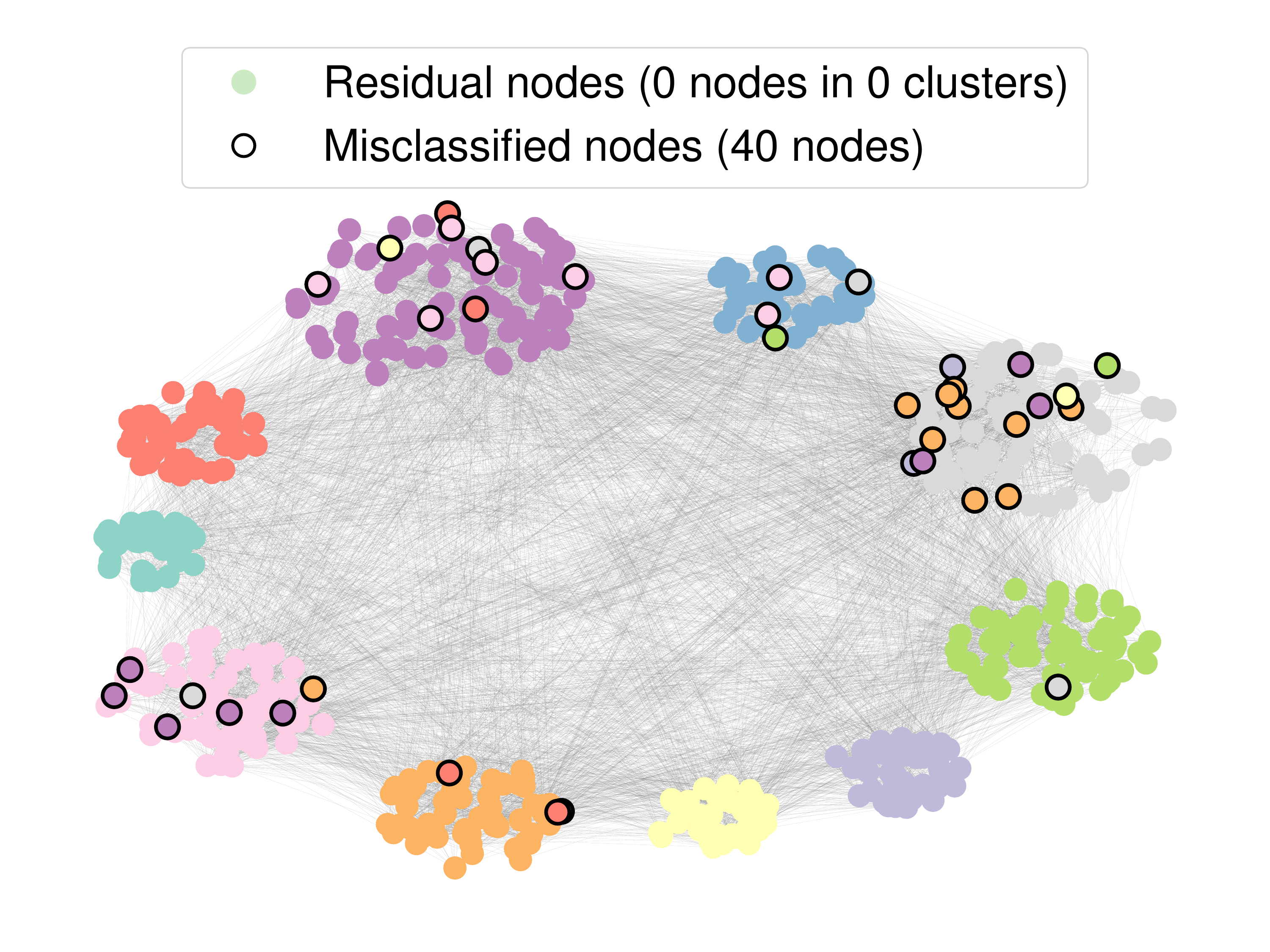}
        \caption{Walktrap, $\ami{\dom{Y}}{\dom{Y}_{true}} = 0.87$.}
    \end{subfigure}
    \caption{A sample LFR graph with communities as detected by Synwalk and Walktrap. The network has $N=600$ nodes, an average degree of $\kavg = 25$ and a mixing parameter of $\mu = 0.55$. Nodes are grouped according to their ground truth communities and share the same color if they belong to the same detected cluster. Misclassified nodes are highlighted with a black border. We aggregated nodes from predicted clusters that have no matching ground truth community into a single residual cluster. Synwalk places misclassified nodes into additional clusters, whereas Walktrap confuses node memberships between ground truth communities.}\label{fig:lfr_sample}
\end{figure*}

The above insights make apparent two advantages of our method. First, whereas there is no general answer on how to determine the random walk length $T$ for Walktrap, Synwalk does not require the tuning of any hyper-parameter. Secondly, consider a network with many small communities and low average degree. Following our earlier observations, Walktrap will have many misclassified nodes due to the low average degree. In contrast, the assumption of small communities implies a reasonably high normalized local degree for the majority of nodes and thus suggests a better performance of Synwalk when compared to Walktrap. Indeed, the results in Fig.~\ref{fig:lfr_set_B} support this intuition. The benchmark networks underlying these results were generated with parameter set B (see Table~\ref{tab:lfr}), effectively lowering the average community size for a given average degree compared to networks generated with parameter set A.

\begin{figure*}[t]
    \centering
    \begin{subfigure}{0.32\textwidth}
        \centering
        \includegraphics[width=1.0\textwidth]{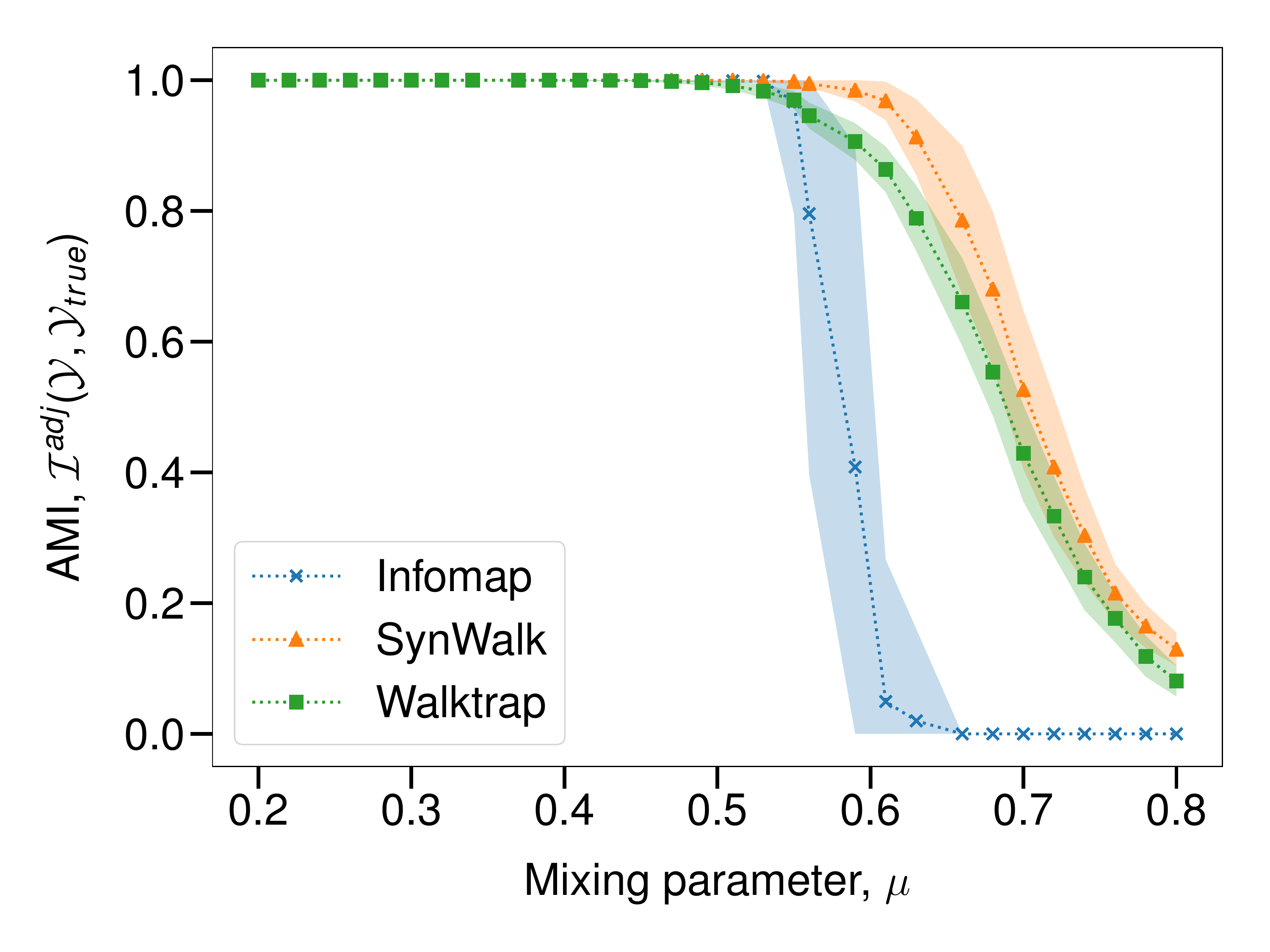}
        \caption{$\kavg = 20$, $N = 600$.}
    \end{subfigure}
    \begin{subfigure}{0.32\textwidth}
        \centering
        \includegraphics[width=1.0\textwidth]{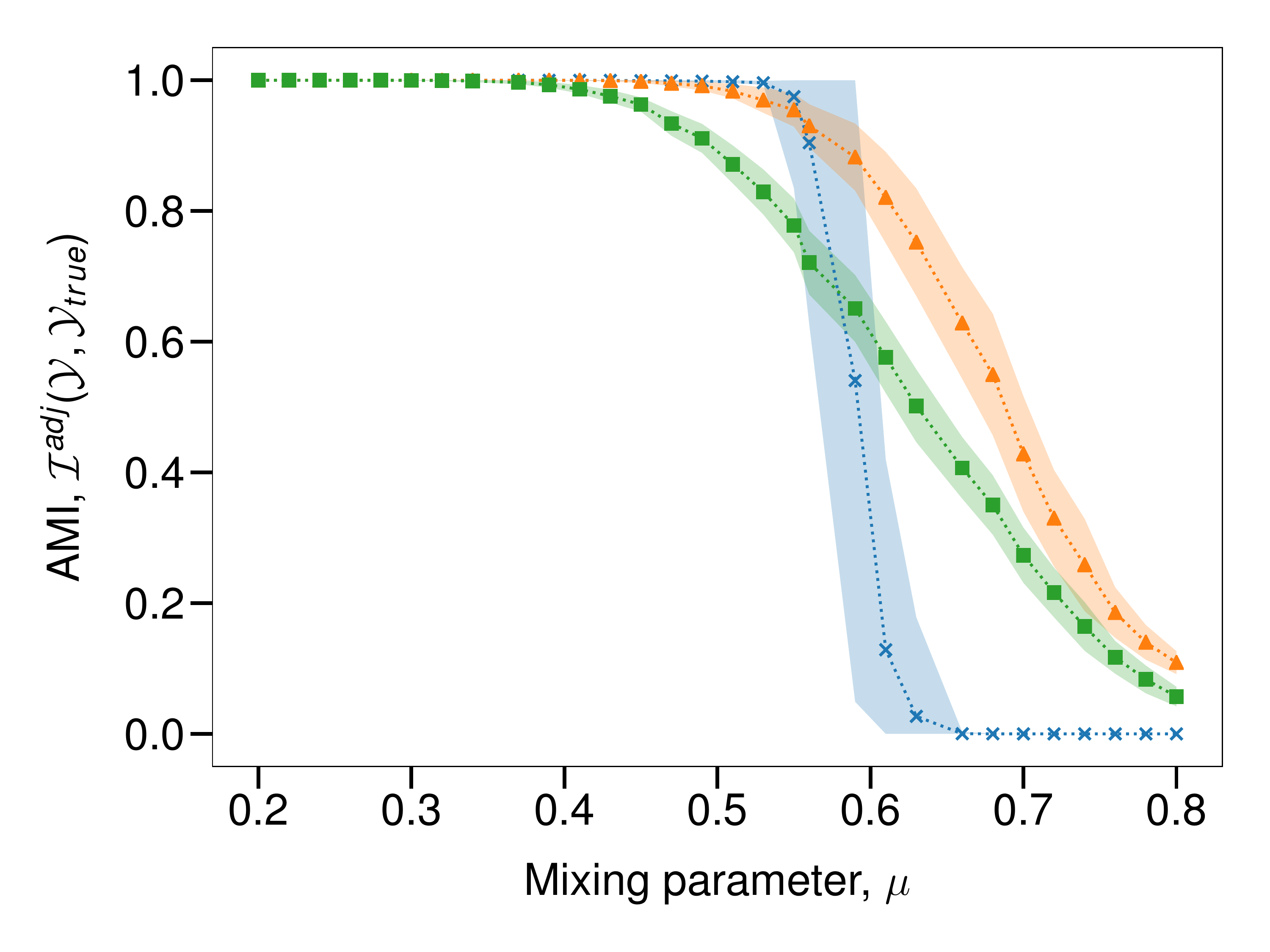}
        \caption{$\kavg = 20$, $N = 1200$.}
    \end{subfigure}
    \begin{subfigure}{0.32\textwidth}
        \centering
        \includegraphics[width=1.0\textwidth]{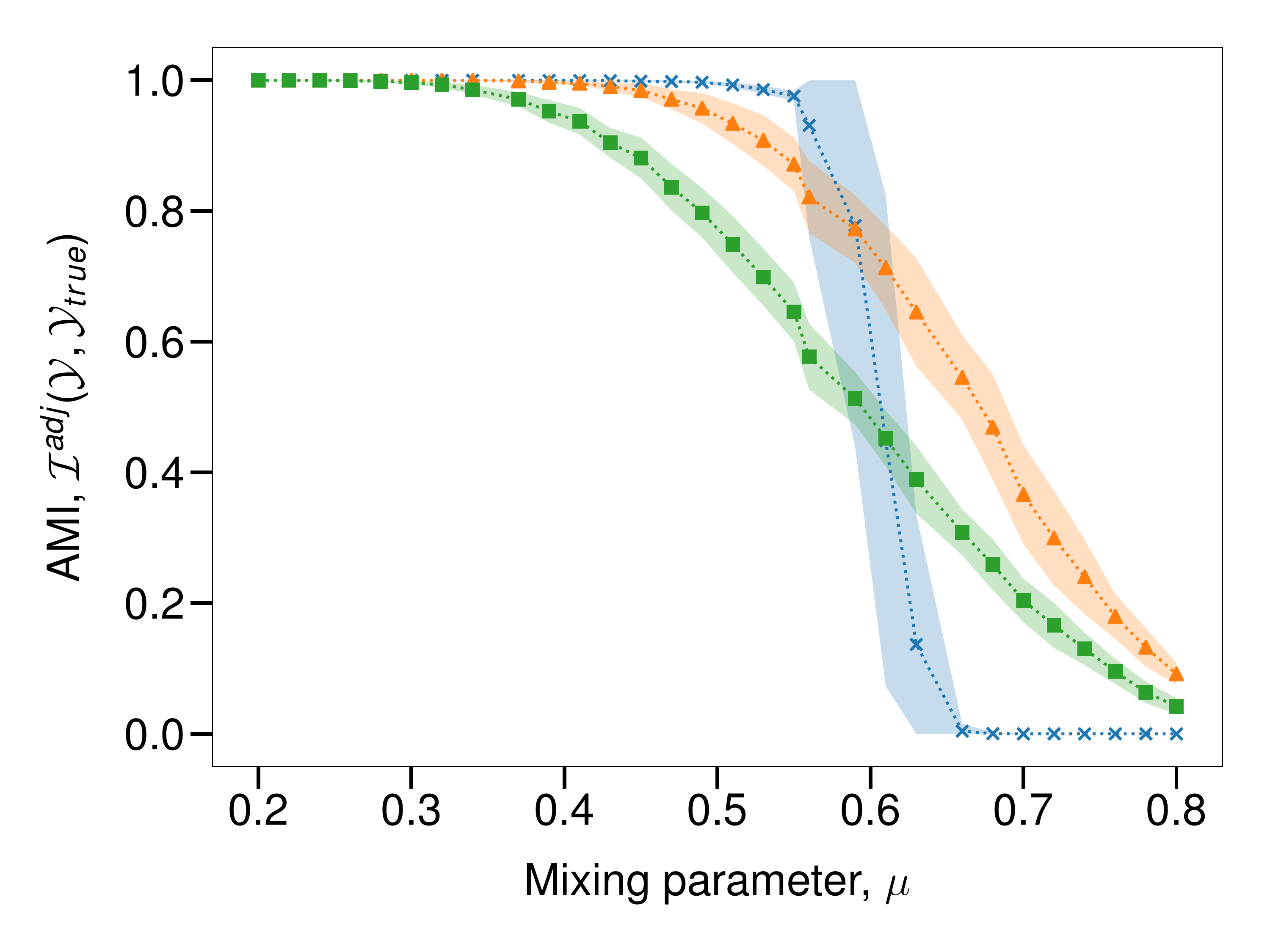}
        \caption{$\kavg = 20$, $N = 2400$.}
    \end{subfigure}
    \caption{Comparison of Infomap, Synwalk and Walktrap on LFR benchmark networks with given average degree and network size. 
    The lines and shaded areas show the mean and standard deviation of AMI as a function of themixing parameter, obtained from $100$ different network realizations. The networks were generated with parameter set B (see Table~\ref{tab:lfr}), simulating smaller communities with higher normalized local degrees. Synwalk outperforms Walktrap in this setup.}\label{fig:lfr_set_B}
\end{figure*}

	\begin{table}[!bh]
\centering
\caption{Properties of the examined real-world networks.} \label{tab:networks}
\resizebox{\textwidth}{!}{%
    \sisetup{table-number-alignment=center,table-figures-integer=7,table-figures-decimal=0}
    \begin{tabular}{lSSS[table-format=2.2]c}
    \toprule
    Network             & {Nodes}   & {Links}   & {$\bar{k}$}  & Source          \\ \midrule
    dblp                &  317080   & 1049866   &  6.62     & \citep{Yang2015, snapnets} \\ 
    facebook            &   22470   &  170823   & 15.20     & \citep{Rozemberczki2019, snapnets} \\ 
    github              &   37700   &  289003   & 15.33     & \citep{Rozemberczki2019, snapnets} \\ 
    lastfm-asia         &    7624   &   27806   &  7.29     & \citep{Rozemberczki2020, snapnets} \\ 
    pennsylvania-roads  & 1088092   & 1541898   &  2.83     & \citep{Leskovec2008, snapnets} \\ 
    wordnet             &  146005    & 656999   &  9.00     & \citep{Miller1998, konect} \\
    \bottomrule
    \end{tabular}%
    }
\end{table}

\section{Illustration on Empirical Networks}\label{sec:realworld}

In this section we illustrate the applicability of Synwalk on a selection of empirical networks (see Table~\ref{tab:networks}) by comparing the detection results of Synwalk, Infomap and Walktrap. For this purpose, we report single-number properties of their detected clusterings in Table~\ref{tab:cluster_numbers}. Synwalk and Infomap behave similarly in terms of their single-number characteristics. An exception to this observation is the github network, where Synwalk detects a greater number of (non-trivial) clusters. Notably, Walktrap results consistently show a higher fraction of trivial clusters when compared to the other two methods.

Additionally, we look at the distributions of cluster sizes in Fig.~\ref{fig:empirical_sizes} and normalized local degrees in Fig.~\ref{fig:empirical_nlds}. The results for further cluster and node properties are given in Appendix~\ref{appx:additional_empirical} for the sake of completeness. For all distribution plots we consider clusters with less than three members as trivial and we do not include their statistics in the distributions. Infomap and Synwalk again behave similar given their cluster and node property distributions. A deviation from this pattern is apparent in the distribution of NLDs for the github network, where Synwalk exhibits higher NLDs when compared to Infomap and Walktrap. The cluster size distributions of Walktrap show a trend towards small clusters. As the empirical networks under consideration are significantly larger than the examined LFR networks in Section~\ref{sec:lfr_experiments}, a random walk length of $T = 4$ might not be the optimal hyperparameter choice and thus could explain the many trivial clusters detected (cp. Table~\ref{tab:cluster_numbers}).

Interestingly, whereas Synwalk achieved similar AMI performance as Walktrap in Section~\ref{sec:lfr_experiments}, Synwalk shows similar qualitative behavior to Infomap regarding cluster and node property statistics on empirical networks. However, the differences in the qualitative detection behavior of Synwalk and Walktrap that we discussed in Section~\ref{sssec:node_statistics} could explain the different results on the larger empirical networks. We further conjecture that the common search heuristic (see Section~\ref{sec:lfr_experiments}) of Synwalk and Infomap acts as a \enquote{regularizer} on larger networks, i.e., the properties of predicted clusterings become more similar the larger the networks.

\begin{table}[b]
\centering
\caption{Single-number characteristics of the detection results of Infomap, Synwalk and Walktrap on the examined empirical networks. We list the number of detected clusters, the number of non-trivial (less than three nodes) clusters including the fraction thereof in brackets, and the modularity score for each clustering. Synwalk produces similar figures to Infomap in terms of non-trivial clusters and modularity, except for the github network. Walktrap predicts a higher fraction of trivial clusters compared to the other methods.} \label{tab:cluster_numbers}
\resizebox{\textwidth}{!}{%
    \sisetup{table-number-alignment=center,table-figures-integer=1,table-figures-decimal=2, table-align-uncertainty=true}
    \begin{tabular}{l|S[table-format=5]S[table-format=5]S[table-format=5]|S[table-format=5]@{\,}cS[table-format=5]@{\,}cS[table-format=5]@{\,}c|SSS}
    \toprule
     & \multicolumn{3}{c|}{Detected Clusters} & \multicolumn{6}{c|}{Non-trivial Clusters} & \multicolumn{3}{c}{Modularity} \\
     Network  & {Infomap} & {Synwalk} & {Walktrap} & \multicolumn{2}{c}{Infomap} & \multicolumn{2}{c}{Synwalk} & \multicolumn{2}{c|}{Walktrap} & {Infomap}  & {Synwalk}  & {Walktrap} \\ \midrule
    dblp                & 14544 & 14587 & 30425 & 14533 & (1.00) & 14576 & (1.00) & 25659 & (0.84) & 0.73 & 0.73 & 0.67 \\
    facebook            &   860 &   837 &  1227 &   813 & (0.95) &   807 & (0.96) &   834 & (0.68) & 0.76 & 0.74 & 0.75 \\
    github              &  1644 &  3935 &  7161 &  1333 & (0.81) &  2228 & (0.57) &   630 & (0.09) & 0.39 & 0.27 & 0.36 \\
    lastfm-asia         &   422 &   445 &   328 &   374 & (0.87) &   411 & (0.92) &   202 & (0.62) & 0.74 & 0.70 & 0.77 \\
    pennsylvania-roads  & 47174 & 49095 & 15101 & 47004 & (1.00) & 48925 & (1.00) & 14931 & (0.99) & 0.90 & 0.89 & 0.95 \\
    wordnet             &  5786 &  6058 & 14063 &  5604 & (0.97) &  5861 & (0.97) &  9531 & (0.68) & 0.66 & 0.65 & 0.60 \\
    \bottomrule
    \end{tabular}
}
\end{table}

\begin{figure*}[t]
    \centering
    \begin{subfigure}{0.32\textwidth}
        \centering
        \includegraphics[width=1.0\textwidth]{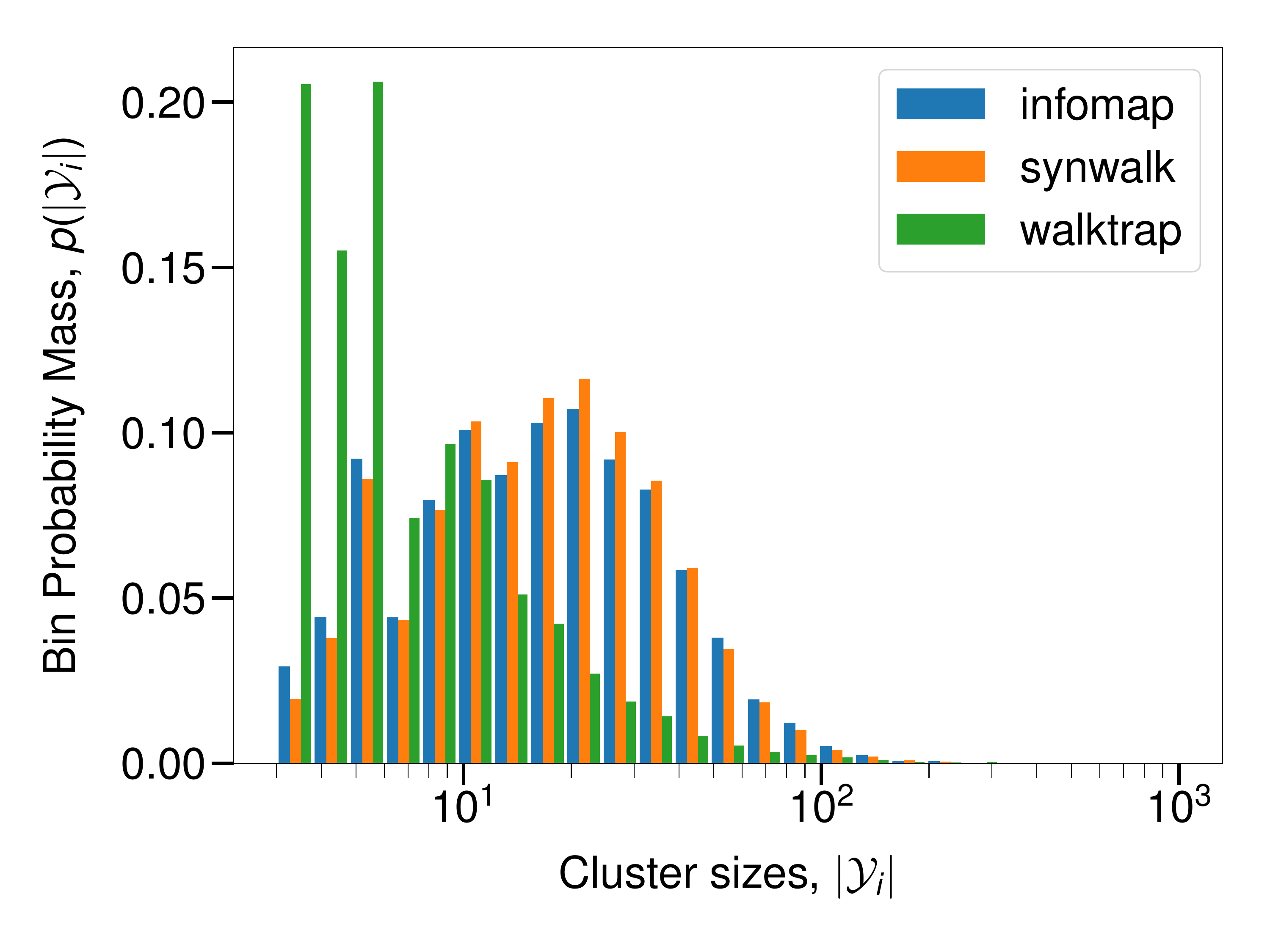}
        \caption{dblp.}
    \end{subfigure}
   \begin{subfigure}{0.32\textwidth}
        \centering
        \includegraphics[width=1.0\textwidth]{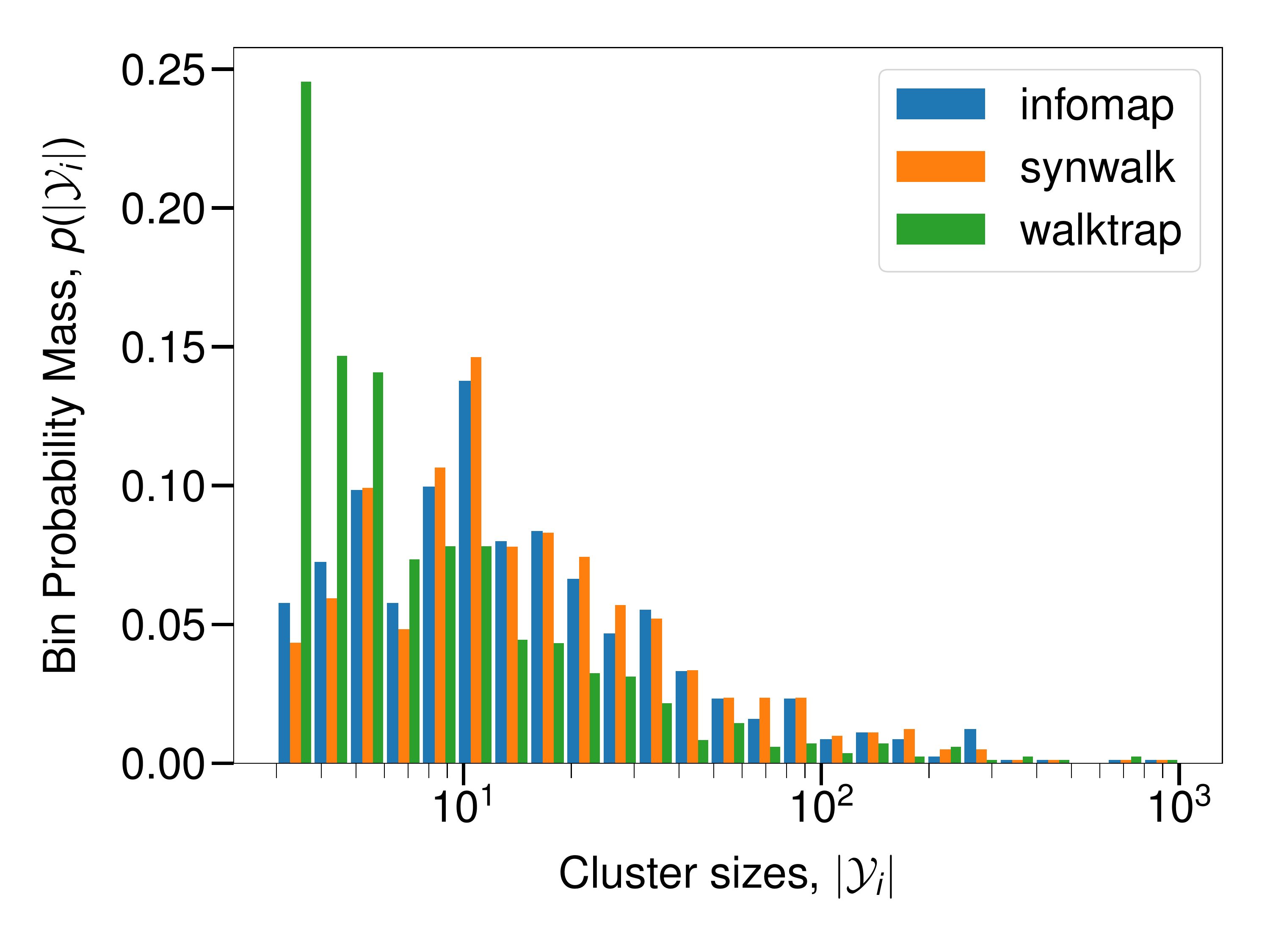}
        \caption{facebook.}
    \end{subfigure}
    \begin{subfigure}{0.32\textwidth}
        \centering
        \includegraphics[width=1.0\textwidth]{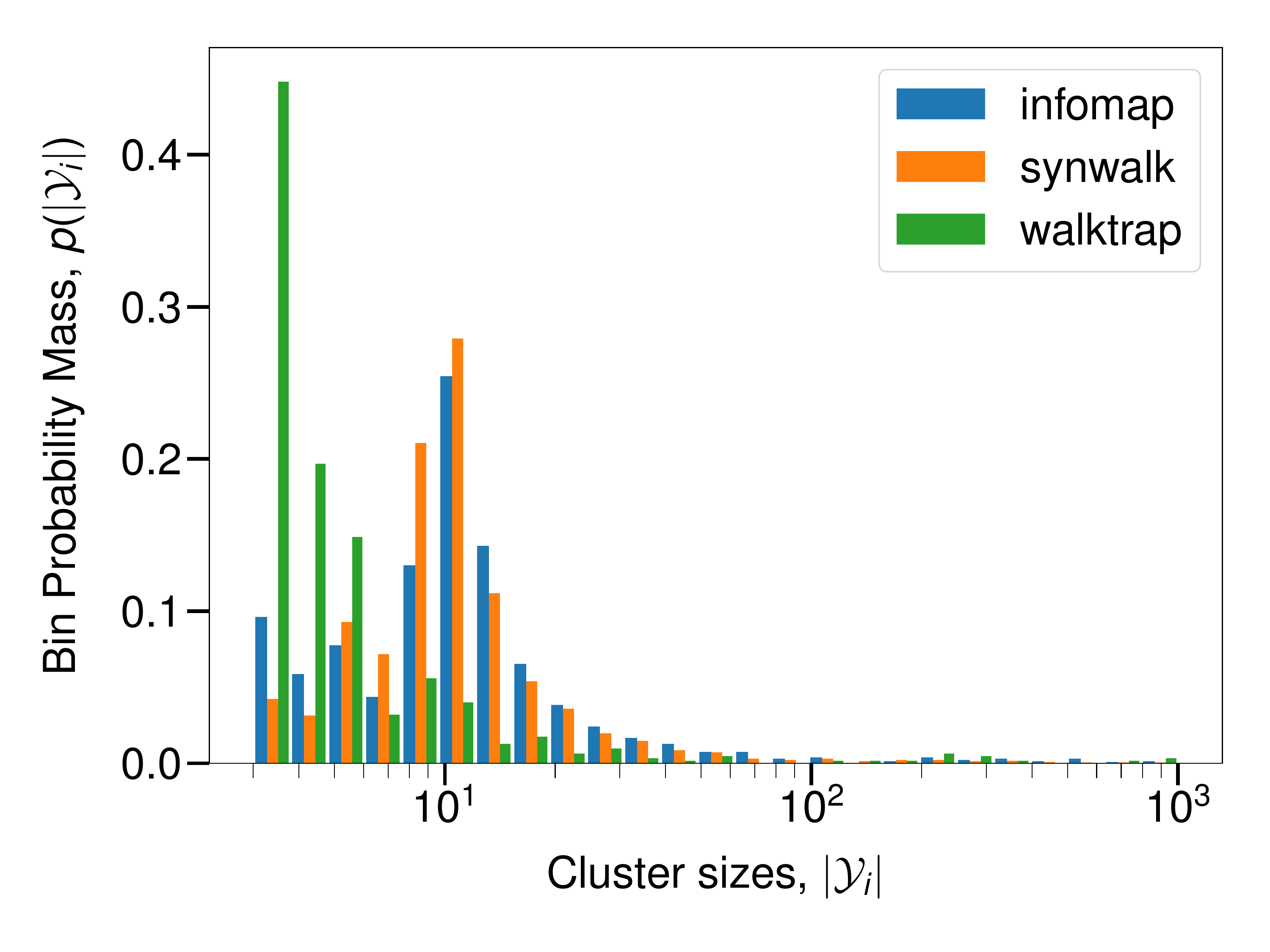}
        \caption{github.}
    \end{subfigure}
    \\
    \begin{subfigure}{0.32\textwidth}
        \centering
        \includegraphics[width=1.0\textwidth]{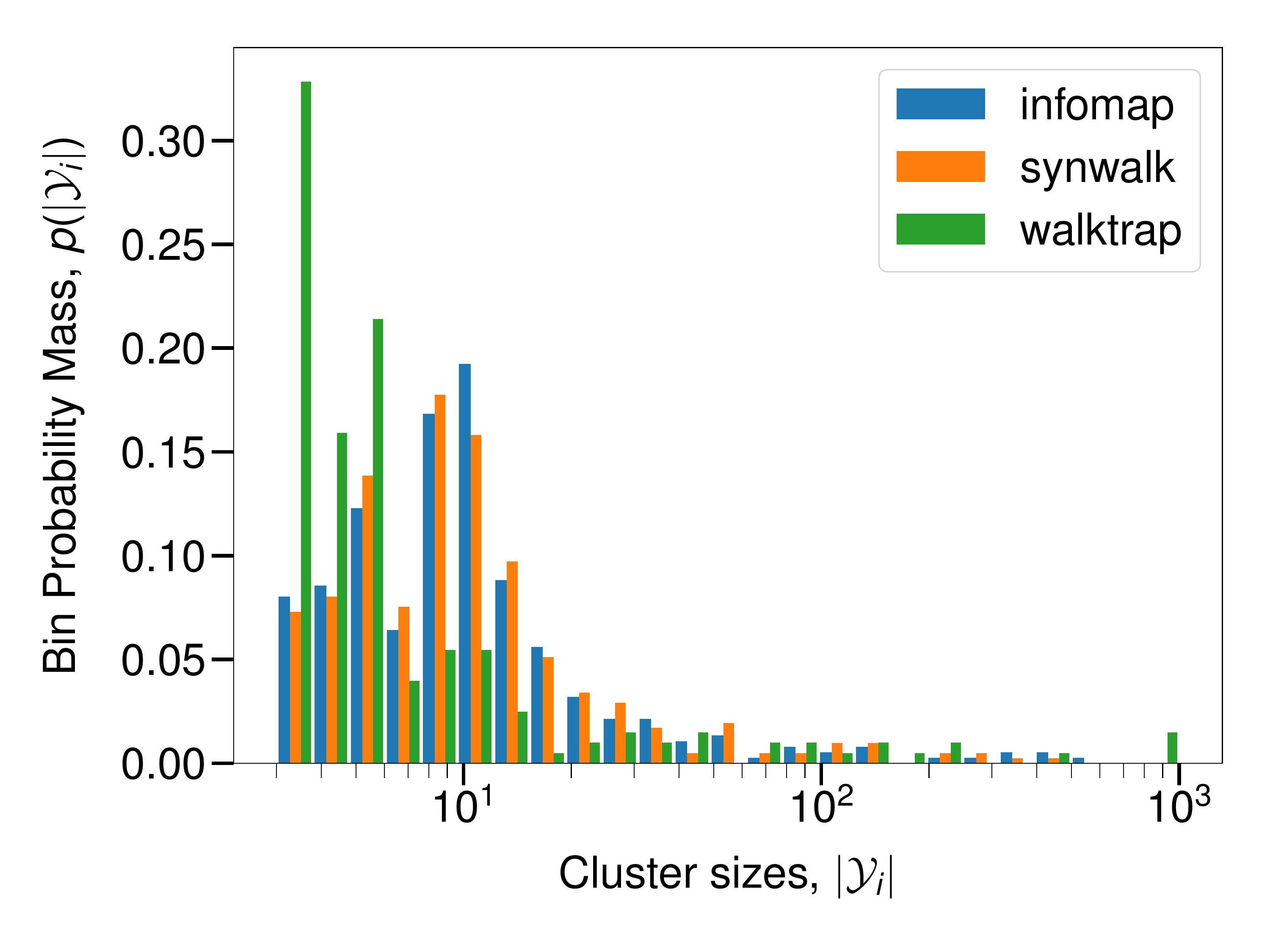}
        \caption{lastfm-asia.}
    \end{subfigure}
    \begin{subfigure}{0.32\textwidth}
        \centering
        \includegraphics[width=1.0\textwidth]{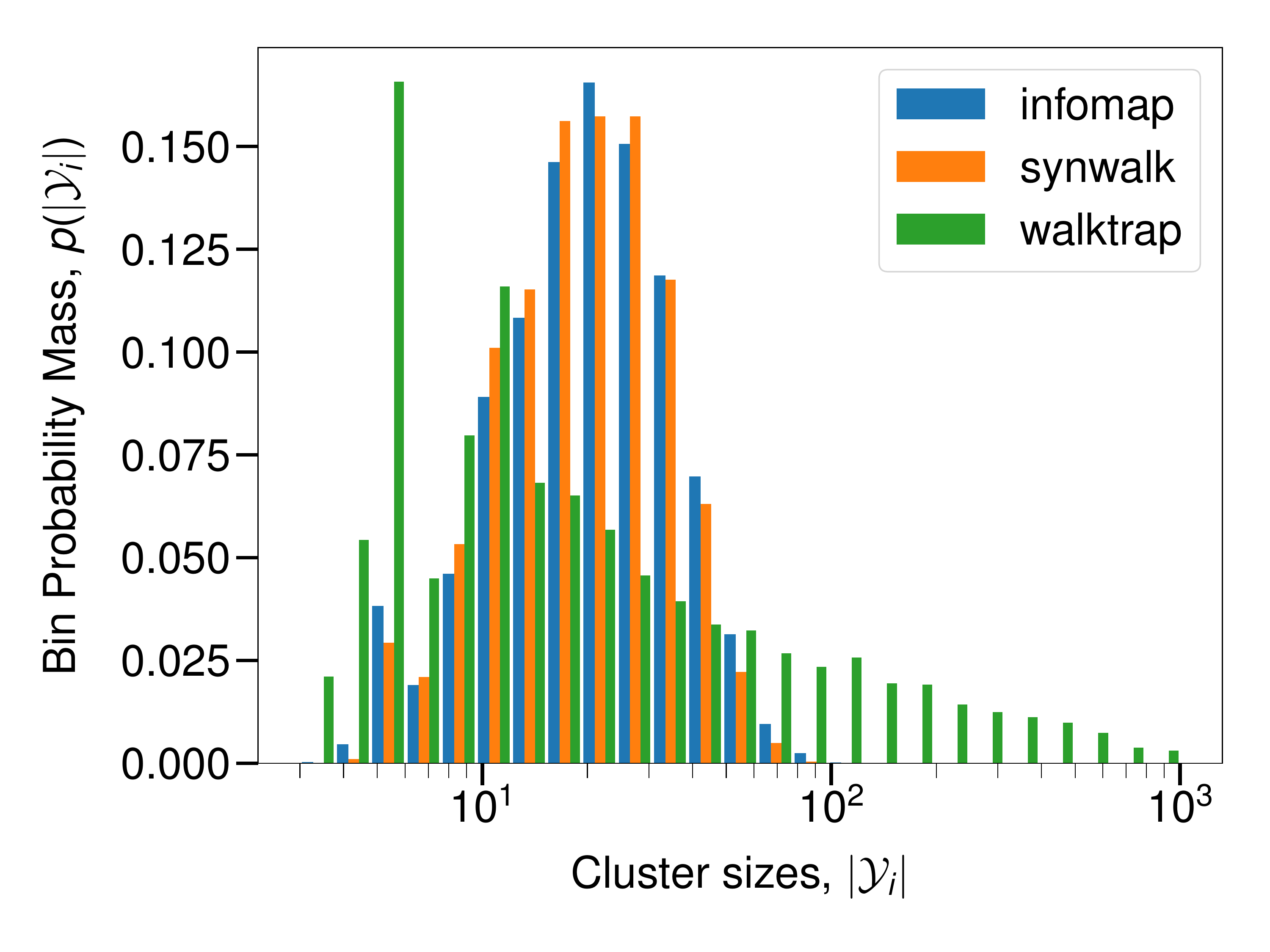}
        \caption{pennsylvania-roads.}
    \end{subfigure}
    \begin{subfigure}{0.32\textwidth}
        \centering
        \includegraphics[width=1.0\textwidth]{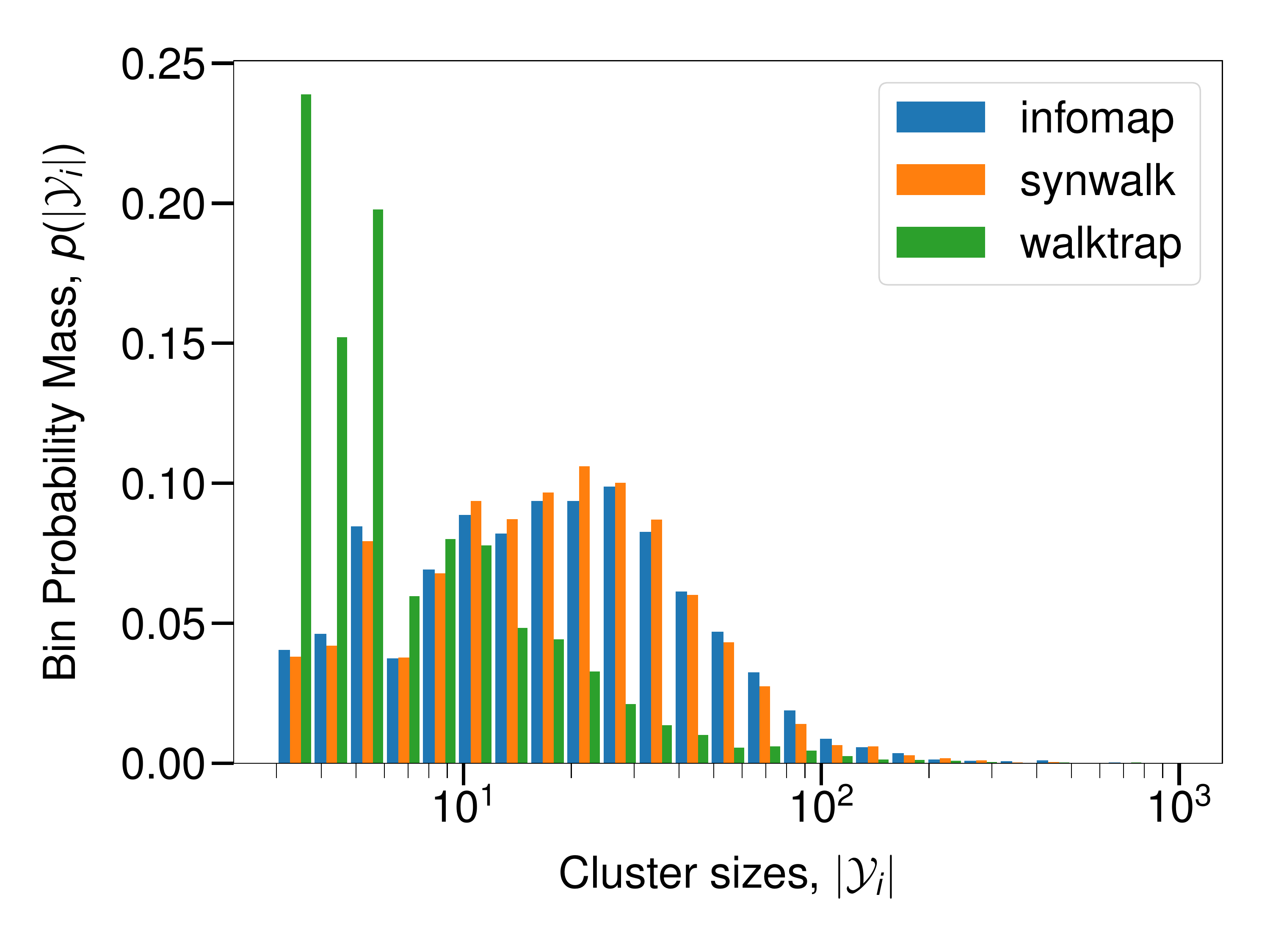}
        \caption{wordnet.}
    \end{subfigure}
    \caption{Distributions of cluster sizes for the detection results on empirical networks. Synwalk produces similar statistics to Infomap that are clearly distinguishable from Walktrap's results.}\label{fig:empirical_sizes}
\end{figure*}

\begin{figure*}[t]
    \centering
    \begin{subfigure}{0.32\textwidth}
        \centering
        \includegraphics[width=1.0\textwidth]{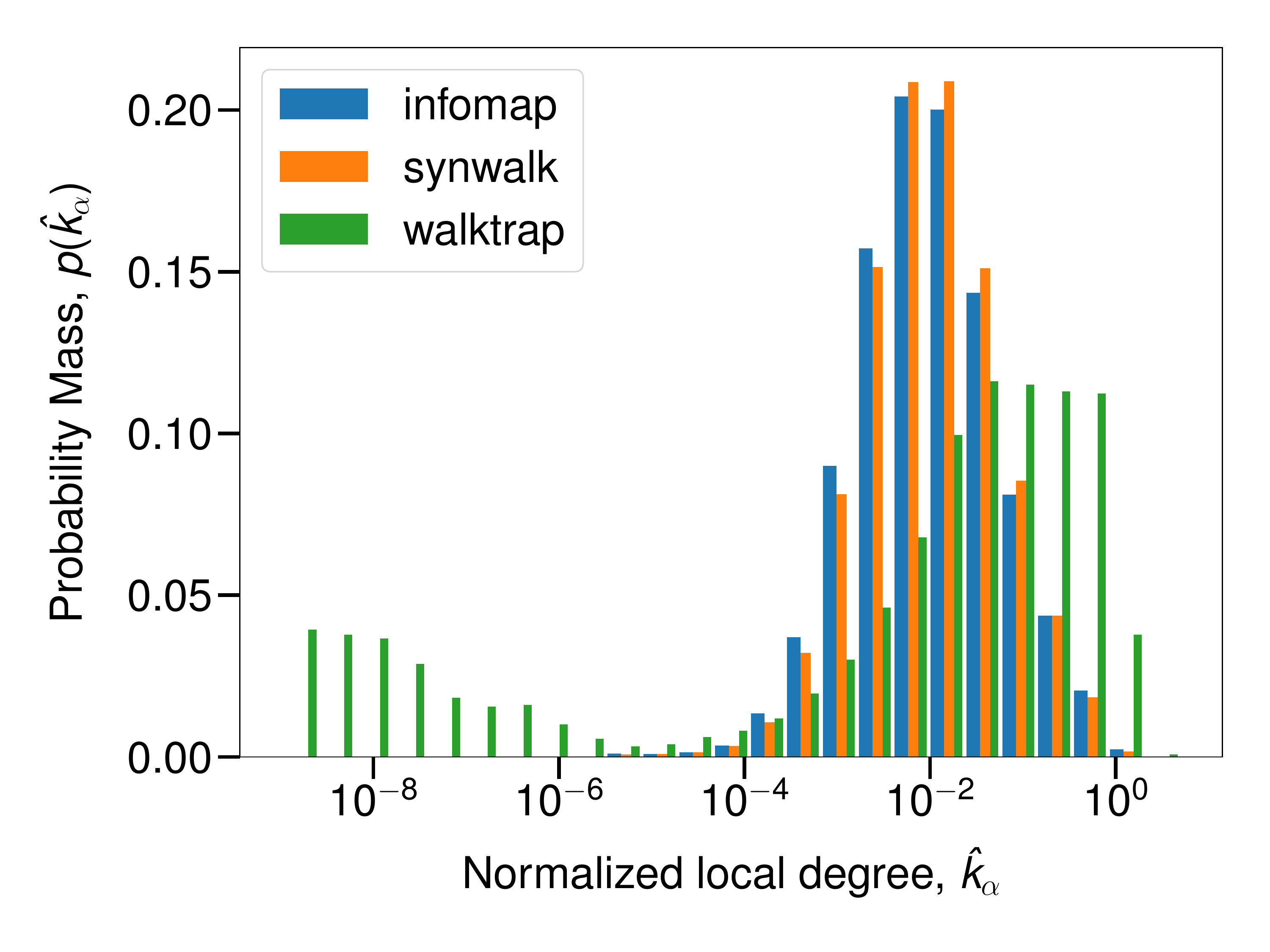}
        \caption{dblp.}
    \end{subfigure}
   \begin{subfigure}{0.32\textwidth}
        \centering
        \includegraphics[width=1.0\textwidth]{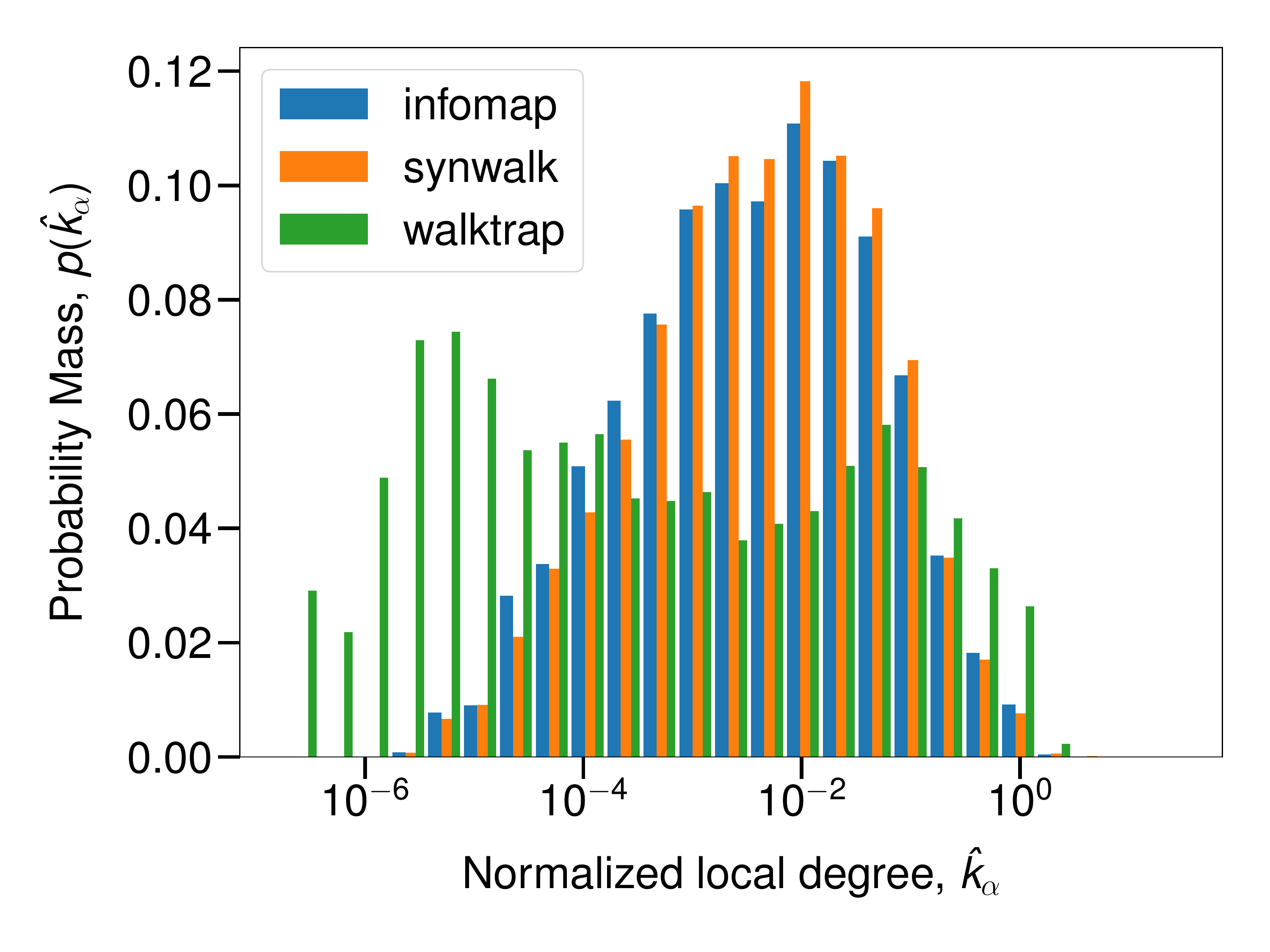}
        \caption{facebook.}
    \end{subfigure}
    \begin{subfigure}{0.32\textwidth}
        \centering
        \includegraphics[width=1.0\textwidth]{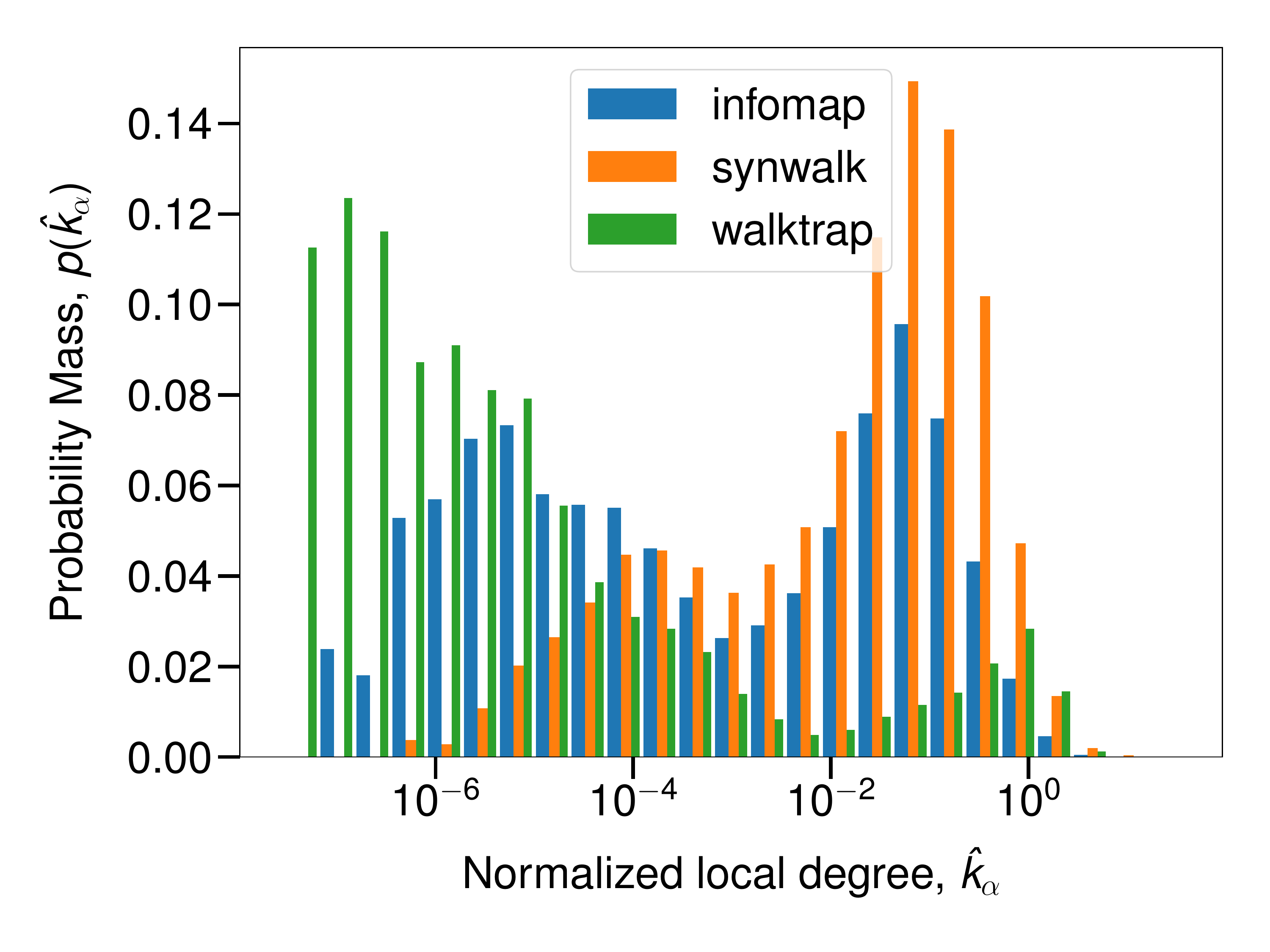}
        \caption{github.}
    \end{subfigure}
    \\
    \begin{subfigure}{0.32\textwidth}
        \centering
        \includegraphics[width=1.0\textwidth]{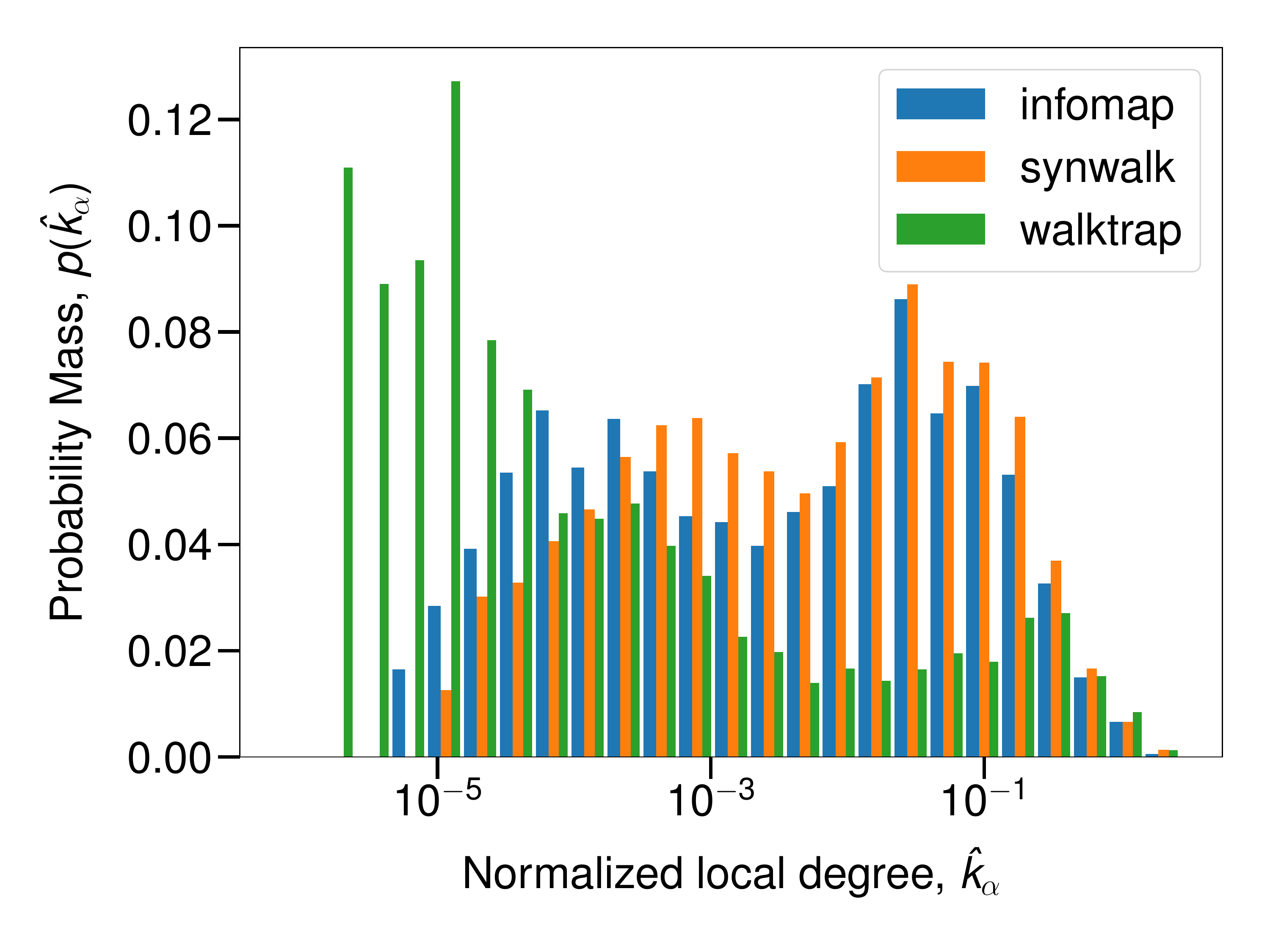}
        \caption{lastfm-asia.}
    \end{subfigure}
    \begin{subfigure}{0.32\textwidth}
        \centering
        \includegraphics[width=1.0\textwidth]{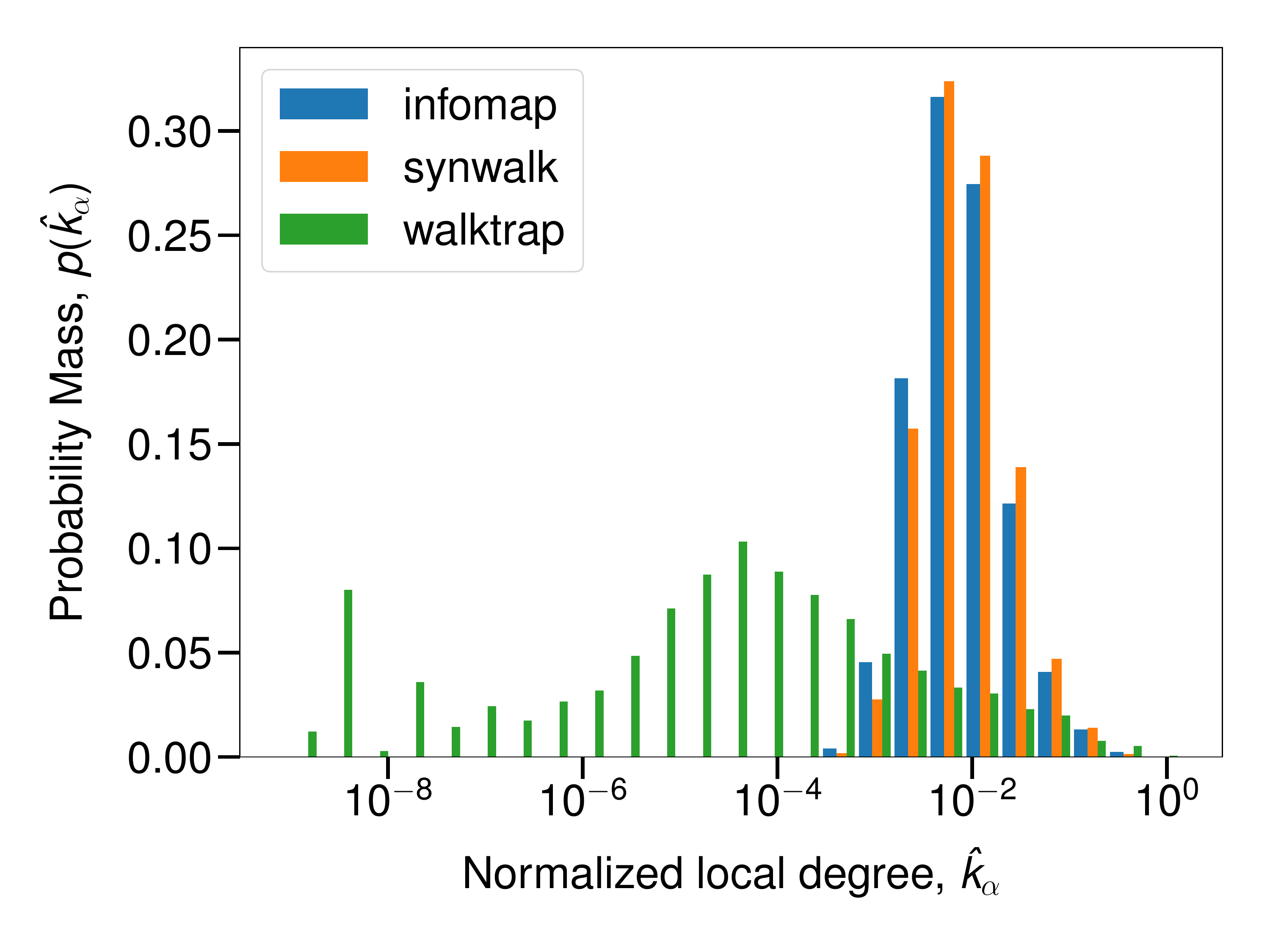}
        \caption{pennsylvania-roads.}
    \end{subfigure}
    \begin{subfigure}{0.32\textwidth}
        \centering
        \includegraphics[width=1.0\textwidth]{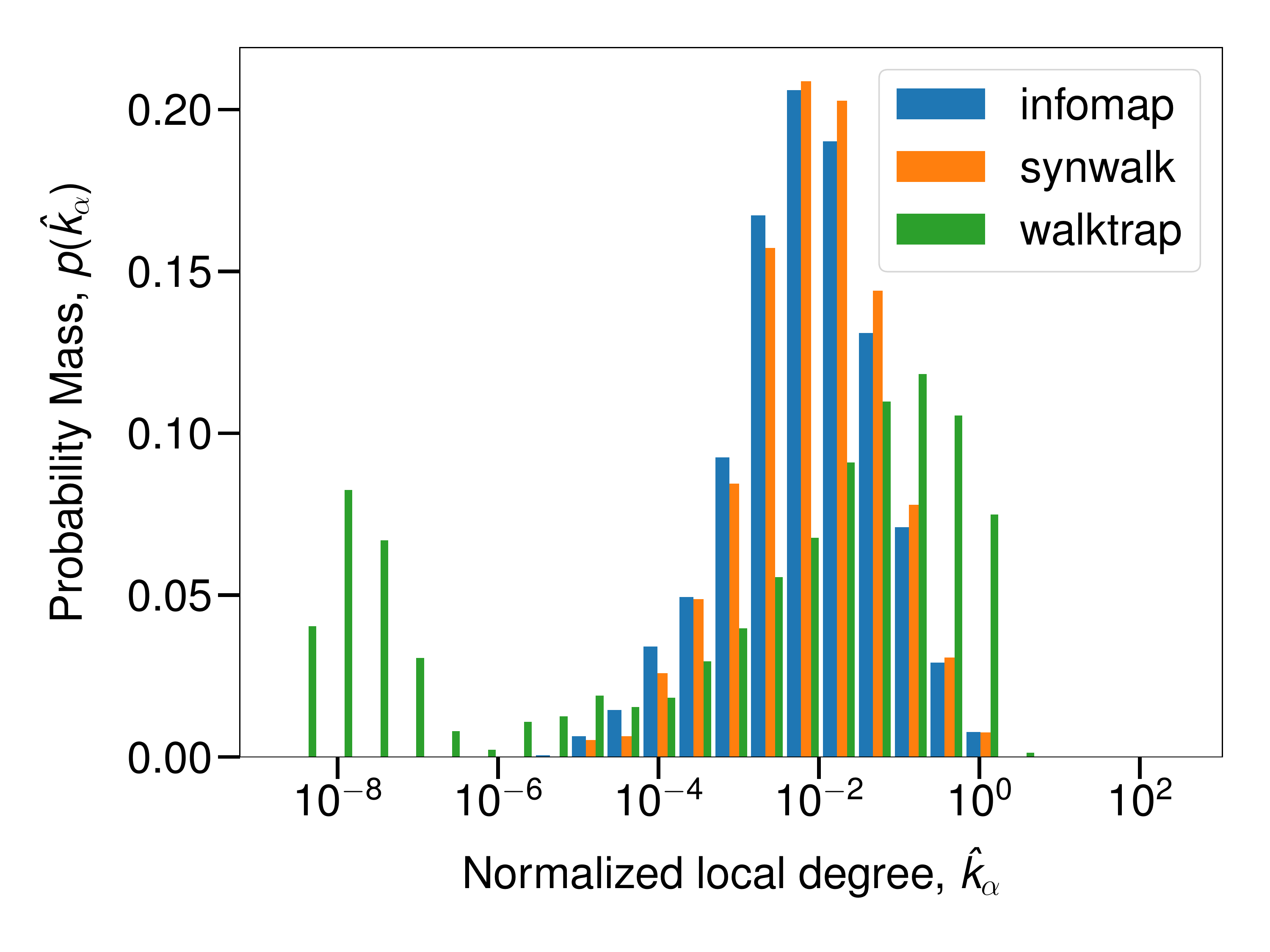}
        \caption{wordnet.}
    \end{subfigure}
    \caption{Distributions of normalized local degrees w.r.t.\ the discovered communities on empirical networks. The distributions generated by Synwalk resemble Infomap closely. An exception here is again the github network.}\label{fig:empirical_nlds}
\end{figure*}

	\section{Conclusion}\label{sec:conclusion}
In this work, we introduced Synwalk, a community detection method based on random walk modelling, that is characterized by an information-theoretic objective function. Our experiments underline the solid theoretical basis of synthetic random walk-based models and show that we can achieve robust performance across a wide range of problem setups. For specific networks, e.g., networks with many small communities and low average degree, Synwalk outperforms Infomap and Walktrap, at least on generated LFR benchmark graphs. 

We deem random walk modelling an interesting counterpart to (stochastic) block modelling for community detection that deserves more attention, as it opens up many interesting avenues for future research. 
For example, while the Synwalk objective is perfectly applicable to directed networks, our present study was limited to undirected networks only, suggesting a closer investigation of Synwalk in directed and/or weighted networks or networks with special properties (e.g., small worlds, etc.). Further, a deeper understanding of  the optimization landscape induced by the Synwalk objective and the influence of the optimization algorithm is required, as well as an extension of the approach to overlapping and hierarchical community structures. Finally, future research shall investigate random walk modelling approaches with structural assumptions different from those of the Synwalk objective, and whether such approaches can be tailored to detect communities of specific types or within specific network classes.

	%

	\bibliographystyle{spbasic}      
	\bibliography{library.bib}   
	
	\begin{acknowledgements}
		The work of Bernhard C. Geiger was supported by the HiDALGO project and has been funded by the European Commission’s ICT activity of the H2020 Programme under grant agreement number 824115. The Know-Center is funded within the Austrian COMET Program - Competence Centers for Excellent Technologies - under the auspices of the Austrian Federal Ministry for Climate Action, Environment, Energy, Mobility, Innovation and Technology, the Austrian Federal Ministry for Digital and Economic Affairs, and the State of Styria. COMET is managed by the Austrian Research Promotion Agency FFG.
	\end{acknowledgements}
	
	\appendix
	
\section{Proofs}\label{appx:proofs}

\subsection{Derivation of the Synwalk Objective}\label{appx:mainProp}

\begin{proposition}\label{prop:mainProp}
    Let $\{X_t\}$ be a stationary Markov chain with transition probability matrix $P$ derived from the weight matrix $W$ of the network $\graph$ and let $Q^\dom{Y}$ be a transition probability matrix of the same size parameterized as in~\eqref{eq:qmat}. Then, for every partition $\dom{Y}$, it holds that
    \begin{align}\label{eq:prop:statements}
    \min_{\{[r_\alpha^i]_{\alpha\in \dom{Y}_i},\ s_i,\ u_i\}_{i\in\idxset{Y}}} \kldr{P}{Q^\dom{Y}} \leq \mutinf{X_t;X_{t-1}} - \sum_{i\in\idxset{Y}} p_i\kld{p_{i\to i}}{p_i}
    \end{align}
    where $p_i$ and $p_{i\to i}$ are defined as in~\eqref{eq:cluster-probabilities}.
\end{proposition}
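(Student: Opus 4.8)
The plan is to exhibit one feasible choice of the free parameters for which $\kldr{P}{Q^\dom{Y}}$ evaluates \emph{exactly} to the right-hand side of~\eqref{eq:prop:statements}; since the left-hand side is the minimum over all admissible parameter choices, it is bounded above by the value at any particular one. Concretely, I would set $r_\alpha^i = p_\alpha/p_i$ for $\alpha\in\dom{Y}_i$, $s_i = p_{i\not\to i}$, and $u_i = p_i$ for all $i\in\idxset{Y}$. These are feasible ($\sum_{\alpha\in i}p_\alpha/p_i = 1$, $s_i\in[0,1]$, $\sum_i p_i = 1$), and in fact $r^i$ and $s_i$ are the joint minimizers for this fixed $[u_i]$ — which is why they are the natural candidates — but for the stated inequality only feasibility is needed. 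With these substitutions, $Q^\dom{Y}$ from~\eqref{eq:qmat} collapses: for $m(\alpha)=m(\beta)=i$ one gets $q_{\alpha\to\beta} = \tfrac{p_\beta}{p_i}\,p_{i\to i}$, and for $m(\alpha)=i\neq j=m(\beta)$ one gets $q_{\alpha\to\beta} = \tfrac{p_\beta}{p_{\neg i}}\,p_{i\not\to i}$, using $1-u_i = p_{\neg i}$ and $u_j=p_j$ (the factor $p_j$ cancels).

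Next I would substitute these into $\kldr{P}{Q^\dom{Y}} = \sum_{\alpha,\beta} p_\alpha p_{\alpha\to\beta}\log\frac{p_{\alpha\to\beta}}{q_{\alpha\to\beta}}$ and split the double sum according to whether $m(\alpha)=m(\beta)$. In both parts $\log\frac{p_{\alpha\to\beta}}{q_{\alpha\to\beta}}$ decomposes as $(\log p_{\alpha\to\beta} - \log p_\beta)$ plus a term constant on each cluster ($\log\frac{p_i}{p_{i\to i}}$ on the within-cluster part, $\log\frac{p_{\neg i}}{p_{i\not\to i}}$ on the between-cluster part). Summing the $(\log p_{\alpha\to\beta}-\log p_\beta)$ contribution over all $\alpha,\beta$ gives $-H(X_t|X_{t-1}) + H(X_t) = \mutinf{X_t;X_{t-1}}$, where stationarity $\sum_\alpha p_\alpha p_{\alpha\to\beta}=p_\beta$ from~\eqref{eq:one-step} collapses the second term to $\sum_\beta p_\beta\log p_\beta$. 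The cluster-constant contributions collapse via $\sum_{\alpha\in i}\sum_{\beta\in i} p_\alpha p_{\alpha\to\beta} = p_{i,i}$ and $\sum_{\alpha\in i}\sum_{\beta\notin i} p_\alpha p_{\alpha\to\beta} = p_{i,\neg i}$ (from~\eqref{eq:cluster-probabilities}), yielding $\sum_{i\in\idxset{Y}}\big(p_{i,i}\log\tfrac{p_i}{p_{i\to i}} + p_{i,\neg i}\log\tfrac{p_{\neg i}}{p_{i\not\to i}}\big)$, which by Definition~\ref{def:synwalk} together with $p_{i,i}=p_i p_{i\to i}$ and $p_{i,\neg i}=p_i p_{i\not\to i}$ equals $-\sum_{i\in\idxset{Y}} p_i\kld{p_{i\to i}}{p_i}$. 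Adding the two pieces shows $\kldr{P}{Q^\dom{Y}} = \mutinf{X_t;X_{t-1}} - \sum_{i\in\idxset{Y}} p_i\kld{p_{i\to i}}{p_i}$ at this parameter choice, and~\eqref{eq:prop:statements} follows.

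The only genuinely delicate point is degeneracy. For singleton clusters, or clusters with no internal (resp.\ no outgoing) edges, $p_{i\to i}=0$ (resp.\ $p_{i\not\to i}=0$), so some $q_{\alpha\to\beta}$ vanish and $Q^\dom{Y}$ may fail to be irreducible. I would handle this by observing that for the chosen parameters $q_{\alpha\to\beta}=0$ forces $p_{\alpha\to\beta}=0$ as well (no within-cluster edges makes $p_{\alpha\to\beta}=0$ for all $\alpha,\beta$ in that cluster, and likewise for outgoing edges), so with the standard convention $0\log 0 = 0\log\tfrac00 = 0$ every term is well-defined and the computation is unaffected; if one instead insists on an irreducible $Q^\dom{Y}$, the same value is approached along an $\varepsilon$-perturbation and the bound persists because the left side is an infimum over admissible parameters. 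Beyond this, the argument is routine bookkeeping — regrouping sums and invoking stationarity — with no real obstacle.
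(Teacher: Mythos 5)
Your proposal is correct and follows essentially the same route as the paper: the paper decomposes $\kldr{P}{Q^\dom{Y}}$ into three separable terms, shows $r_\alpha^i=p_\alpha/p_i$ and $s_i=p_{i\not\to i}$ are the minimizers of two of them, and then substitutes the sub-optimal $u_i=p_i$ into the third, which amounts to evaluating the divergence rate at exactly the parameter point you chose. Your version is a mild streamlining -- invoking only feasibility of that point rather than optimality of $r$ and $s$ (optimality the paper needs anyway to justify the closed-form minimizers quoted in Section~4.1) -- and your handling of the $0\log 0$ degeneracies is a detail the paper's proof glosses over.
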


Since $ \mutinf{X_t;X_{t-1}}$ is independent of the candidate partition $\dom{Y}$, minimizing the right-hand side of~\eqref{eq:prop:statements} over the partition $\dom{Y}$ is equivalent to a corresponding maximization of $\objective{\dom{Y}}$.

\begin{proof}\label{proof:mainProp}
    Let  $p_{\alpha\to i} := \sum_{\beta\in \dom{Y}_i} p_{\alpha \to \beta}$,  $p_{\alpha\not\to i}:=\sum_{j\neq i} p_{\alpha \to j} = 1-p_{\alpha\to i}$, and let $\indicator{i}{\alpha}$ denote the indicator function for cluster $\dom{Y}_i$, i.e., 
    \begin{align}
        \indicator{i}{\alpha} = 
        \begin{cases}1 \quad\text{if}\quad \alpha\in\dom{Y}_i\\ 0 \quad\text{otherwise.}
        \end{cases}
    \end{align} Then from~\eqref{eq:qmat} follows that
    \begin{align}\label{eq:sums}  
    \kldr{P}{Q^\dom{Y}}
    &= \sum_{\alpha,\beta}  p_\alpha p_{\alpha \to \beta} \log \frac{p_{\alpha \to \beta}}{q_{\alpha \to \beta}}\notag\\
    &\begin{aligned}
        =\sum_{\alpha,\beta} p_\alpha p_{\alpha \to \beta} &\left[ \indicator{m(\alpha)}{\beta} \log \frac{p_{\alpha \to \beta}}{r_\beta^{m(\beta)} (1-s_{m(\alpha)})} \right. \\
        & \left. + (1 - \indicator{m(\alpha)}{\beta}) \log \frac{p_{\alpha \to \beta}}{r_\beta^{m(\beta)} s_{m(\alpha)} \frac{u_{m(\beta)}}{1 - u_{m(\alpha)}}} \right]
    \end{aligned}\notag\\ \addlinespace[7pt]
    &\begin{aligned}
        =\sum_j \sum_{\alpha}\sum_{\beta\in \dom{Y}_j}  p_\alpha p_{\alpha \to \beta} \log\frac{\frac{p_{\alpha \to \beta}}{p_{\alpha \to j}}}{r_\beta^j} + &\sum_i \sum_{\alpha\in \dom{Y}_i} p_\alpha p_{\alpha \to i}\log\frac{p_{\alpha \to i}}{1-s_i}\\
        + &\sum_i\sum_{j\neq i}\sum_{\alpha\in \dom{Y}_i} p_\alpha p_{\alpha \to j}\log\frac{p_{\alpha \to j}}{s_i\frac{u_j}{1 - u_i}}
    \end{aligned}\notag\\ \addlinespace[7pt]
    &\begin{aligned}
    =\sum_j \sum_{\alpha}\sum_{\beta\in \dom{Y}_j}  p_\alpha p_{\alpha \to \beta} \log\frac{\frac{p_{\alpha \to \beta}}{p_{\alpha \to j}}}{r_\beta^j}
    &+ \sum_i\sum_{j\neq i}\sum_{\alpha\in \dom{Y}_i} p_\alpha p_{\alpha \to j}\log\frac{\frac{p_{\alpha \to j}}{p_{\alpha\not\to i}}}{\frac{u_j}{1 - u_i}} \notag\\
    + \sum_i \sum_{\alpha\in \dom{Y}_i} p_\alpha p_{\alpha \to i}\log\frac{p_{\alpha \to i}}{1-s_i} &+ \sum_i\sum_{\alpha\in \dom{Y}_i} p_\alpha p_{\alpha\not\to i}\log\frac{p_{\alpha\not\to i}}{s_i}
    \end{aligned}\notag\\ \addlinespace[7pt]
    &\begin{aligned}
    =\sum_j &\sum_{\alpha}\sum_{\beta\in \dom{Y}_j}  p_\alpha p_{\alpha \to \beta} \log\frac{\frac{p_{\alpha \to \beta}}{p_{\alpha \to j}}}{r_\beta^j}
    + \sum_i\sum_{j\neq i}\sum_{\alpha\in \dom{Y}_i} p_\alpha p_{\alpha \to j}\log\frac{\frac{p_{\alpha \to j}}{p_{\alpha\not\to i}}}{\frac{u_j}{1 - u_i}}\\
    + &\sum_i \sum_{\alpha\in \dom{Y}_i} \left[ p_\alpha p_{\alpha \to i}\log\frac{p_\alpha p_{\alpha \to i}}{p_\alpha (1-s_i)} + p_\alpha p_{\alpha\not\to i}\log\frac{p_\alpha p_{\alpha\not\to i}}{p_\alpha s_i} \right]
    \end{aligned}\notag \\ \addlinespace[7pt]    
    &\begin{aligned}
    =\sum_j &\sum_{\alpha}\sum_{\beta\in \dom{Y}_j}  p_\alpha p_{\alpha \to \beta} \log\frac{\frac{p_{\alpha \to \beta}}{p_{\alpha \to j}}}{r_\beta^j}
    + \sum_i\sum_{j\neq i}\sum_{\alpha\in \dom{Y}_i} p_\alpha p_{\alpha \to j}\log\frac{\frac{p_{\alpha \to j}}{p_{\alpha\not\to i}}}{\frac{r_j}{1 - r_i}} \\
    + &\sum_i \kld{p_\alpha p_{\alpha \to i}}{p_\alpha (1-s_i)}.
    \end{aligned}  
    \end{align}
    We can now independently minimize the first summation term w.r.t $\{[r_\beta^j]_{\beta\in \dom{Y}_j}\}_{j\in\idxset{Y}}$, the second summation term w.r.t $[u_i]_{i\in\idxset{Y}}$, and the last summation term w.r.t $[s_i]_{i\in\idxset{Y}}$. Considering the latter, one can show~\citep[Lemma~10.8.1]{Cover2006} that the Kullback-Leibler divergence $\kld{p_\alpha p_{\alpha \to i}}{p_\alpha (1-s_i)}$ is minimized for
    \begin{align}
    1-s_i = \sum_{\alpha\in \dom{Y}_i} p_\alpha p_{\alpha \to i} = p_{i\to i}
    \end{align}
    and hence
    \begin{align}
    s_i &= 1 - p_{i\to i} = p_{i\not\to i}.
    \end{align}

    The minimizer regarding the distribution over nodes can be found along similar lines. For the first summation term in~\eqref{eq:sums} we observe that
    \begin{align*}
     \sum_j \sum_{\alpha}\sum_{\beta\in \dom{Y}_j}  p_\alpha p_{\alpha \to \beta} \log\frac{\frac{p_{\alpha \to \beta}}{p_{\alpha \to j}}}{r_\beta^j} 
    &= \sum_j \sum_{\alpha} p_\alpha p_{\alpha \to j}   \sum_{\beta\in \dom{Y}_j}  \frac{p_{\alpha \to \beta}}{p_{\alpha \to j}}\log\frac{\frac{p_{\alpha \to \beta}}{p_{\alpha \to j}}}{r_\beta^j} \\ \addlinespace[7pt]
    &= \sum_j p_j \sum_{\alpha} \frac{p_\alpha p_{\alpha \to j}}{p_j}   \sum_{\beta\in \dom{Y}_j}  \frac{p_{\alpha \to \beta}}{p_{\alpha \to j}}\log\frac{\frac{p_{\alpha \to \beta}}{p_{\alpha \to j}}}{r_\beta^j}  \\ \addlinespace[7pt]
    &= \sum_j p_j \sum_{\alpha} \frac{p_\alpha p_{\alpha \to j}}{p_j}   \sum_{\beta\in \dom{Y}_j}  \frac{p_{\alpha \to \beta}}{p_{\alpha \to j}}\log\frac{\frac{p_\alpha p_{\alpha \to \beta}}{p_j}}{\frac{p_\alpha p_{\alpha \to j}}{p_j}r_\beta^j}\\ \addlinespace[7pt]
    &= \sum_j p_j \sum_{\alpha} \sum_{\beta\in \dom{Y}_j} \frac{p_\alpha p_{\alpha \to j}}{p_j}  \frac{p_{\alpha \to \beta}}{p_{\alpha \to j}}\log\frac{\frac{p_\alpha p_{\alpha \to \beta}}{p_j}}{\frac{p_\alpha p_{\alpha \to j}}{p_j}r_\beta^j}\\ \addlinespace[7pt]
    &= \sum_j p_j \kld{\frac{p_\alpha p_{\alpha \to j}}{p_j} \cdot \frac{p_{\alpha \to \beta}}{p_{\alpha \to j}}}{\frac{p_\alpha p_{\alpha \to j}}{p_j} \cdot r_\beta^j} 
    \end{align*}
    and by again employing~\citep[Lemma~10.8.1]{Cover2006}, for a fixed cluster $j$ this quantity is minimized for
    \begin{align}
     r_\beta^j = \sum_\alpha \frac{p_\alpha p_{\alpha \to \beta}}{p_j} = \frac{p_\beta}{p_j} = \Pr{X_t=\beta|Y_t=j}.
    \end{align}
    Applying these minimizers yields
    \begin{align}\label{eq:intermediate_min}
     \min_{\{[r_\alpha^i]_{\alpha\in \dom{Y}_i},\ s_i,\ u_i\}_{i\in\idxset{Y}}} \kldr{P}{Q^\dom{Y}} = \min_{[u_i]_{i\in\idxset{Y}}} &\mutinf{X_t;X_{t-1}} - \ent{Y_t} + \ent{S_t|Y_{t}}  \notag\\ - &\sum_i\sum_{j\neq i} p_i p_{i \to j}\log\frac{u_j}{1 - u_i}
    \end{align}
    where $S_t = 1$ is a binary RV reflecting whether we stay in or leave a cluster at time $t$. Since for the distribution over clusters $[u_i]_{i\in\idxset{Y}}$ there exists no closed-form solution to the best of our knowledge, we choose $[u_i]_{i\in\idxset{Y}} = [p_i]_{i\in\idxset{Y}}$ as a sub-optimal solution. In other words, we utilize the stationary distribution of $\{Y_t\}$. Inserting this choice yields
    \begin{equation}\label{eq:finalequivalence}
    \min_{\{[r_\alpha^i]_{\alpha\in \dom{Y}_i},\ s_i,\ u_i\}_{i\in\idxset{Y}}} \kldr{P}{Q^\dom{Y}} \leq \mutinf{X_t;X_{t-1}} - \sum_i p_i \kld{p_{i\to i}}{p_i}.
    \end{equation}
    This completes the proof.
\end{proof}

\subsection{Bounds on the Synwalk Objective}\label{appx:bounds}

\begin{proposition}\label{prop:bounds}
 Let $\{X_t\}$ be a stationary Markov chain with transition probability matrix $P$ derived from the weight matrix $W$ of the network $\graph$ and let $\dom{Y}$ be any candidate partition. Then, we have
     \begin{equation}
    0 \leq \sum_{i\in\idxset{Y}} p_i\kld{p_{i\to i}}{p_i} \leq \mutinf{Y_t;Y_{t-1}} \leq \mutinf{X_t;X_{t-1}}
    \end{equation}
    where $\{Y_t\}$ is the process obtained by projecting $\{X_t\}$ through the partition $\dom{Y}$ (see Section~\ref{sec:preliminaries}).
\end{proposition}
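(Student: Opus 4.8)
The plan is to establish the three inequalities separately, working from the outside in. The rightmost inequality, $\mutinf{Y_t;Y_{t-1}} \leq \mutinf{X_t;X_{t-1}}$, is the standard data-processing inequality: since $Y_t = m(X_t)$ is a deterministic function of $X_t$ applied coordinate-wise, the pair $(Y_{t-1}, Y_t)$ is a function of $(X_{t-1}, X_t)$, and mutual information cannot increase under such a mapping. I would invoke this directly, perhaps citing \citet[Chapter~2]{Cover2006}, being a bit careful that what is needed is that $\{Y_t\}$ inherits stationarity from $\{X_t\}$ (which it does, by the definitions in Section~\ref{sec:preliminaries}) so that $\mutinf{Y_t;Y_{t-1}}$ is well-defined in the same sense.

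For the leftmost inequality, $0 \leq \sum_{i} p_i \kld{p_{i\to i}}{p_i}$, the point is simply that each summand is nonnegative: $\kld{p_{i\to i}}{p_i}$ is the Kullback-Leibler divergence between the Bernoulli distributions $[p_{i\to i}, p_{i\not\to i}]$ and $[p_i, p_{\neg i}]$ (using the abbreviation convention introduced in Section~\ref{ssec:infotheory}), and KL divergence is always nonnegative, with $p_i$ a legitimate Bernoulli parameter since $0 < p_i < 1$ by strong connectedness. Multiplying by $p_i \geq 0$ and summing preserves nonnegativity.

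The middle inequality, $\sum_{i} p_i \kld{p_{i\to i}}{p_i} \leq \mutinf{Y_t;Y_{t-1}}$, is the substantive step. I would start from the identity $\mutinf{Y_t;Y_{t-1}} = \sum_{i,j} p_{i,j} \log \frac{p_{i\to j}}{p_j} = \sum_i p_i \sum_j p_{i\to j} \log \frac{p_{i\to j}}{p_j}$, i.e.\ $\mutinf{Y_t;Y_{t-1}} = \sum_i p_i \, \kld{[p_{i\to j}]_j}{[p_j]_j}$, an expected KL divergence between the full conditional distribution $[p_{i\to j}]_{j\in\idxset{Y}}$ and the marginal $[p_j]_{j\in\idxset{Y}}$. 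The Synwalk objective is obtained by coarsening each of these distributions to the binary event ``$j = i$'' versus ``$j \neq i$''. So the claim reduces, term by term in $i$, to a monotonicity-of-KL-divergence-under-coarsening (data-processing) statement: $\kld{[p_{i\to j}]_j}{[p_j]_j} \geq \kld{p_{i\to i}}{p_i}$, where the right side is the KL divergence between the two-point distributions induced by the partition $\{\{i\}, \idxset{Y}\setminus\{i\}\}$ of the cluster index set. This is exactly the ``log-sum inequality'' applied to the block $j \neq i$: $\sum_{j\neq i} p_{i\to j} \log \frac{p_{i\to j}}{p_j} \geq p_{i\not\to i} \log \frac{p_{i\not\to i}}{p_{\neg i}}$, since $\sum_{j\neq i} p_{i\to j} = p_{i\not\to i}$ and $\sum_{j\neq i} p_j = p_{\neg i}$. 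Adding the (untouched) $j = i$ term gives the desired bound per cluster, and summing over $i$ with weights $p_i$ finishes it.

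I expect the middle inequality to be the only place requiring real care, and within it the main obstacle is bookkeeping rather than conceptual: making sure the marginal of $\{Y_t\}$ used for $[p_j]_j$ is consistent with the stationary distribution (so that $\sum_j p_{i\to j}\log\frac{p_{i\to j}}{p_j}$ is genuinely $\mutinf{Y_t;Y_{t-1}}$ summed against $p_i$), and stating the log-sum inequality in the exact form needed. One small subtlety: the log-sum inequality gives the coarsened bound cleanly, but one should note it also covers the degenerate case where a cluster $i$ is never left ($p_{i\not\to i}=0$), for which both sides of that block's contribution vanish under the usual $0\log 0 = 0$ convention. Everything else is routine algebra, so the write-up should be short.
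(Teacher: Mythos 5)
Your proof is correct, and the two outer inequalities (non-negativity of Kullback-Leibler divergence; data processing via the Markov tuple $Y_t - X_t - X_{t-1} - Y_{t-1}$) coincide with the paper's argument. For the middle inequality, however, you take a genuinely different and more elementary route: you expand $\mutinf{Y_t;Y_{t-1}} = \sum_i p_i \sum_j p_{i\to j}\log\frac{p_{i\to j}}{p_j}$ and apply the log-sum inequality to the block $j\neq i$, coarsening each conditional-versus-marginal divergence on $\idxset{Y}$ down to its Bernoulli (\enquote{stay vs.\ leave}) version $\kld{p_{i\to i}}{p_i}$. That is a clean, term-by-term data-processing argument on the cluster alphabet and requires nothing beyond Section~\ref{sec:preliminaries}. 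The paper instead proves the same inequality by relaxing the optimization problem~\eqref{eq:original}: it lets the cluster-selection distribution depend on the originating cluster ($[u_j^i]_{j\in\idxset{Y}}$ instead of a single $[u_j]_{j\in\idxset{Y}}$), solves the relaxed problem in closed form to obtain $\mutinf{X_t;X_{t-1}} - \mutinf{Y_t;Y_{t-1}}$, and compares this with the bound of Proposition~\ref{prop:mainProp}, using that a larger feasible set cannot increase the minimum. Your argument is shorter and self-contained; the paper's longer detour buys a structural by-product, namely that maximizing $\mutinf{Y_t;Y_{t-1}}$ is exactly the \enquote{further relaxation} of the Synwalk objective referred to in Section~\ref{sec:rationale} in connection with the approach of Hurley et al. One cosmetic point: strong connectedness gives $p_i>0$ but not $p_i<1$ when $K=1$; that degenerate case is harmless, since both sides of the first two inequalities then vanish under the $0\log 0=0$ convention you already invoke.
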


\begin{proof}
 The first inequality follows immediately from the non-negativity of Kullback-Leibler divergence; the last inequality is the data processing inequality~\citep[Th.~2.8.1]{Cover2006}, which follows from the fact that $Y_t-X_t-X_{t-1}-Y_{t-1}$ is a Markov tuple. We are thus left with proving the second inequality. 
 
 To this end, consider a relaxation of optimization problem~\eqref{eq:original}, in which we let the distribution over clusters depend on the originating cluster; i.e., rather than a single distribution $[u_i]_{i\in\idxset{Y}}$, we now consider a set of distributions $\{[u_i^j]_{i\in\idxset{Y}}\}_{j\in\idxset{Y}}$. In other words, we consider, for every partition $\dom{Y}$, the minimization problem
 \begin{equation}\label{eq:alternative_min}
     \min_{\{[r_\alpha^i]_{\alpha\in \dom{Y}_i},\ s_i,\ [u_j^i]_{j\in\idxset{Y}}\}_{i\in\idxset{Y}}} \kldr{P}{Q^\dom{Y}}.
\end{equation}
It can be shown along the lines of Proposition~\ref{prop:mainProp} that the distributions $[r_\alpha^i]_{\alpha\in \dom{Y}_i}$ and $s_i$, for $i\in\idxset{Y}$, minimizing~\eqref{eq:original} also minimize~\eqref{eq:alternative_min}. Thus, we have with~\eqref{eq:intermediate_min} that
\begin{align}\label{eq:proof2:intermediate}
    \min_{\{[r_\alpha^i]_{\alpha\in \dom{Y}_i},\ s_i,\ [u_j^i]_{j\in\idxset{Y}}\}_{i\in\idxset{Y}}} \kldr{P}{Q^\dom{Y}}&=
    \min_{\{[u_j^i]_{j\in\idxset{Y}}\}_{i\in\idxset{Y}}}
    \mutinf{X_t;X_{t-1}} - \ent{Y_t} + \ent{S_t|Y_{t}}  \notag\\  &\quad\quad -\sum_i\sum_{j\neq i} p_i p_{i \to j}\log\frac{u_j^i}{1 - u_i^i}.
\end{align}
The right-hand side can be shown to be minimized by setting
\begin{equation}
 u_j^i = \frac{p_{i\to j}}{p_{i\not\to i}} = \Pr{Y_t=j|Y_{t-1}=i,Y_t\neq i}= \Pr{Y_t=j|Y_{t-1}=i,S_{t-1}=1}
\end{equation}
for $j\neq i$ and $u_i^i=0$ for every $i\in\idxset{Y}$. Thus, we have
\begin{align}
&-\sum_i\sum_{j\neq i} p_i p_{i \to j}\log\frac{u_j^i}{1 - u_i^i}\notag\\
 &=- \sum_i\sum_{j\neq i} \Pr{Y_t=j,Y_{t-1}=i}\log\Pr{Y_t=j|Y_{t-1}=i,S_{t-1}=1}\\
 &= -\sum_i\sum_{j} \Pr{Y_t=j,Y_{t-1}=i,S_{t-1}=1}\log\Pr{Y_t=j|Y_{t-1}=i,S_{t-1}=1}\\
 &= \Pr{S_{t-1}=1}\ent{Y_t|Y_{t-1},S_{t-1}=1}\\
 &= \ent{Y_t|Y_{t-1},S_{t-1}}
\end{align}
because $\ent{Y_t|Y_{t-1},S_{t-1}=0}=0$. Insertig this into~\eqref{eq:proof2:intermediate} yields
\begin{align}
    &\min_{\{[r_\alpha^i]_{\alpha\in \dom{Y}_i},\ s_i,\ [u_j^i]_{j\in\idxset{Y}}\}_{i\in\idxset{Y}}} \kldr{P}{Q^\dom{Y}}\notag\\
    &=
    \mutinf{X_t;X_{t-1}} - \ent{Y_t} + \ent{S_t|Y_{t}} + \ent{Y_t|Y_{t-1},S_{t-1}}\\
    &\stackrel{(a)}{=} \mutinf{X_t;X_{t-1}} - \ent{Y_t} + \ent{S_{t-1}|Y_{t-1}} + \ent{Y_t|Y_{t-1},S_{t-1}}\\
     &\stackrel{(b)}{=} \mutinf{X_t;X_{t-1}} - \ent{Y_t} + \ent{Y_t,S_{t-1}|Y_{t-1}}\\
     &\stackrel{(c)}{=} \mutinf{X_t;X_{t-1}} - \ent{Y_t} + \ent{Y_t|Y_{t-1}}\\
     &\stackrel{(d)}{=} \mutinf{X_t;X_{t-1}} - \mutinf{Y_t;Y_{t-1}}
\end{align}
where $(a)$ is due to stationarity, $(b)$ due to the chain rule of entropy, $(c)$ since $S_{t-1}$ is a function of $Y_t$ and $Y_{t-1}$, and $(d)$ by the definition of mutual information. Since, finally,~\eqref{eq:alternative_min} optimizes over a larger feasible set (over $[u_i^j]_{i,j\in\idxset{Y}}$ rather than $[u_i]_{i\in\idxset{Y}}$), the minimum of~\eqref{eq:alternative_min} cannot exceed the minimum of~\eqref{eq:original}. We thus have with~\eqref{eq:finalequivalence} that
\begin{equation}
    \mutinf{X_t;X_{t-1}} - \mutinf{Y_t;Y_{t-1}} \leq \mutinf{X_t;X_{t-1}} - \sum_i p_i \kld{p_{i\to i}}{p_i}.
\end{equation}
This completes the proof.
\end{proof}

\subsection{Synwalk objective attains global optimum on network of isolated cliques}\label{app:proof_optimal}

\begin{proposition}\label{prop:optimal}
 Let $\graph$ be an unweighted network of disconnected cliques, defined by the partition $\clutrue$. Then, $\clutrue$ is a global optimizer of~\eqref{eq:si_objective}.
\end{proposition}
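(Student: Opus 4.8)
The plan is the following. Write $C_1,\dots,C_K$ for the cliques of $\graph$ and $n_i:=|C_i|\ge 2$; since $\graph$ is a disjoint union of cliques we have $K\ge 2$ (so every $p_i<1$), the walker never leaves its clique, and, $C_i$ being a regular component, the invariant distribution is uniform on it: $p_\alpha=p_i/n_i$ for $\alpha\in C_i$ and $p_{\alpha\to\beta}=1/(n_i-1)$ for distinct $\alpha,\beta\in C_i$. For $\dom{Y}=\clutrue$ we have $p_{i\to i}=1$, so $\objective{\clutrue}=\sum_i p_i\kld{1}{p_i}=-\sum_i p_i\log p_i=H(p^{\clutrue})$ with $p^{\clutrue}:=(p_1,\dots,p_K)$; the goal is thus to show $\objective{\dom{Y}}\le H(p^{\clutrue})$ for every partition $\dom{Y}$. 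By the proof of Proposition~\ref{prop:mainProp}, the minimizers $r^{*}_\alpha=p_\alpha/p_{m(\alpha)}$, $s^{*}_i=p_{i\not\to i}$ of $\kldr{P}{Q^\dom{Y}}$ do not depend on the inter-cluster law, so writing $Q^{\dom{Y},*}$ for $Q^\dom{Y}$ evaluated at $r^{*}$, $s^{*}$ and $u_i=p_i$ one has $\kldr{P}{Q^{\dom{Y},*}}=\mutinf{X_t;X_{t-1}}-\objective{\dom{Y}}$. A one-line computation gives $H(X_t)=H(p^{\clutrue})+\sum_i p_i\log n_i$ and $H(X_t|X_{t-1})=\sum_i p_i\log(n_i-1)$, hence $\mutinf{X_t;X_{t-1}}=H(p^{\clutrue})+\sum_i p_i\log\frac{n_i}{n_i-1}$, so $\objective{\dom{Y}}\le H(p^{\clutrue})$ is equivalent to
\begin{equation}\label{eq:optsk-goal}
\kldr{P}{Q^{\dom{Y},*}}\ \ge\ \sum_{i}p_i\log\frac{n_i}{n_i-1}
\end{equation}
(reading the divergence rate component-wise, or obtaining it by letting putative inter-clique weights tend to zero).

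To establish \eqref{eq:optsk-goal}, for $\alpha\in C_i$ set $c_\alpha:=\sum_{\beta\in C_i\setminus\{\alpha\}}q^{*}_{\alpha\to\beta}\le 1$, the $Q^{\dom{Y},*}$-probability of remaining in $C_i$ after one step. Since the $\alpha$-th row of $P$ is uniform on $C_i\setminus\{\alpha\}$, the log-sum inequality gives $\sum_{\beta}p_{\alpha\to\beta}\log\frac{p_{\alpha\to\beta}}{q^{*}_{\alpha\to\beta}}\ge\log\frac{1}{c_\alpha}$; weighting by $p_\alpha=p_i/n_i$ and then applying Jensen's inequality clique by clique,
\begin{equation}\label{eq:optsk-jensen}
\kldr{P}{Q^{\dom{Y},*}}\ \ge\ \sum_i\frac{p_i}{n_i}\sum_{\alpha\in C_i}\log\frac{1}{c_\alpha}\ \ge\ \sum_i p_i\log\frac{n_i}{\sum_{\alpha\in C_i}c_\alpha},
\end{equation}
so \eqref{eq:optsk-goal} follows once one proves, for every clique,
\begin{equation}\label{eq:optsk-key}
\sum_{\alpha\in C_i}c_\alpha\ \le\ n_i-1.
\end{equation}

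The hard part is \eqref{eq:optsk-key}. Writing $a:=m(\alpha)$ and $p_{a,i}:=\sum_{\beta\in\dom{Y}_a\cap C_i}p_\beta$, definition~\eqref{eq:qmat} (at $r^{*},s^{*}$, $u=p$) gives $q^{*}_{\alpha\to\beta}=\tfrac{p_\beta}{p_a}p_{a\to a}$ when $m(\beta)=a$ and $q^{*}_{\alpha\to\beta}=\tfrac{p_\beta}{p_{\neg a}}p_{a\not\to a}$ otherwise, whence
\begin{equation}\label{eq:optsk-calpha}
c_\alpha=\frac{p_{a,i}-p_\alpha}{p_a}\,p_{a\to a}+\frac{p_i-p_{a,i}}{p_{\neg a}}\,p_{a\not\to a}.
\end{equation}
Summing \eqref{eq:optsk-calpha} over $\alpha\in C_i$ (using $p_\alpha=p_i/n_i$ and $|\dom{Y}_a\cap C_i|=n_ip_{a,i}/p_i$) recasts \eqref{eq:optsk-key} as an explicit inequality in the intersection sizes $|\dom{Y}_a\cap C_i|$ and the cluster masses $p_a$. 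I would then bound $p_{a\to a}/p_a=p_{a,a}/p_a^2$ and $p_{a\not\to a}/p_{\neg a}$ using $p_a\ge p_{a,i}$ together with $p_{a,a}\le\sum_j p_j\big(|\dom{Y}_a\cap C_j|/n_j\big)^2$, after which \eqref{eq:optsk-key} should reduce, through a somewhat lengthy but elementary summation, to the single fact $p_i\le 1$ --- which is precisely where the hypothesis $K\ge 2$ enters, and consistently for $K=1$ the proposition fails (a single clique has $\objective{\clutrue}=0$ yet admits partitions with $\objective{\dom{Y}}>0$). Since for $\dom{Y}=\clutrue$ one checks $c_\alpha=1-1/n_i$, all of \eqref{eq:optsk-goal}--\eqref{eq:optsk-key} are tight, confirming that $\clutrue$ attains the maximum. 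I would isolate \eqref{eq:optsk-key} as a standalone lemma, as it is the only step whose verification is not routine.
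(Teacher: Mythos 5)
Your route diverges from the paper's at the modelling of the clique-internal walk, and that divergence is what sinks it. The paper's proof explicitly includes self-loops in the link set, so that from any node the walker moves uniformly over \emph{all} $n_i$ nodes of its clique, itself included. Under that convention $\ent{X_t|X_{t-1}}=\sum_i p_i\log n_i$, hence $\mutinf{X_t;X_{t-1}}=H(p^{\clutrue})=\objective{\clutrue}$, and the ground-truth partition attains the partition-independent upper bound $\objective{\dom{Y}}\le\mutinf{X_t;X_{t-1}}$ already established in Proposition~\ref{prop:bounds}; global optimality follows in two lines. By removing the self-loops you obtain $\mutinf{X_t;X_{t-1}}=H(p^{\clutrue})+\sum_i p_i\log\tfrac{n_i}{n_i-1}>\objective{\clutrue}$, the generic bound is no longer tight at $\clutrue$, and you are forced into the direct lower bound on $\kldr{P}{Q^{\dom{Y},*}}$ that occupies the rest of your argument.

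That direct bound does not go through. All of its weight rests on the lemma $\sum_{\alpha\in C_i}c_\alpha\le n_i-1$, which you leave as a plan (``should reduce \dots\ to the single fact $p_i\le 1$''); it is in fact false. Take two cliques with $|C_1|=2$, $|C_2|=100$, the invariant distribution with $p^\bullet=(\epsilon,1-\epsilon)$ --- admissible for every $\epsilon\in(0,1)$ since the chain is reducible --- and the partition $\dom{Y}_a=C_1\cup\{v\}$, $\dom{Y}_b=C_2\setminus\{v\}$ for some $v\in C_2$. Your own expression for $c_\alpha$ yields, after summing over $C_2$,
\begin{equation*}
\sum_{\alpha\in C_2}c_\alpha \;=\; \frac{2(1-\epsilon)}{1+99\epsilon}+\frac{98^2}{99},
\end{equation*}
which exceeds $n_2-1=99$ for every $\epsilon<1/19701$ (as $\epsilon\to0$ it tends to $99.0101\ldots$). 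So the inequality is not a consequence of $p_i\le1$, and the chain log-sum $\to$ Jensen $\to$ key lemma cannot be completed; whether your self-loop-free variant of the proposition even holds for all invariant distributions is left open by your argument (your reduction of the statement to the bound on $\kldr{P}{Q^{\dom{Y},*}}$ is itself correct, but you have no valid proof of that bound). If you adopt the paper's self-loop convention, the entire difficulty evaporates and the result follows directly from Proposition~\ref{prop:bounds}.
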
 

\begin{proof}
We prove Proposition~\ref{prop:optimal} by first deriving the upper bound of the Synwalk objective for the given type of network, and then showing that the ground truth partition achieves this upper bound.

Let $\graph$ be an unweighted network of disconnected cliques, i.e., there exists a partition function $\clutrue$ such that the link set $E$ of $\graph$ equals $\bigcup_{i\in\mathcal{K}^{\clutrue}} (\clutrue_i)^2$. Note that self-loops are included in the link set. Since the network is unweighted, movement within cliques and the invariant distribution within each clique are uniform. Since the resulting Markov chain $\{X_t\}$ is not irreducible, infinitely many invariant distributions exist for the entire alphabet $\dom{X}$. Specifically, for every distribution $p^\bullet=[p^\bullet_i]_{i\in\mathcal{K}^{\clutrue}}$ over the ground truth communities, the distribution on $\dom{X}$ defined by $p_\alpha=p^\bullet_{m(\alpha)}/|\clutrue_{m(\alpha)}|$ is invariant under $P$. Thus, we have that
 \begin{align}
  \ent{X_t} &= -\sum_{\alpha\in\dom{X}} \frac{p^\bullet_{m(\alpha)}}{|\clutrue_{m(\alpha)}|} \log \frac{p^\bullet_{m(\alpha)}}{|\clutrue_{m(\alpha)}|} = -\sum_{i\in\idxset{\dom{Y}^\bullet}} p_i^\bullet \log\frac{p_i^\bullet}{|\clutrue_i|}.
 \end{align}
Given that the random walker is currently at node $\alpha$ in cluster $\clutrue_i$, all nodes in this cluster (including $\alpha$) are equally likely to be visited in the next step. Thus, it follows that $\ent{X_t|X_{t-1}=\alpha} = \log |\clutrue_{m(\alpha)}|$. Combining this with $\ent{X_t}$ yields the upper bound
\begin{align}
    \mutinf{X_t;X_{t-1}} &= \ent{X_t} - \ent{X_t | X_{t-1}} \\
                         &= \ent{X_t} - \sum_{\alpha\in\dom{X}} p_\alpha \ent{X_t | X_{t-1} = \alpha} \\
                         &= \ent{Y^\bullet},
\end{align}
where $Y^\bullet$ is a random variable with distribution $p^\bullet$.
 
Since the clusters coincide with the cliques in $\graph$ there are no links connecting one cluster with another. Hence, we have $p_{i\to i}=1$ for every cluster and thus $\objective{\clutrue}=\ent{Y^\bullet}=\mutinf{X_t;X_{t-1}}$, i.e., the upper bound from~\eqref{eq:bounds} is achieved.
\end{proof}

Note that a coarsening\footnotemark of the partition $\clutrue$ does not achieve this optimum. Indeed, while for $\widetilde{\dom{Y}}$ being a coarsening of $\clutrue$ it still holds that $\widetilde{p}_{i\to i}=1$, we get 
 \begin{equation}
  \widetilde{p}_i=\sum_{j{:}  \clutrue_j\subseteq \widetilde{\dom{Y}}_i} p_j^\bullet 
 \end{equation}
and thus $\objective{\widetilde{\dom{Y}}} = \ent{\widetilde{Y}} < \ent{Y^\bullet}$ due to data processing~\citep[Problem~2.4]{Cover2006}.

\footnotetext{We call a partition $\widetilde{\dom{Y}} = \{\widetilde{\dom{Y}}_1,\dots,\widetilde{\dom{Y}}_M\}$ a \emph{coarsening} of another partition $\dom{Y} = \{\dom{Y}_1,\dots,\dom{Y}_K\}$ if each cluster in the coarsening $\widetilde{\dom{Y}}_i \in \widetilde{\dom{Y}}$ is comprised of one or more clusters of the original partition $\dom{Y}$ and $M < K$.}
	\section{Additional Results on the LFR benchmark}\label{appx:additional_lfr}

Here, we provide additional experiments on the LFR benchmark by investigating the AMI performance as a function of the network density.

In this experiment we again use parameter set A (see Table~\ref{tab:lfr}) to generate the LFR benchmark networks. We now fix the average degree and the mixing parameter of the generated LFR networks while varying their sizes in range $N \in [300, 19200]$. We then plot the AMI as a function of the network density $\rho$ in Figure~\ref{fig:lfr_ami_vs_n}. All three methods show improved AMI for increasing network densities, whereas Infomap's performance still mostly depends on the mixing parameter. Apparently, for Synwalk and Walktrap there exist transition phases from low to high AMI with increasing network density. Interestingly, the location of these transition phases shifts to higher network densities the higher the absolute value of the average degree (Figure~\ref{fig:lfr_ami_vs_n}, rows from left to right; cf. Section~\ref{sssec:ami_vs_mu}). In addition, the AMI results for Synwalk consistently rise above those of Walktrap during these transition phases. For very small network densities Synwalk performance drops faster when compared to Walktrap. In summary, for a given average degree and mixing parameter Synwalk performs better than or equal to Walktrap and Infomap, if the network size/density is sufficiently small/large.

\begin{figure*}[t]
    \centering
    \begin{subfigure}{0.32\textwidth}
        \centering
        \includegraphics[width=1.0\textwidth]{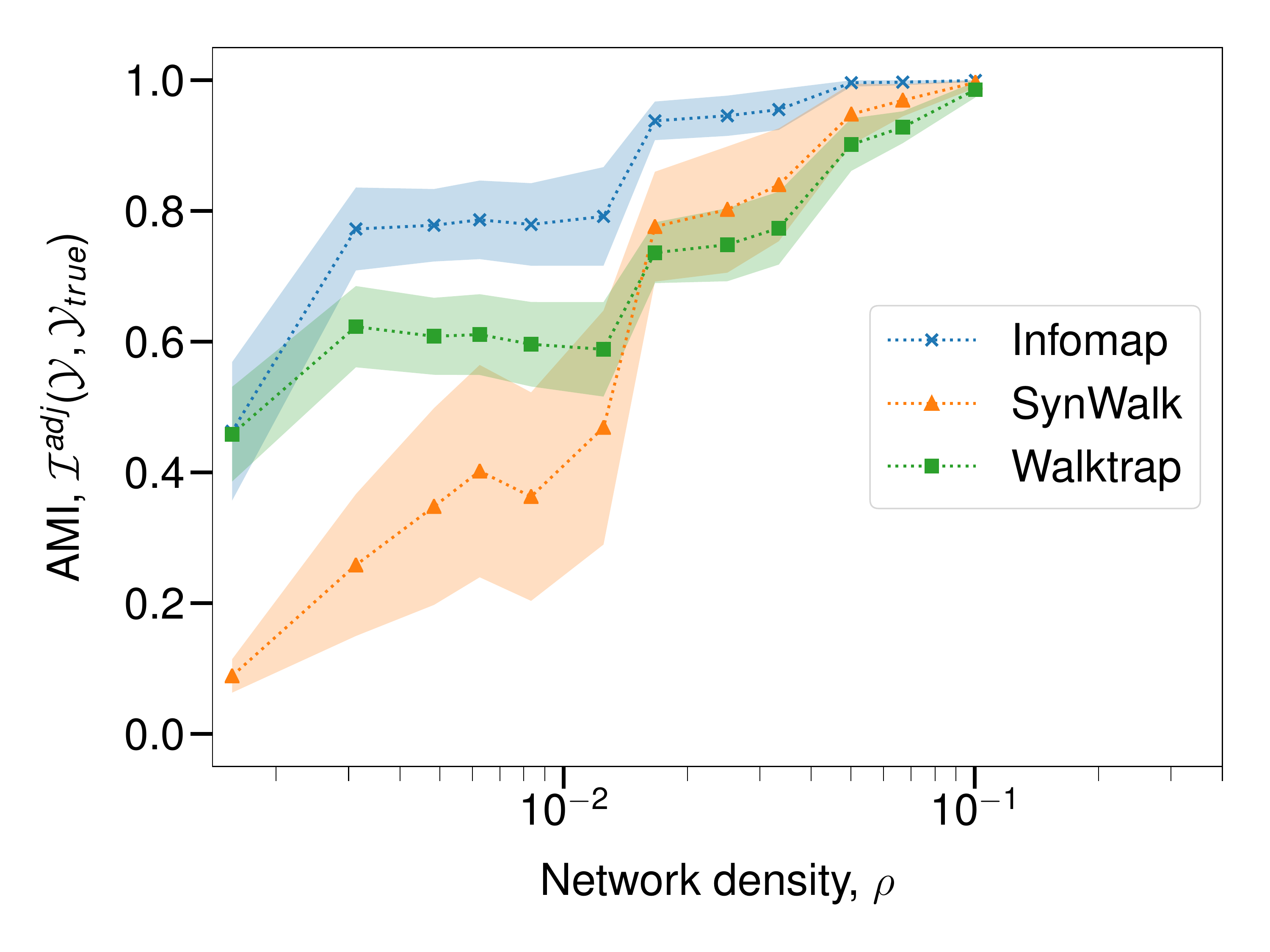}
        \caption{$\kavg = 15$, $\mu = 0.35$.}
    \end{subfigure}
   \begin{subfigure}{0.32\textwidth}
        \centering
        \includegraphics[width=1.0\textwidth]{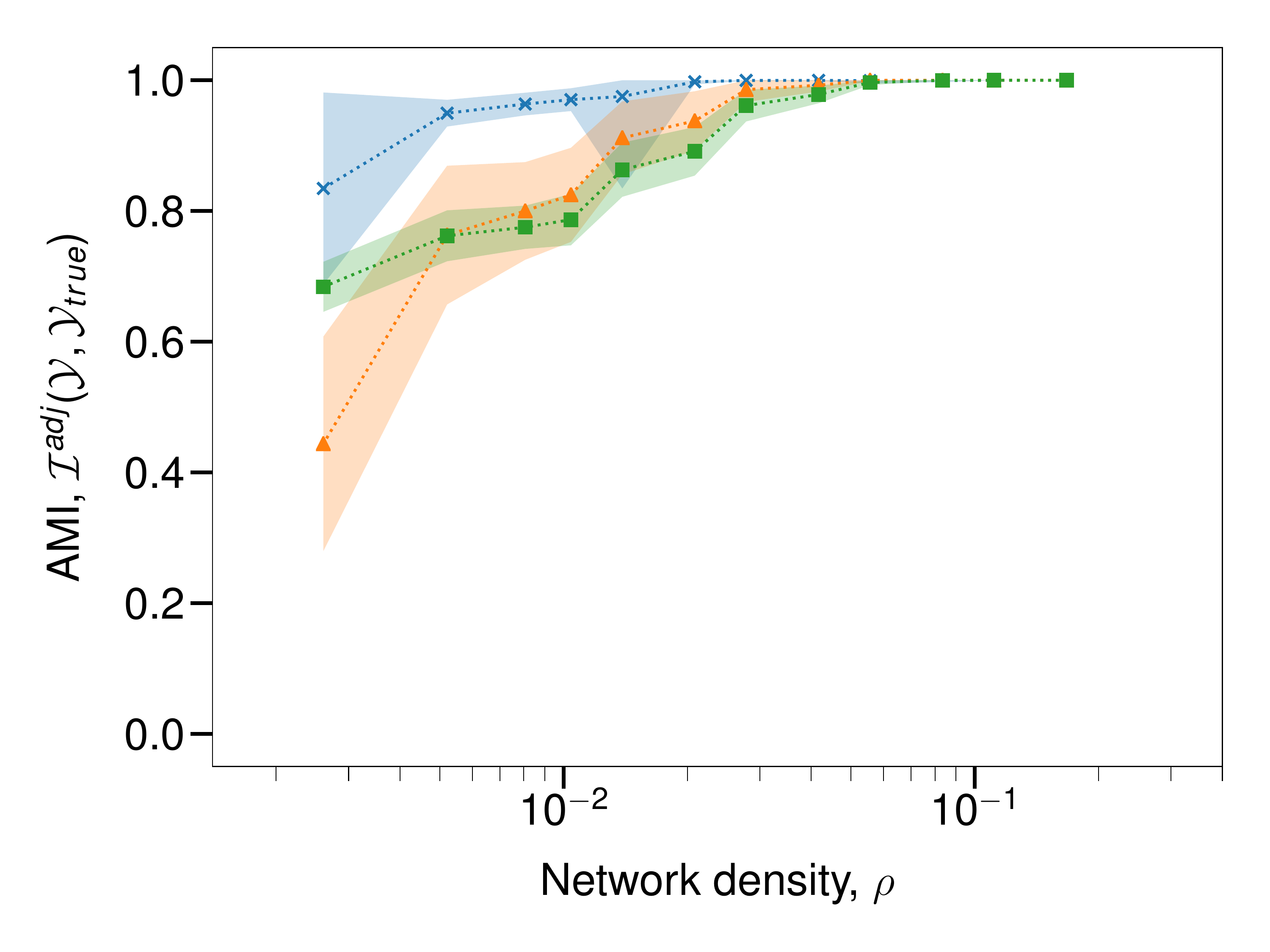}
        \caption{$\kavg = 25$, $\mu = 0.35$.}
    \end{subfigure}
    \begin{subfigure}{0.32\textwidth}
        \centering
        \includegraphics[width=1.0\textwidth]{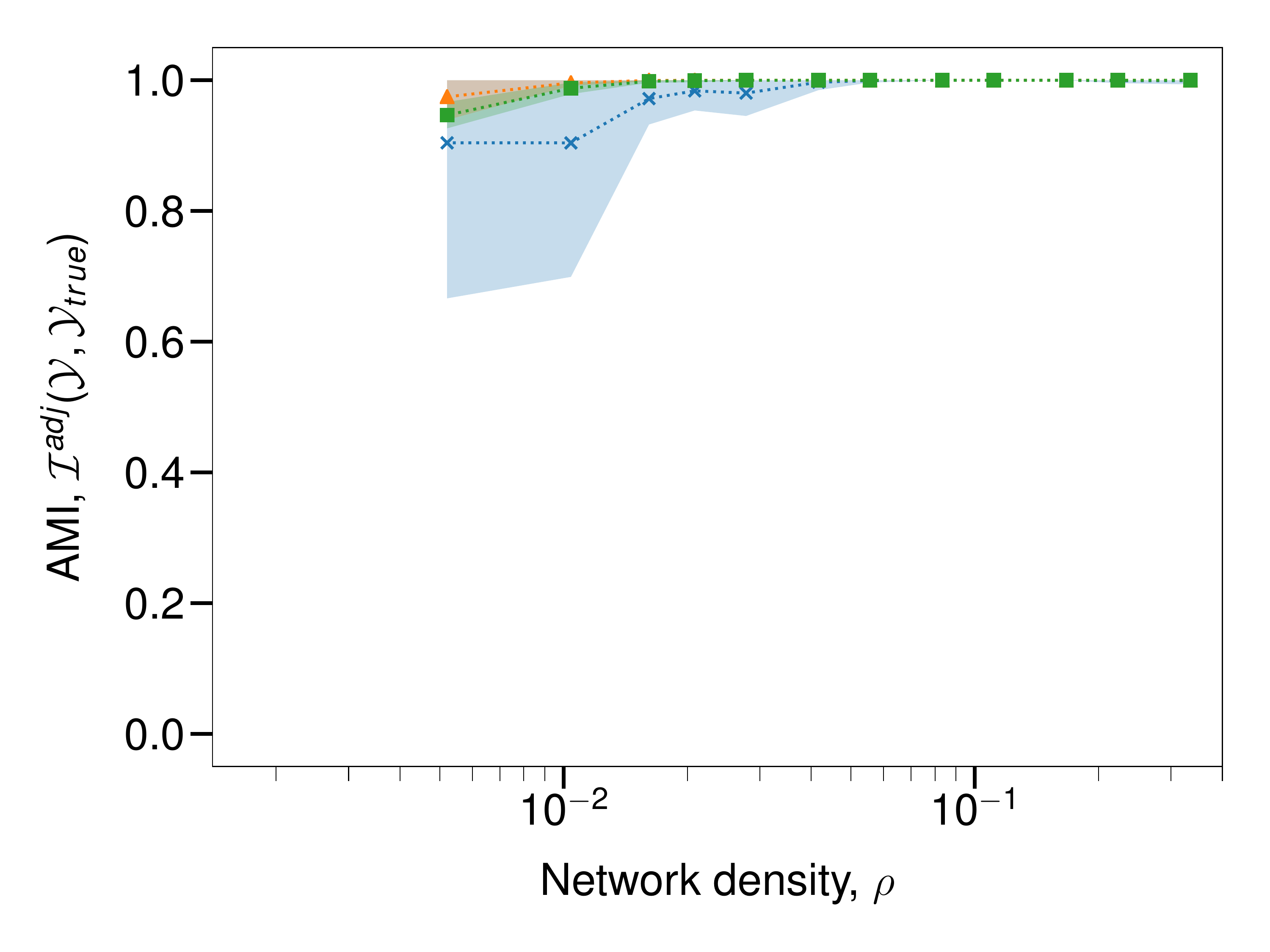}
        \caption{$\kavg = 50$, $\mu = 0.35$.}
    \end{subfigure}
    \\
    \begin{subfigure}{0.32\textwidth}
        \centering
        \includegraphics[width=1.0\textwidth]{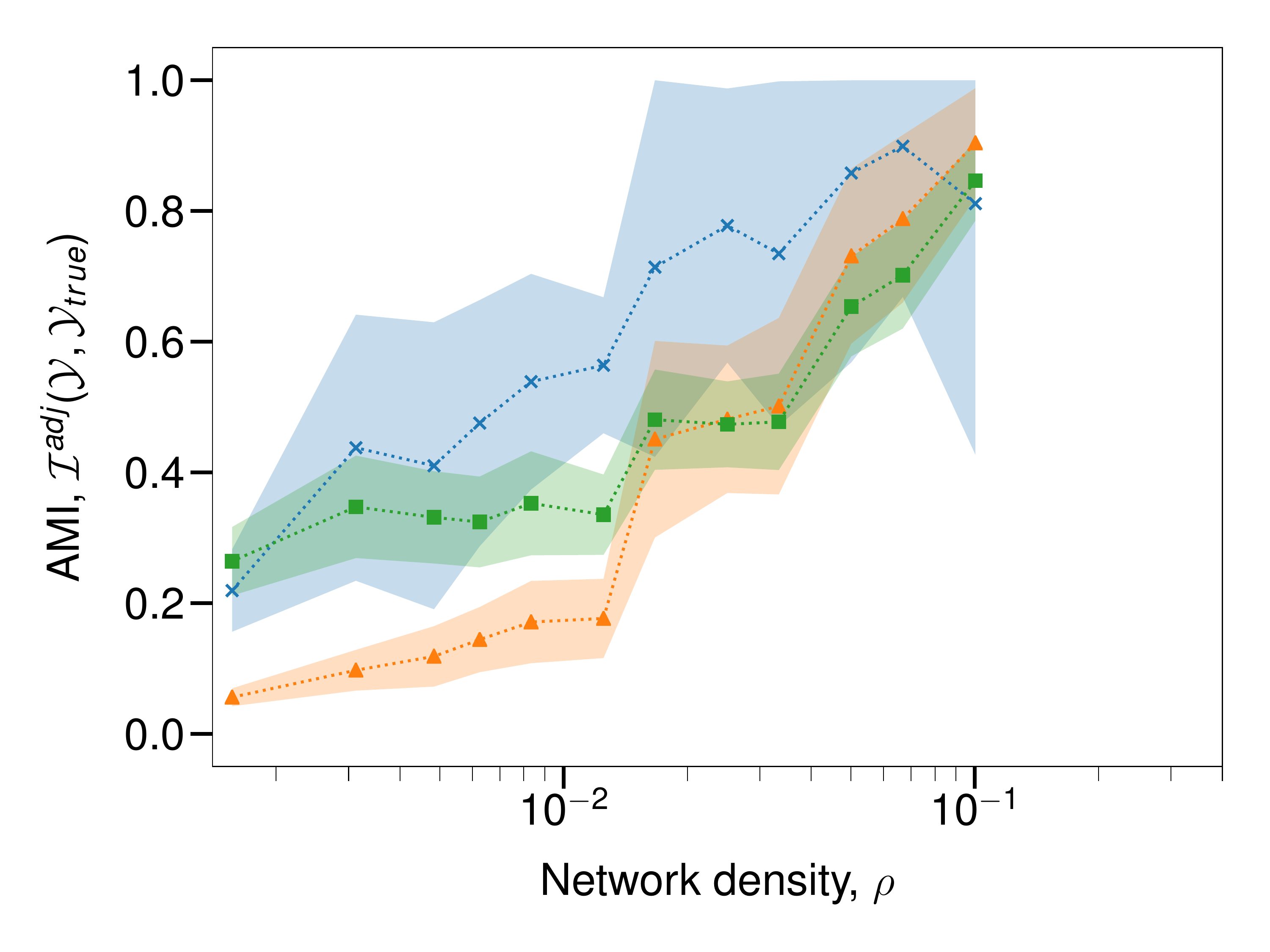}
        \caption{$\kavg = 15$, $\mu = 0.45$.}
    \end{subfigure}
    \begin{subfigure}{0.32\textwidth}
        \centering
        \includegraphics[width=1.0\textwidth]{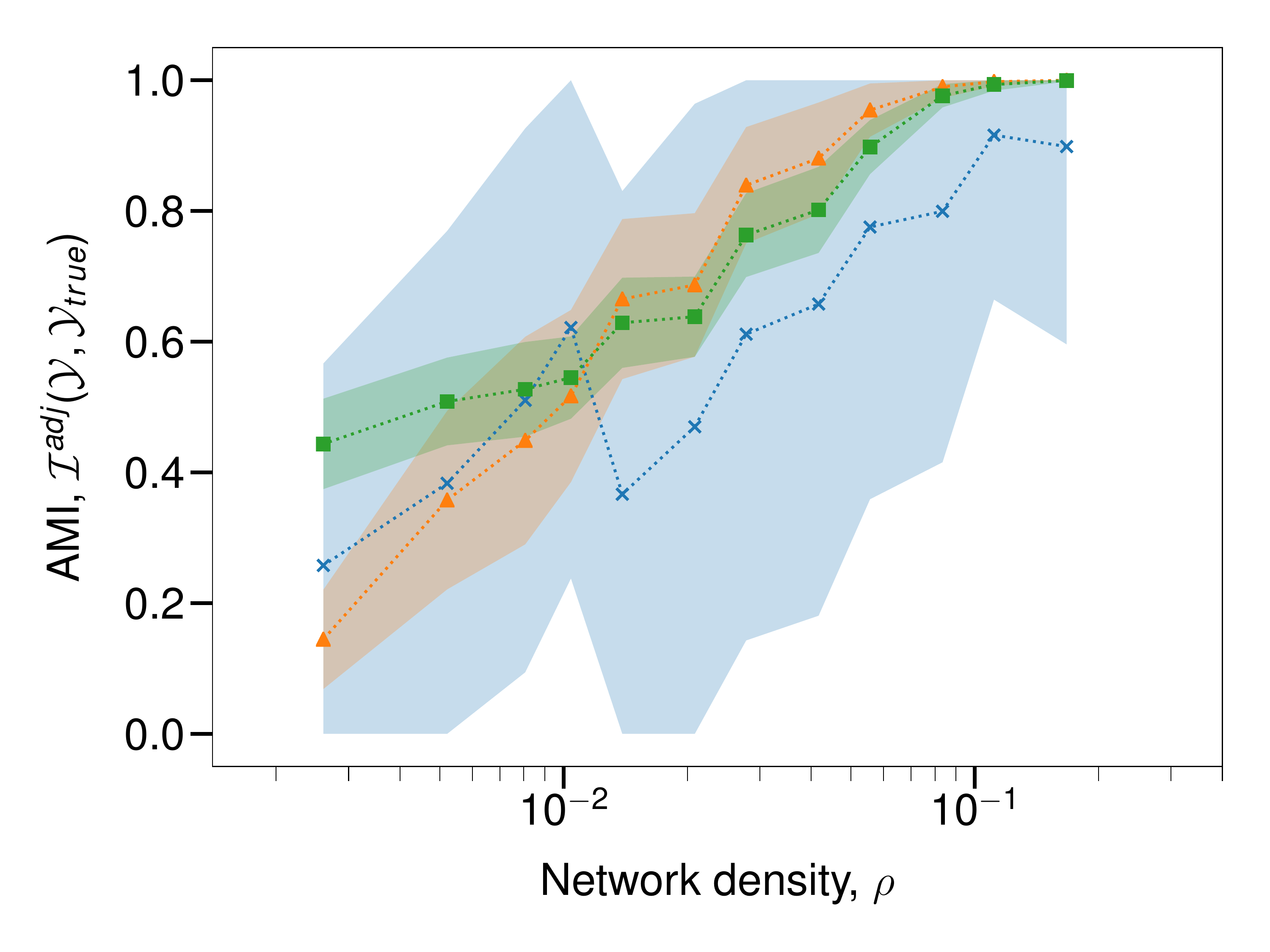}
        \caption{$\kavg = 25$, $\mu = 0.45$.}
    \end{subfigure}
    \begin{subfigure}{0.32\textwidth}
        \centering
        \includegraphics[width=1.0\textwidth]{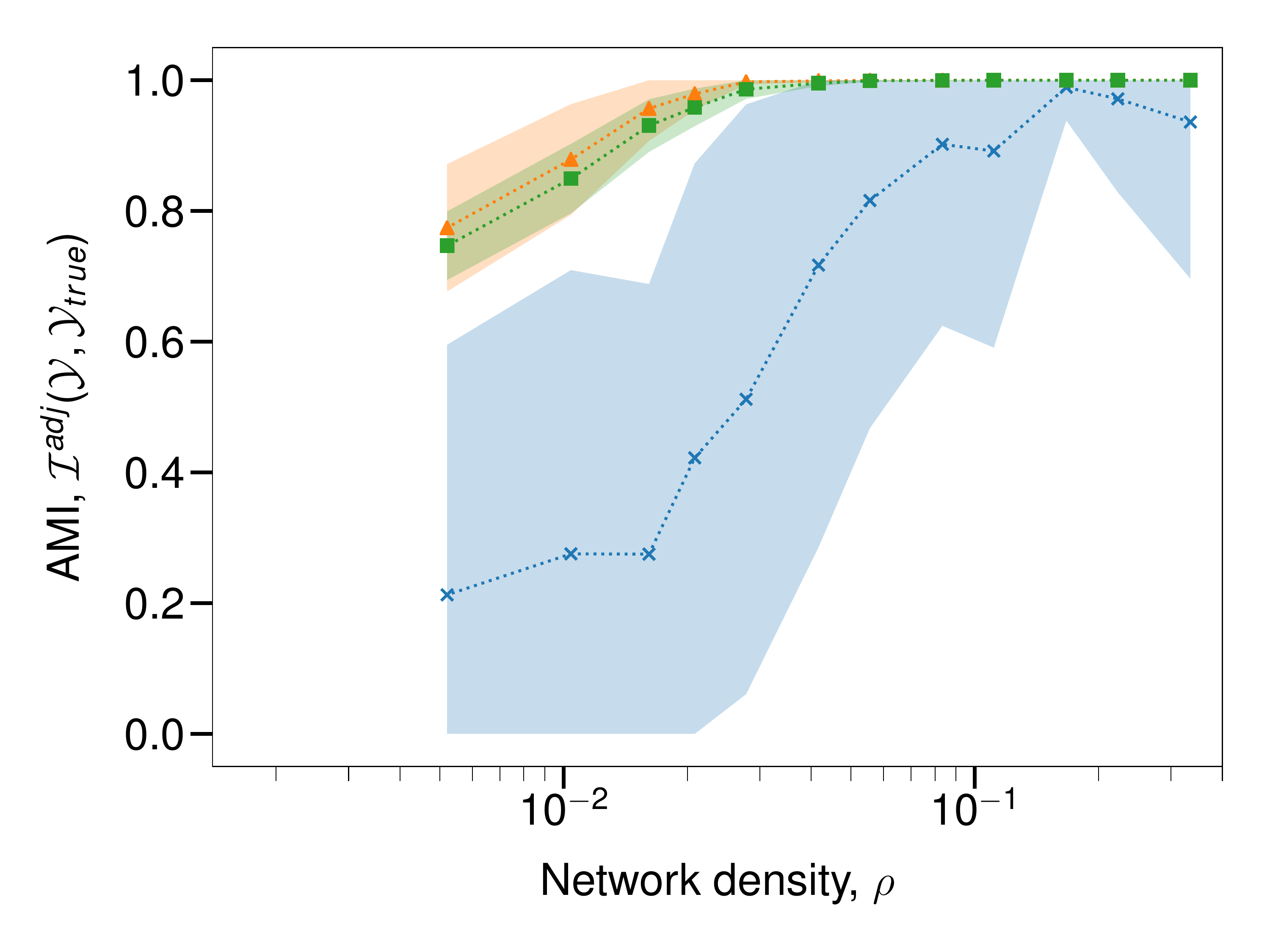}
        \caption{$\kavg = 50$, $\mu = 0.45$.}
    \end{subfigure}
    \\
    \begin{subfigure}{0.32\textwidth}
        \centering
        \includegraphics[width=1.0\textwidth]{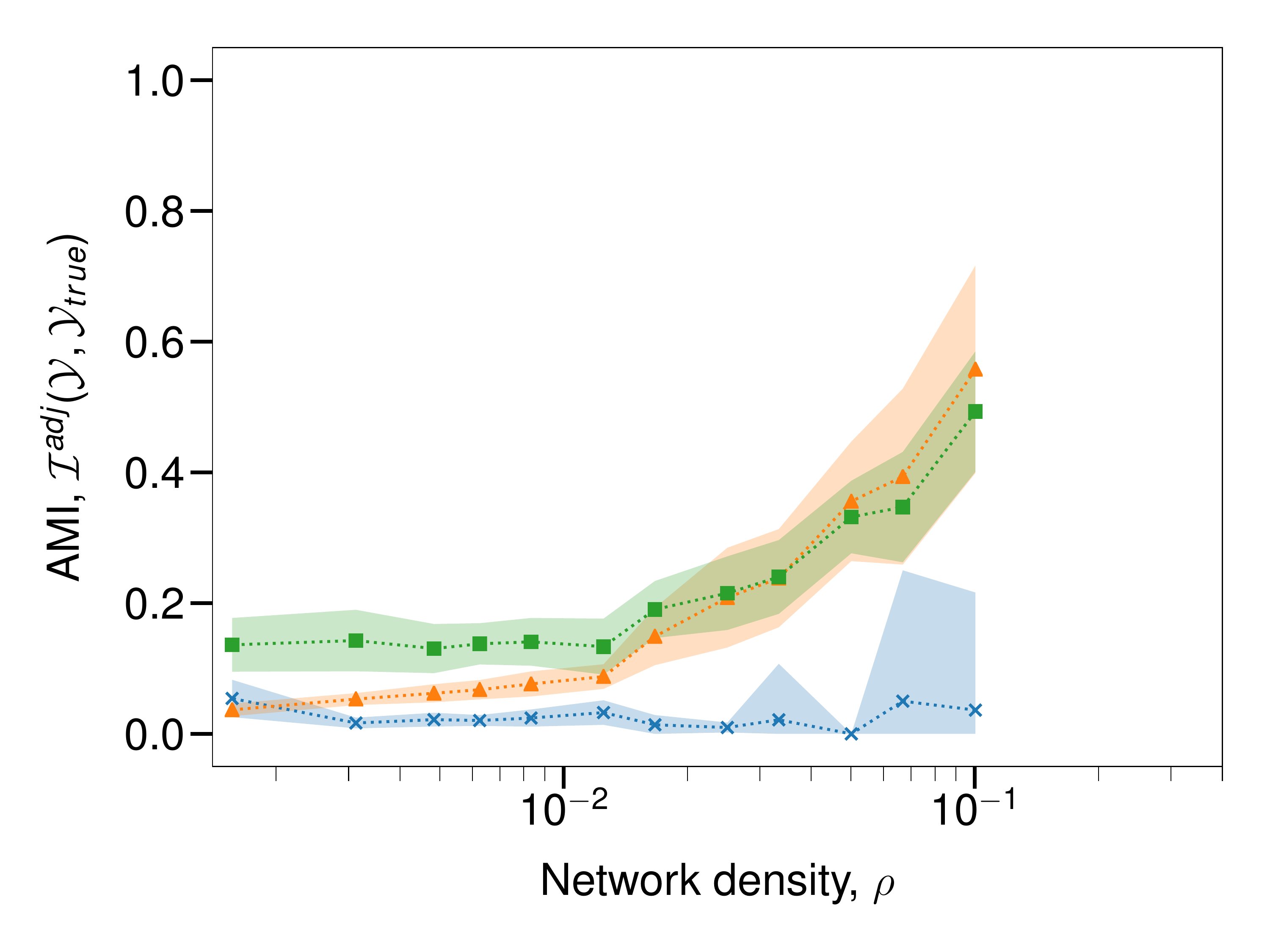}
        \caption{$\kavg = 15$, $\mu = 0.55$.}
    \end{subfigure}
    \begin{subfigure}{0.32\textwidth}
        \centering
        \includegraphics[width=1.0\textwidth]{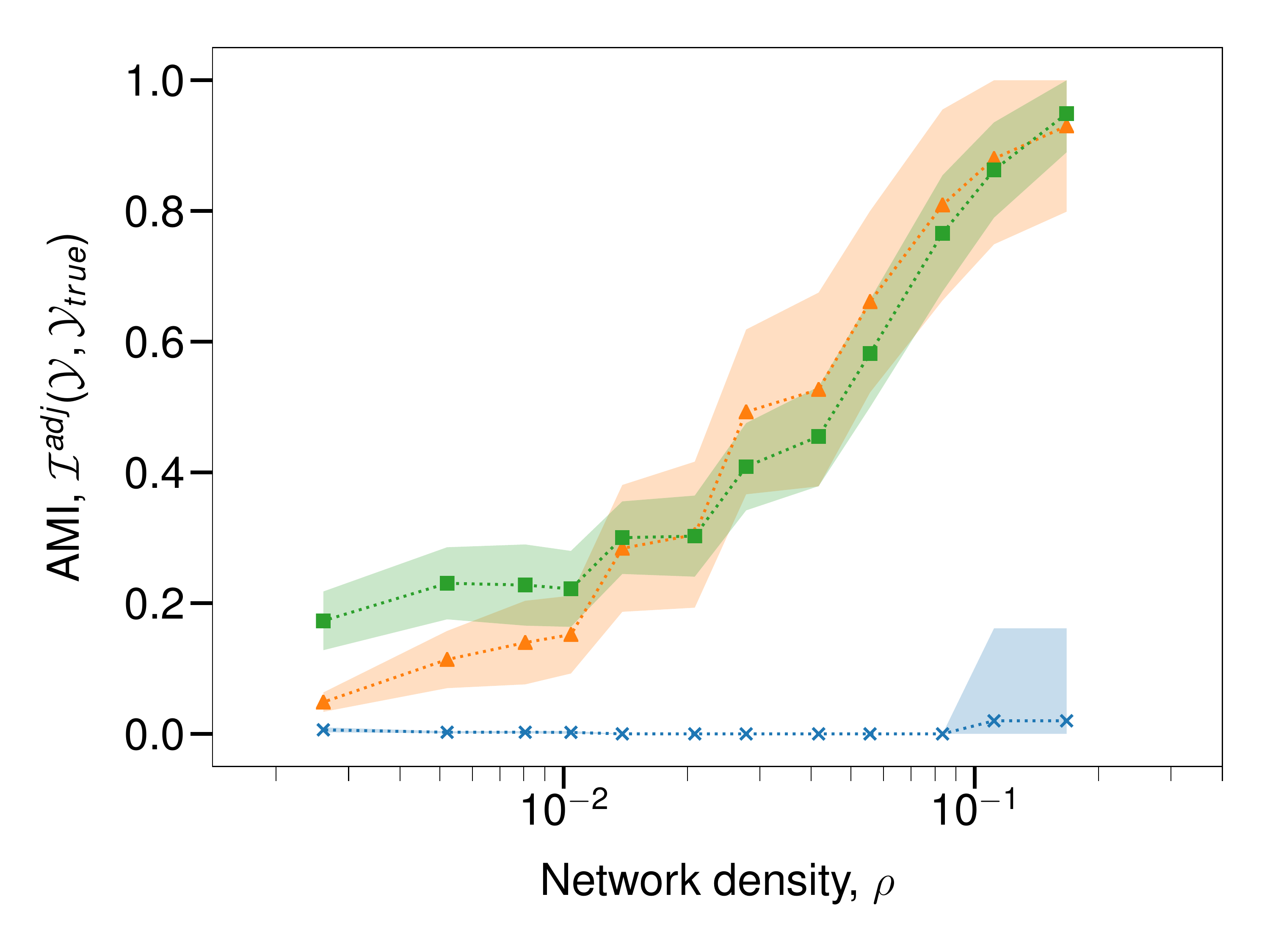}
        \caption{$\kavg = 25$, $\mu = 0.55$.}
    \end{subfigure}
    \begin{subfigure}{0.32\textwidth}
        \centering
        \includegraphics[width=1.0\textwidth]{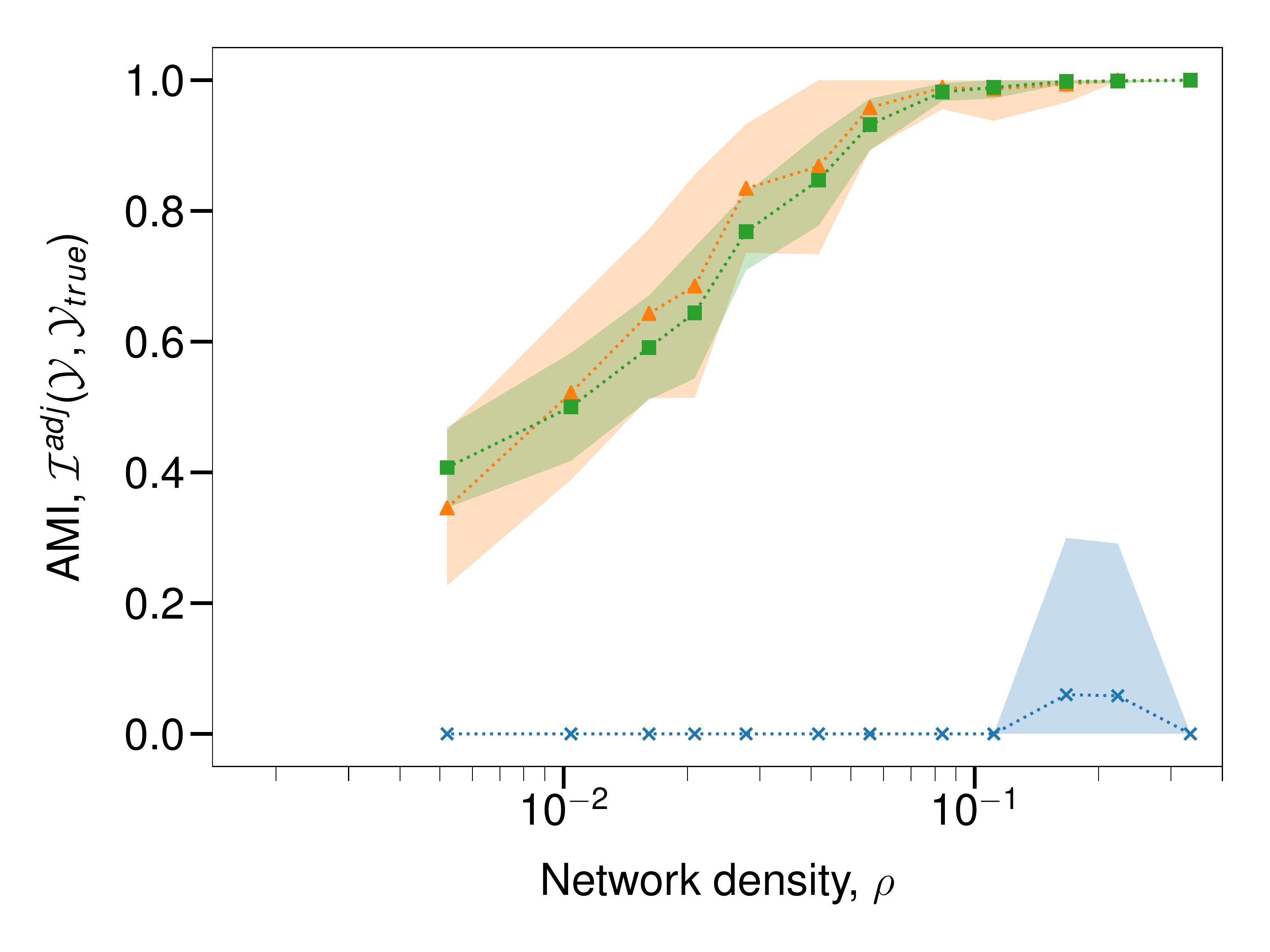}
        \caption{$\kavg = 50$, $\mu = 0.55$.}
    \end{subfigure}
    \caption{Comparison of Infomap, Synwalk and Walktrap on LFR benchmark networks with given average degree and mixing parameter. 
    The lines and shaded areas show the mean and standard deviation of AMI as a function of the network density (logarithmic scale), obtained from $50$ different network realizations. Performance of Synwalk and Walktrap rises with increasing network density, irrespective of the mixing parameter.}\label{fig:lfr_ami_vs_n}
\end{figure*}
	
\section{Cluster Matching Algorithm}\label{appx:cluster_matching}
Consider the ground truth clustering $\clutrue = \{\dom{Y}_i | i \in\idxset{\clutrue}\}$ and a prediction (by any community detection algorithm) $\clupred = \{\dom{Y}_i | i \in\idxset{\clupred}\}$ with index sets $\idxset{\clutrue}$ and $\idxset{\clupred}$ respectively. Then, in general, not only will the index sets differ in size but moreover they will encode equal or highly similar clusters in both clusterings differently. Hence, a matching between the index sets is necessary in our analyses (see Section~\ref{sssec:node_statistics}).

For this purpose we employ the following greedy matching algorithm. The matching is based on the contingency table between the two clusterings $\clutrue$, $\clupred$. In every step we add an entry to the cluster mapping by greedily picking the clusters with maximum overlap from the contingency table that do not already have a match. The procedure stops whenever all clusters of any input clustering have a match. The residual clusters obtain no mapping.

\begin{algorithm}[t]
\SetAlgoLined
\caption{Greedy Cluster Matching Algorithm}
\SetKwFunction{ctable}{contingency\_table}
\SetKwFunction{keys}{keys}
\SetKwFunction{values}{values}
\SetKw{or}{or}
\KwIn{True clustering $\clutrue$, predicted clustering $\clupred$}
\KwOut{Dictionary M: $\idxset{\clutrue} \mapsto \idxset{\clupred}$}
 Initialize the dictionary $M \leftarrow \emptyset$  \\
 Initialize the number of true clusters $K^{true} = \max \idxset{\clutrue}$ \\
 Initialize the number of predicted clusters $K^{pred} = \max \idxset{\clupred}$    \\
 Generate the contingency table $C \leftarrow \ctable{$\clutrue, \clupred$}$  \\
 \For{$i\leftarrow 1$ \KwTo $\min \{K^{true}, K^{pred}\}$}{
  $k^{true} \leftarrow -1$    \\
  $k^{pred} \leftarrow -1$    \\
  \While{$k^{true} < 0$ \or $k^{true} \in M.\keys{}$ \or $k^{pred} \in M.\values{}$}{
    $k^{true}, k^{pred} \leftarrow \argmax{C}$  \\
    $C[k^{true}, k^{pred}] \leftarrow -1$    \\
  }
  $M[k^{true}] \leftarrow k^{pred}$
 }
\end{algorithm}
	\section{Additional Results on the Empirical Networks}\label{appx:additional_empirical}

In this appendix we provide additional cluster and node statistics for the empirical networks considered in Section~\ref{sec:realworld}. The considered cluster properties are a subset of previously used measures for cluster characterization by~\citet{Leskovec2010} and~\citet{Yang2015}. Consider a cluster $S \subseteq \dom{X}$ within an undirected network $\graph = (\dom{X}, E)$ and let
\begin{align}
    m_s = \frac{1}{2}  \cdot \left | \{(\alpha, \beta)\in E \,|\, \alpha,\beta\in S\} \right |
\end{align}
denote the number of internal links of $S$, and
\begin{align}
    c_s = \left | \{(\alpha, \beta)\in E \,|\, \alpha\in S \wedge \beta\notin S\} \right |
\end{align}
denote the number of external links, i.e., links connecting nodes within $S$ to nodes outside of $S$. Then we define the following cluster properties.
\begin{itemize}
    \item \emph{Cluster density}. The cluster density
    \begin{align}
        \rho(S) = \frac{m_s}{\binom{|S|}{2}}
    \end{align} is the density of internal links in $S$.
    
    \item \emph{Clustering coefficient}. The clustering coefficient
    \begin{align}
        c(S) = \frac{1}{|S|} \sum_{\alpha\in S} c_\alpha
    \end{align} is the average of all node clustering coefficients $c_\alpha$. Let $neigh(\alpha)$ denote the neighborhood of node $\alpha$, i.e., all nodes that are connected to $\alpha$ by a link. Then the clustering coefficient for some node $\alpha$ is the fraction of realized triangles including $\alpha$, i.e.,
    \begin{align}
        c_\alpha = \frac{\frac{1}{2} \cdot \left | \{(\alpha, \beta)\in E \,|\, \alpha,\beta\in neigh(\alpha)\} \right |}{ \binom{|neigh(\alpha)|}{2}}.
    \end{align}

    \item \emph{Conductance}. The conductance
    \begin{align}
        \kappa(S) = \frac{c_s}{m_s + c_s}
    \end{align}
    gives the fraction of external links to the total number of cluster edges.

    \item \emph{Cut ratio}. The cut ratio
    \begin{align}
        \xi(S) = \frac{c_s}{|S| \cdot (|\dom{X}| - |S|)}
    \end{align}
    is the ratio of external links to all possible external links.
    
\end{itemize}

As already observed in Section~\ref{sec:realworld}, in general the cluster property statistics yielded by Synwalk and Infomap are highly similar, whereas Walktrap is mostly clearly distinguishable. The cluster densities (Figure~\ref{fig:empirical_densities}) are inverse proportionally distributed to the cluster sizes, e.g., Walktrap detects many small communities with high densities. Clustering coefficients (Figure~\ref{fig:empirical_clustering_coefficients}) are higher for Walktrap on the github and wordnet networks when compared to Synwalk and Infomap. Synwalk predicts highly conductive clusters for the github network (see conductance distributions in Figure~\ref{fig:empirical_conductances}). 
Walktrap yields significantly smaller cut ratios (Figure~\ref{fig:empirical_cut_ratios}) compared to Infomap and Synwalk on the lastfm-asia and pennsylvania-roads networks. Distributions of the mixing parameters are displayed in Figure~\ref{fig:empirical_mixing_parameters} and show no notable difference between the three methods. An exception is the github network, where Synwalk predicts a clustering such that most of the nodes exhibit a mixing parameter $\gtrsim 0.5$. This reflects the high conductance values in Figure~\ref{fig:empirical_conductances}.

\begin{figure*}[t]
    \centering
    \begin{subfigure}{0.32\textwidth}
        \centering
        \includegraphics[width=1.0\textwidth]{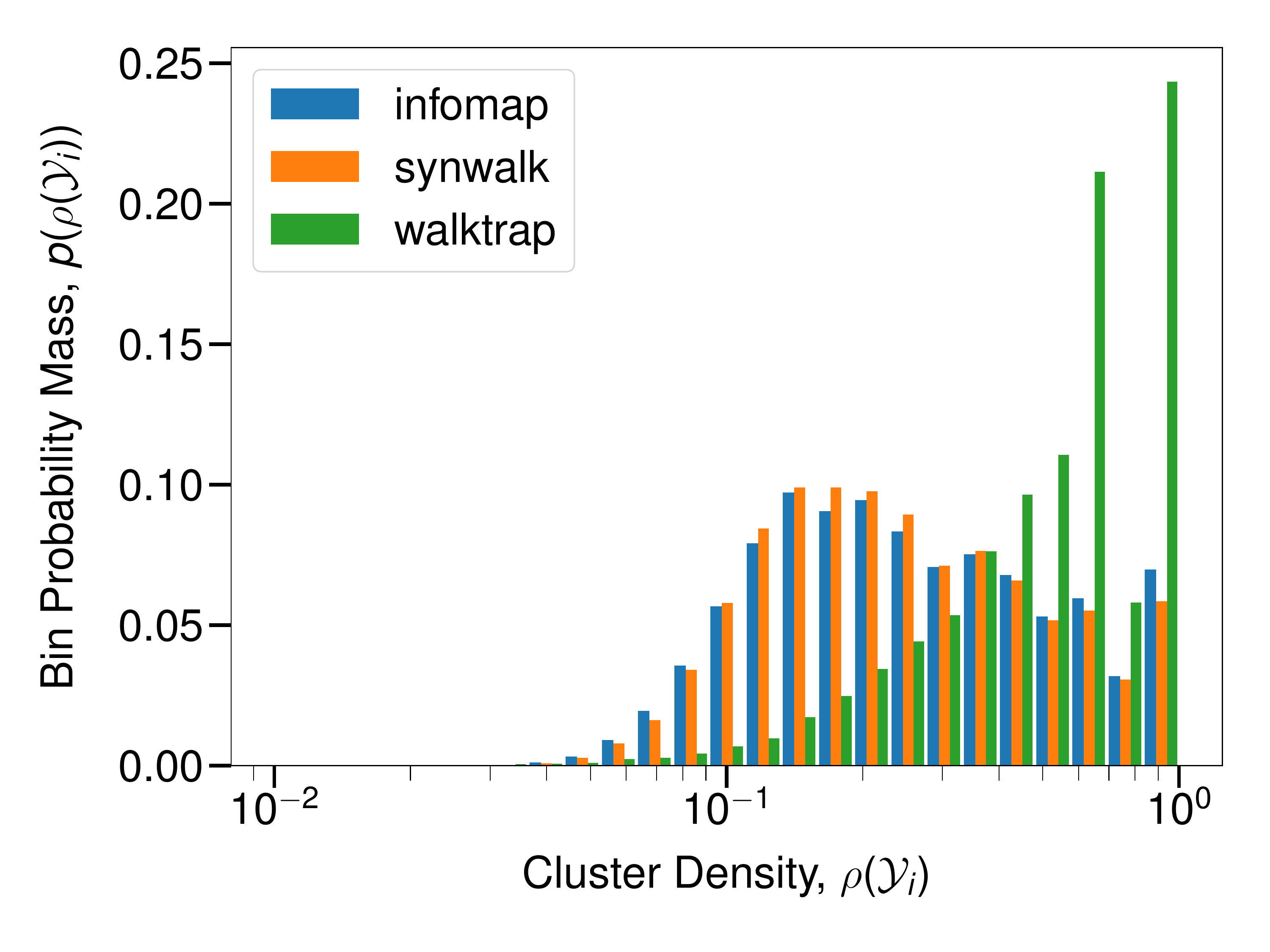}
        \caption{dblp.}
    \end{subfigure}
   \begin{subfigure}{0.32\textwidth}
        \centering
        \includegraphics[width=1.0\textwidth]{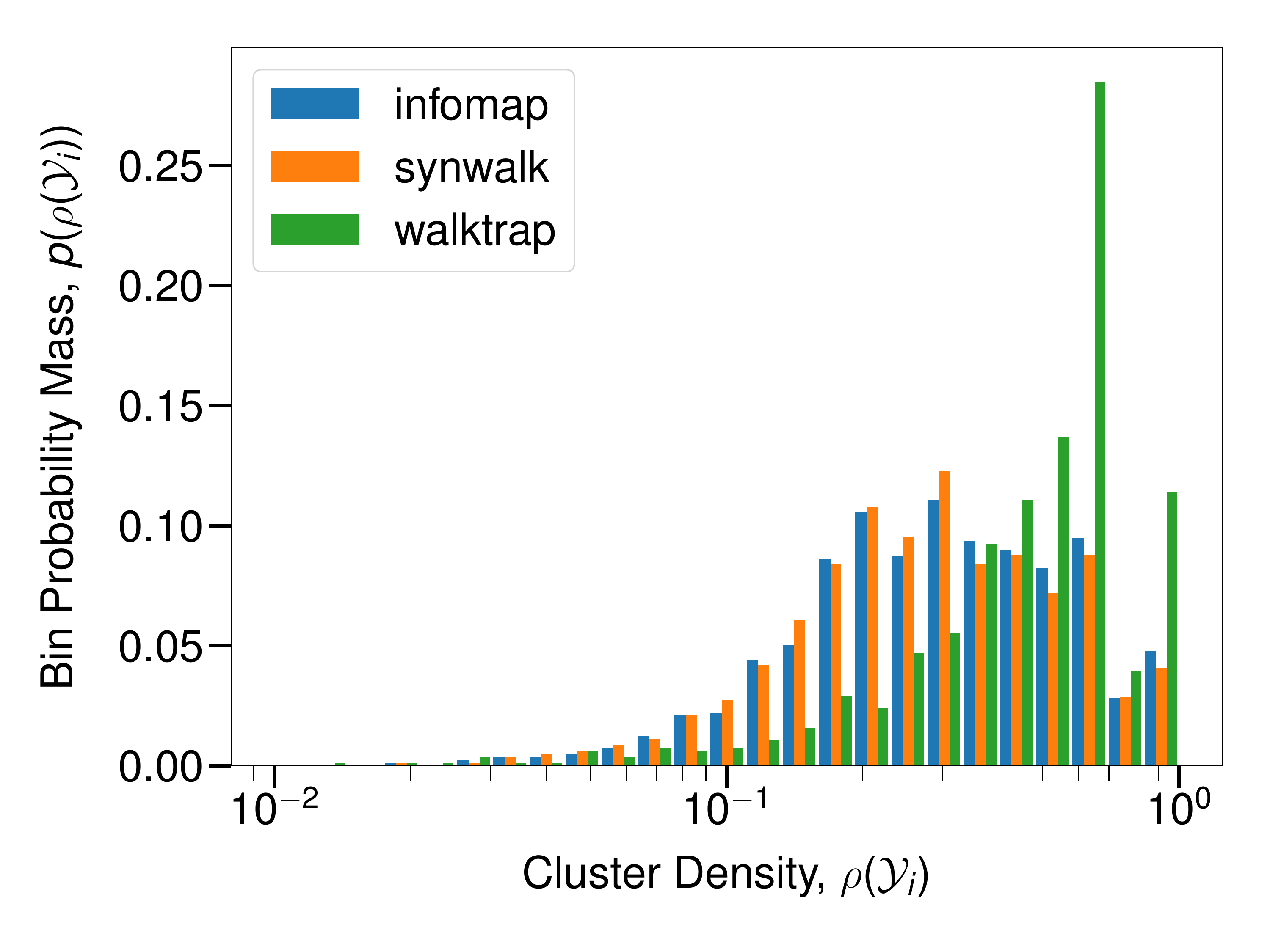}
        \caption{facebook.}
    \end{subfigure}
    \begin{subfigure}{0.32\textwidth}
        \centering
        \includegraphics[width=1.0\textwidth]{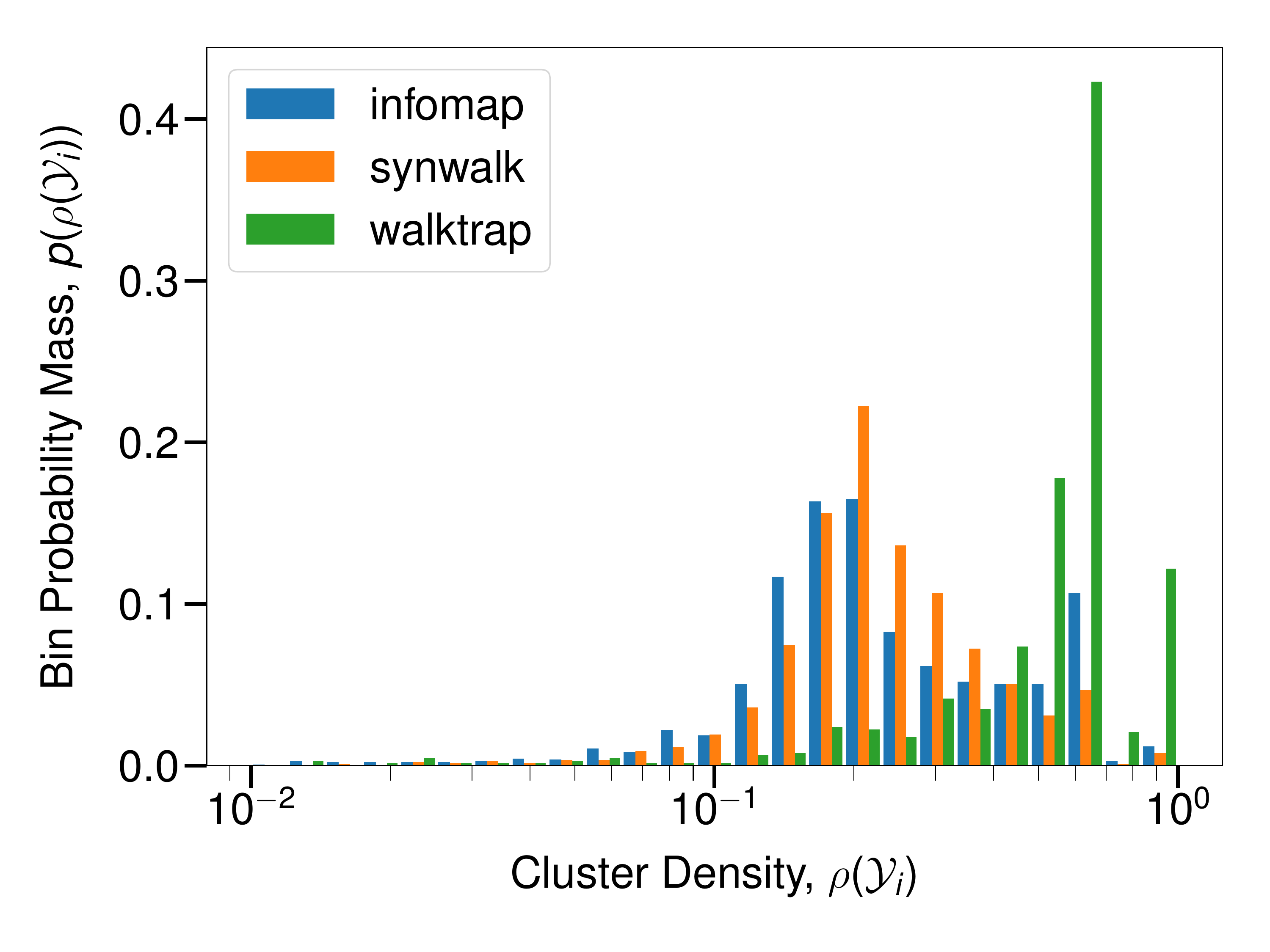}
        \caption{github.}
    \end{subfigure}
    \\
    \begin{subfigure}{0.32\textwidth}
        \centering
        \includegraphics[width=1.0\textwidth]{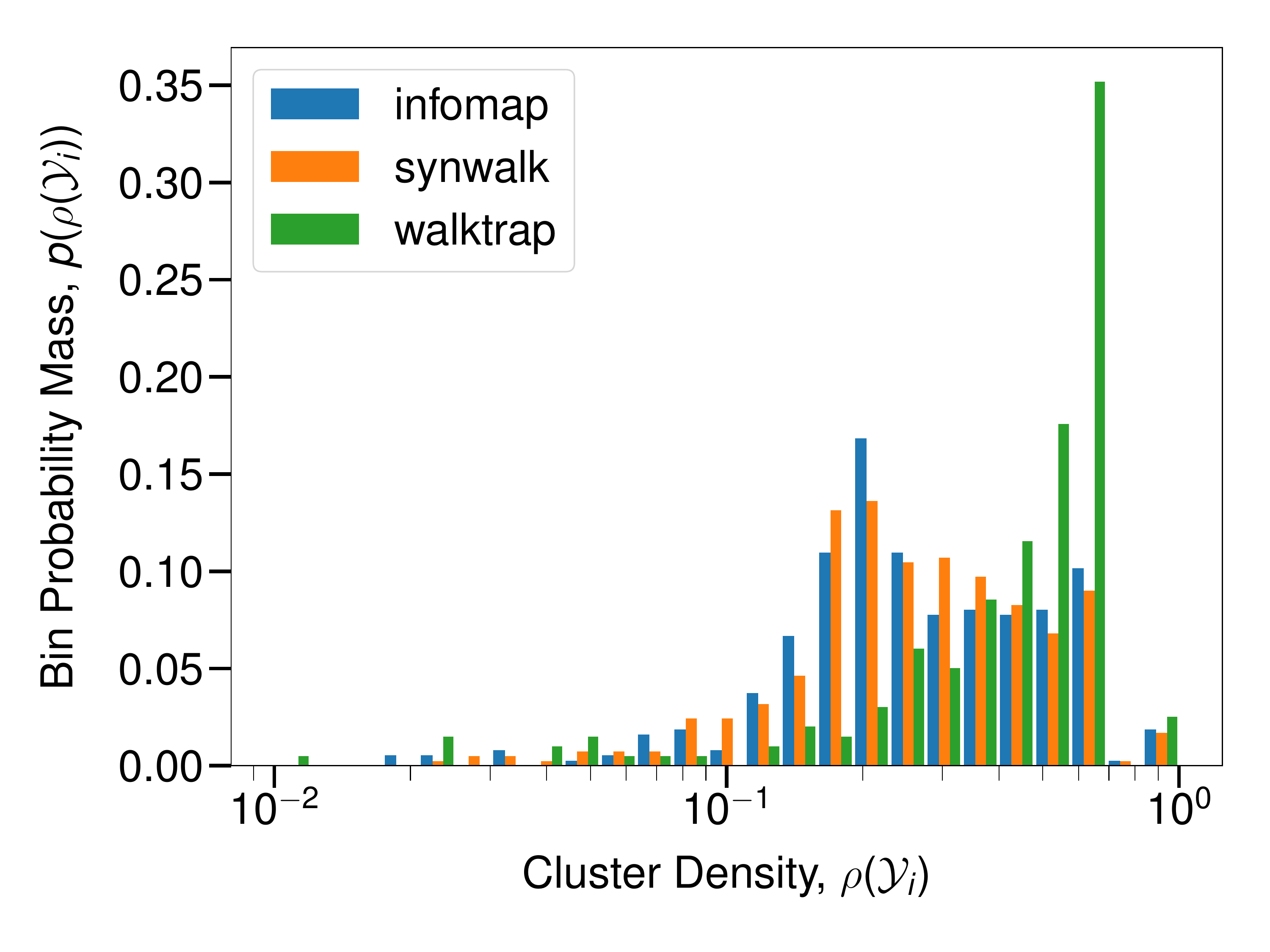}
        \caption{lastfm-asia.}
    \end{subfigure}
    \begin{subfigure}{0.32\textwidth}
        \centering
        \includegraphics[width=1.0\textwidth]{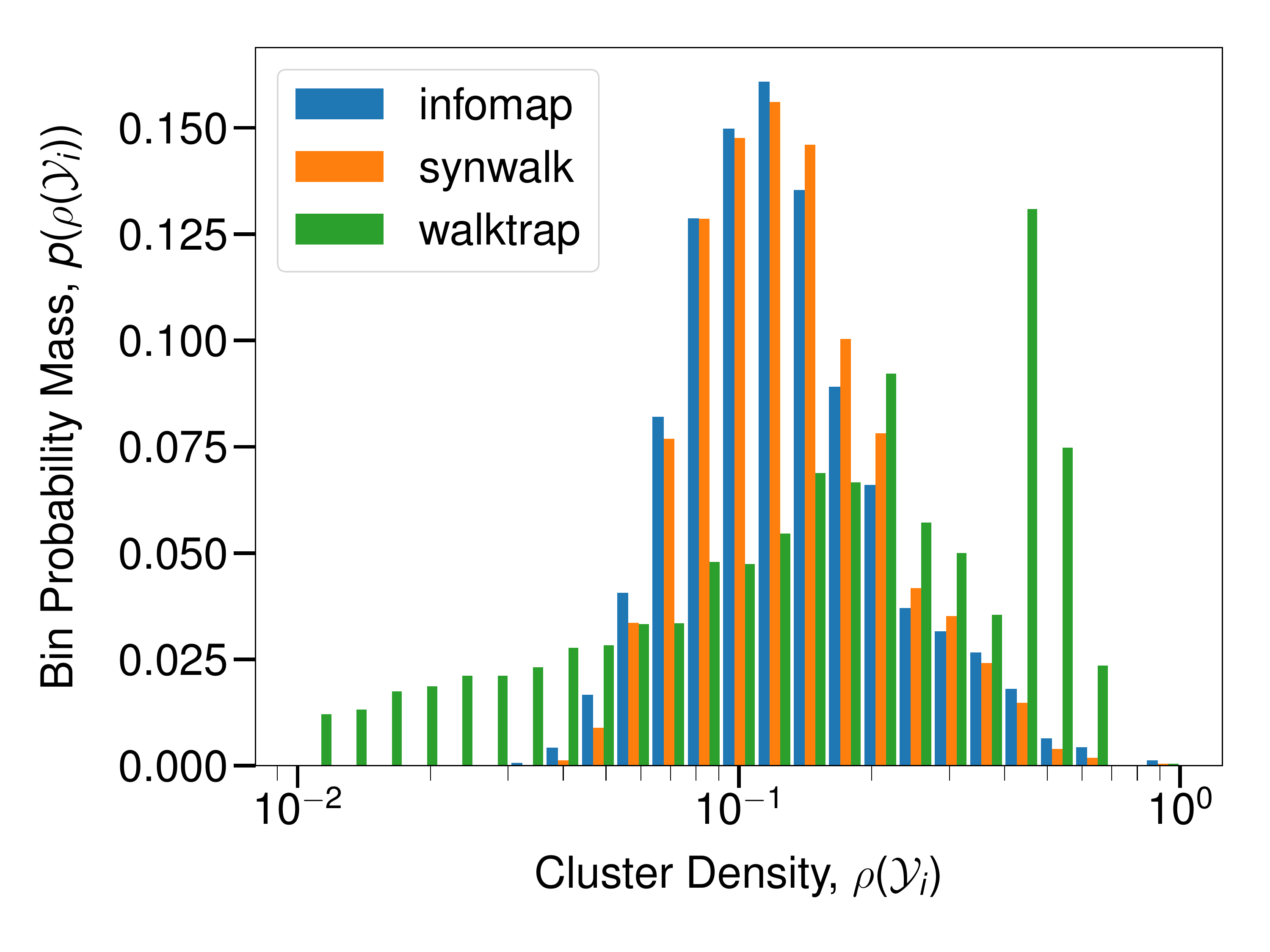}
        \caption{pennsylvania-roads.}
    \end{subfigure}
    \begin{subfigure}{0.32\textwidth}
        \centering
        \includegraphics[width=1.0\textwidth]{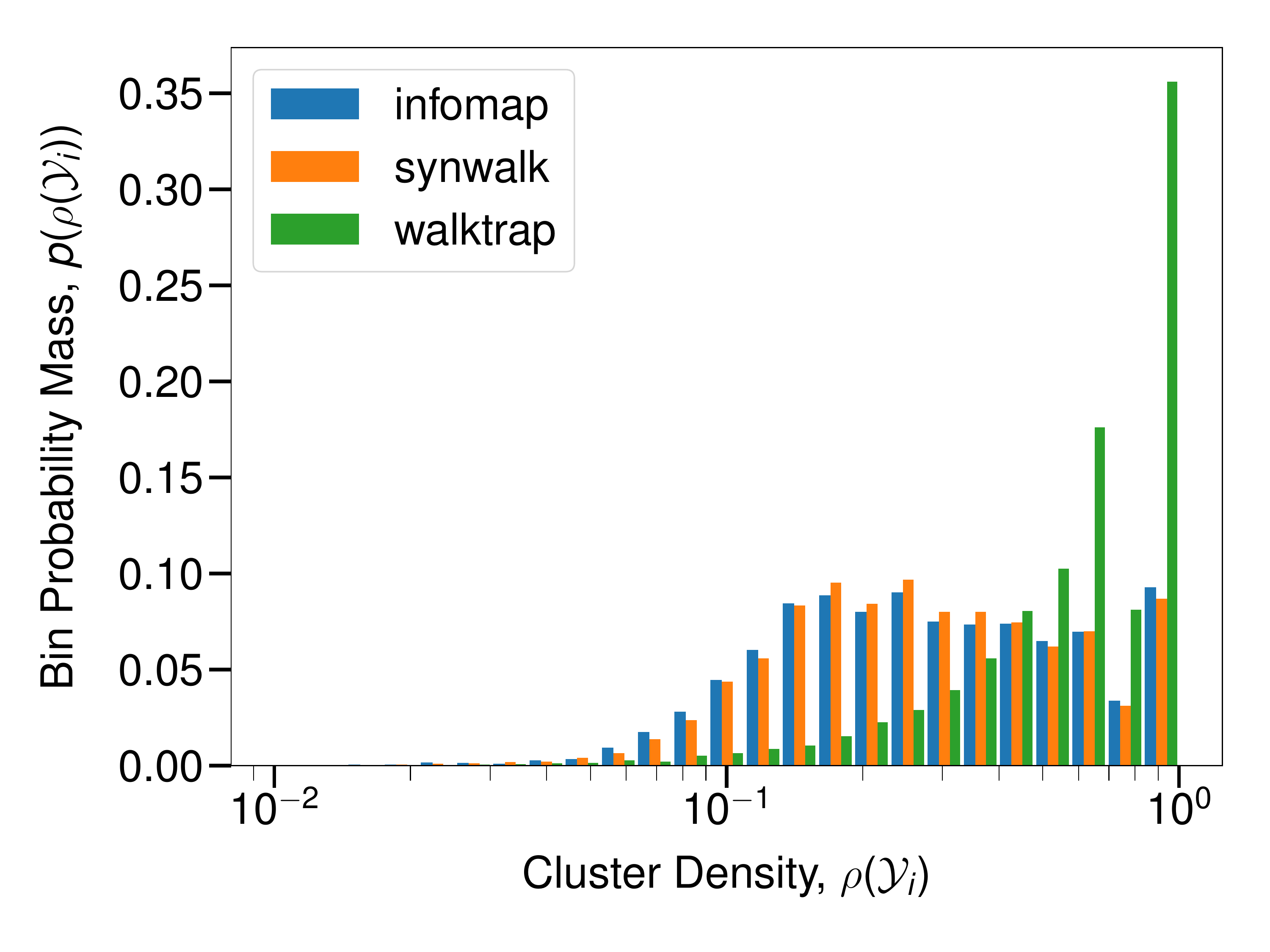}
        \caption{wordnet.}
    \end{subfigure}
    \caption{Distributions of cluster densities for the detection results on empirical networks.}\label{fig:empirical_densities}
\end{figure*}

\begin{figure*}[t]
    \centering
    \begin{subfigure}{0.32\textwidth}
        \centering
        \includegraphics[width=1.0\textwidth]{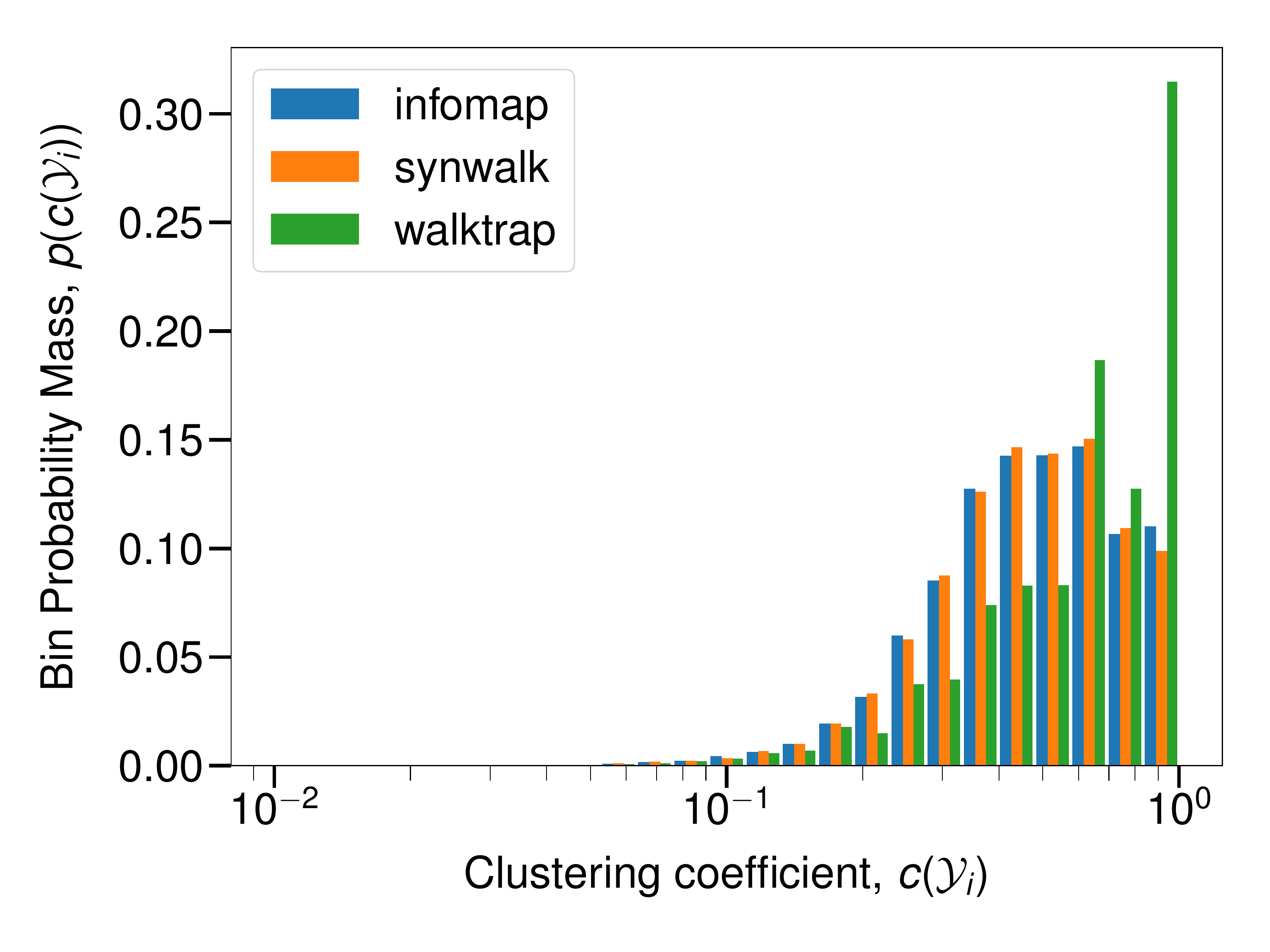}
        \caption{dblp.}
    \end{subfigure}
   \begin{subfigure}{0.32\textwidth}
        \centering
        \includegraphics[width=1.0\textwidth]{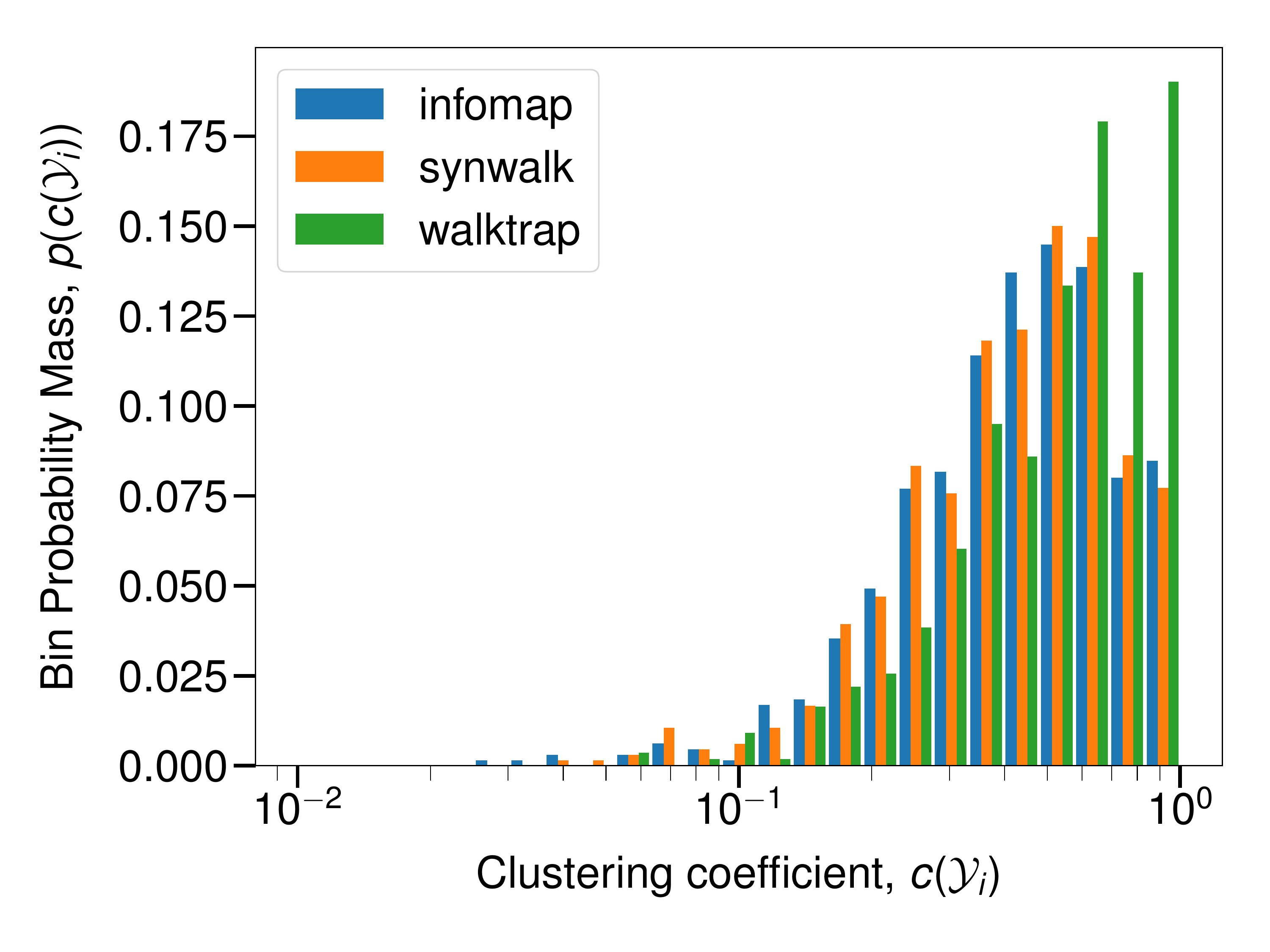}
        \caption{facebook.}
    \end{subfigure}
    \begin{subfigure}{0.32\textwidth}
        \centering
        \includegraphics[width=1.0\textwidth]{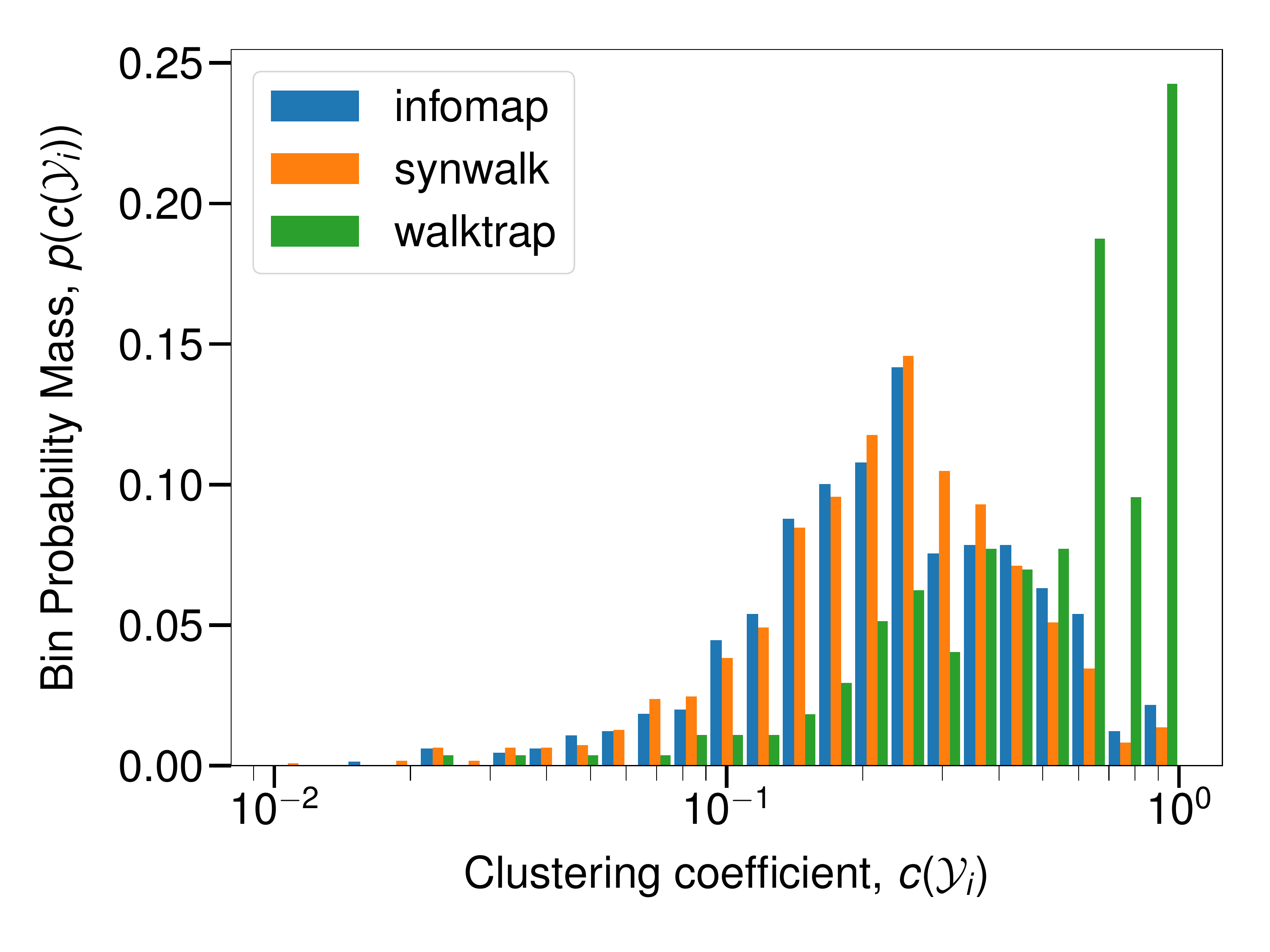}
        \caption{github.}
    \end{subfigure}
    \\
    \begin{subfigure}{0.32\textwidth}
        \centering
        \includegraphics[width=1.0\textwidth]{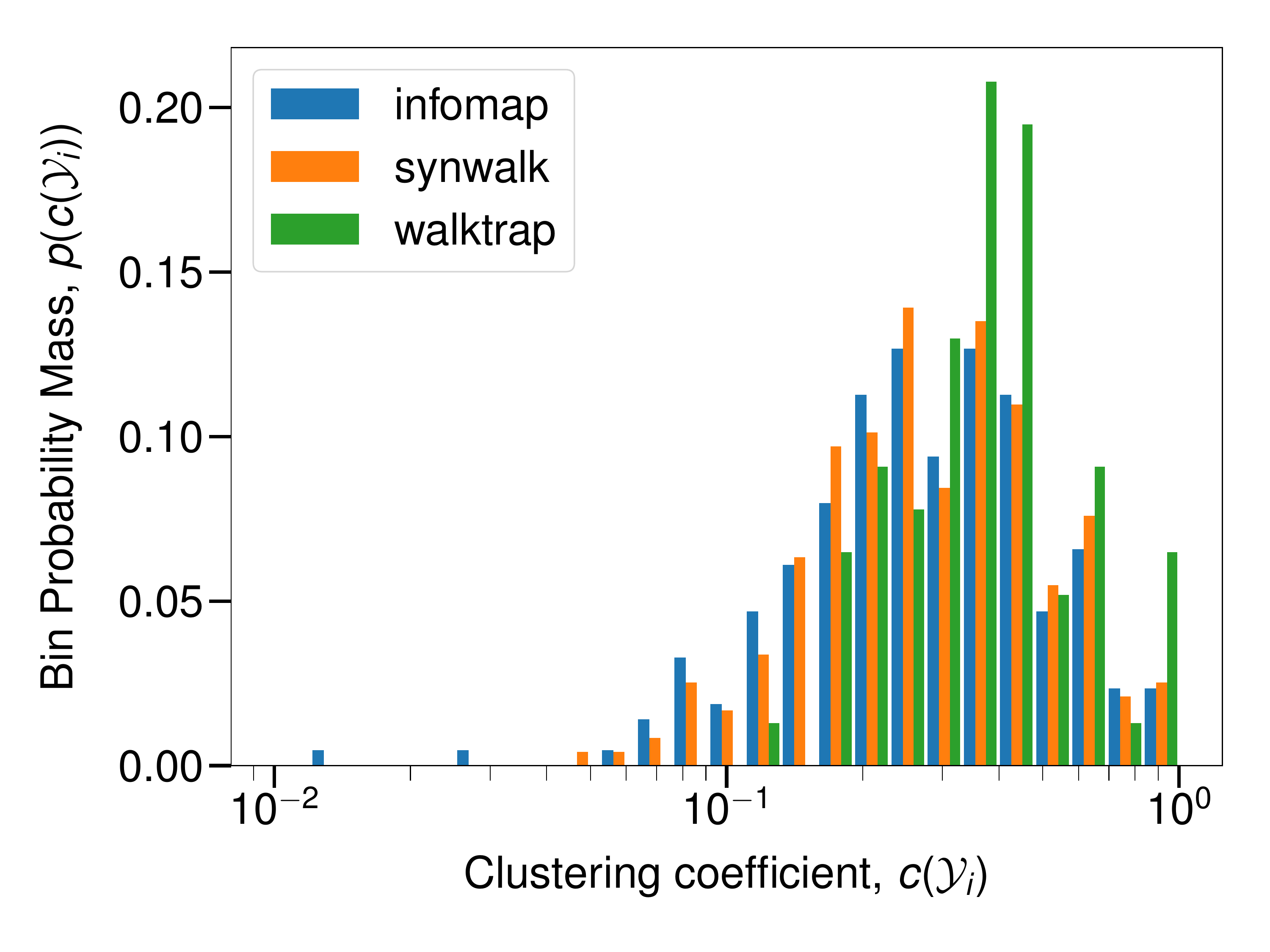}
        \caption{lastfm-asia.}
    \end{subfigure}
    \begin{subfigure}{0.32\textwidth}
        \centering
        \includegraphics[width=1.0\textwidth]{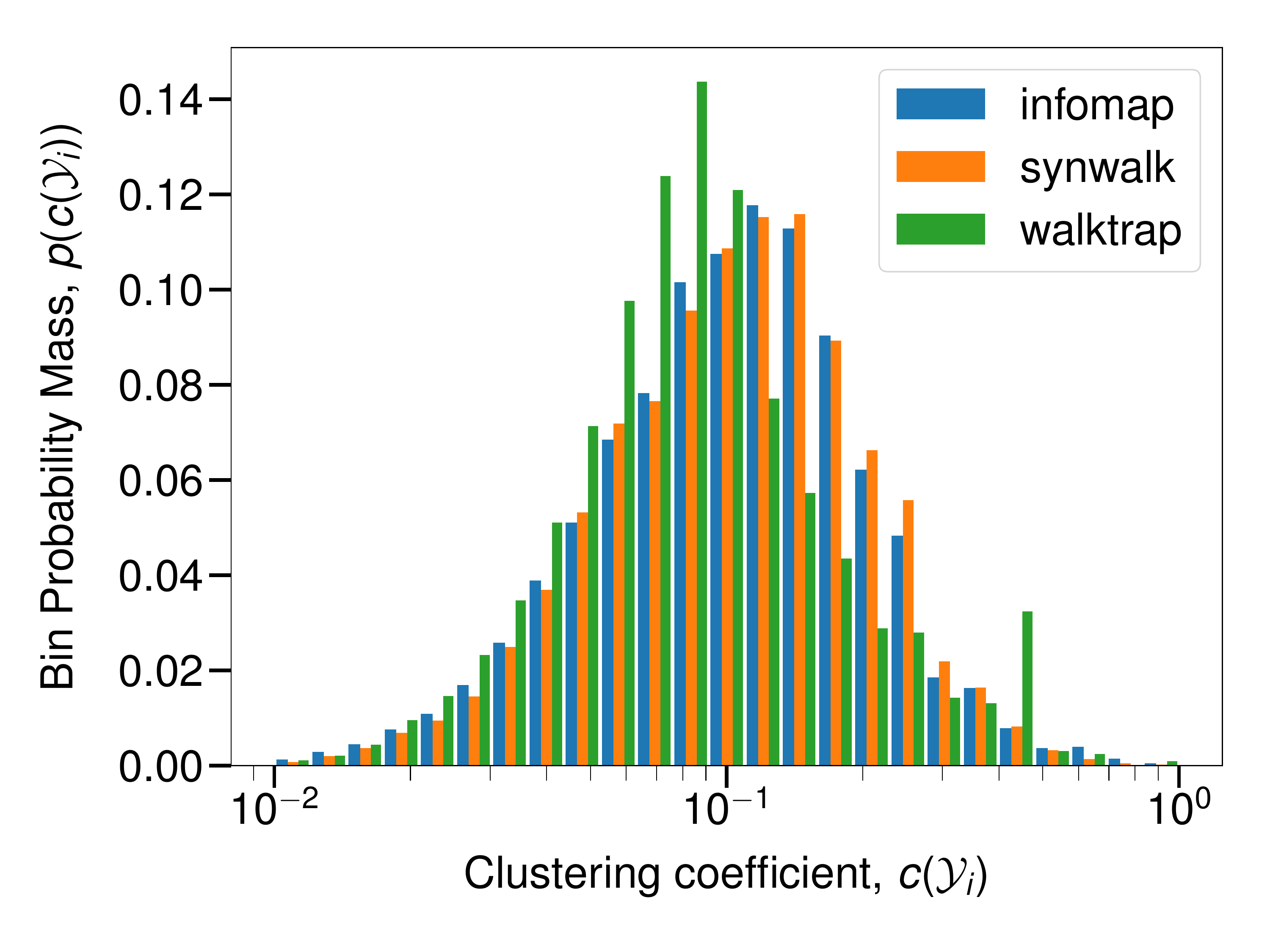}
        \caption{pennsylvania-roads.}
    \end{subfigure}
    \begin{subfigure}{0.32\textwidth}
        \centering
        \includegraphics[width=1.0\textwidth]{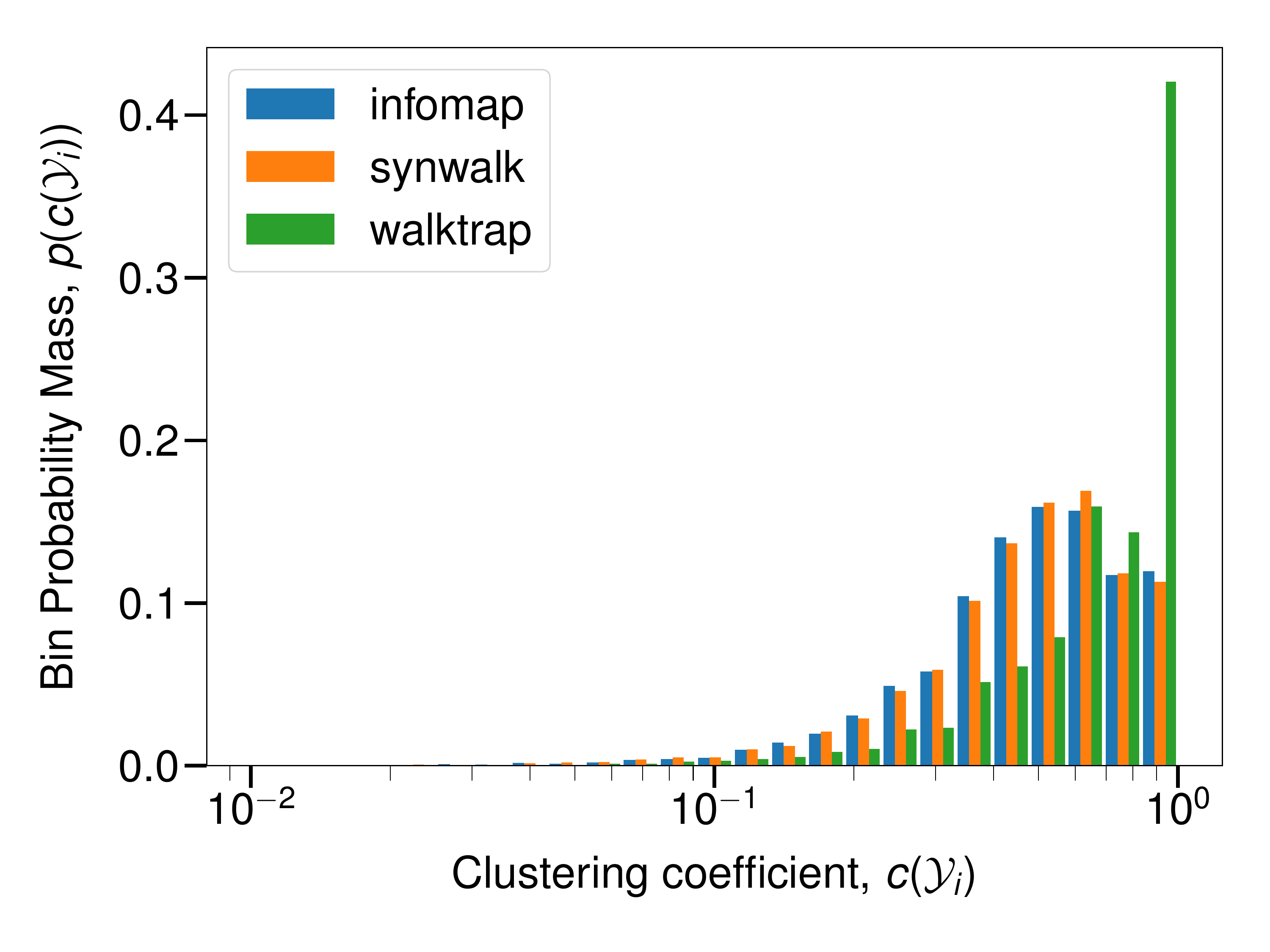}
        \caption{wordnet.}
    \end{subfigure}
    \caption{Distributions of cluster clustering coefficients for the detection results on empirical networks.}\label{fig:empirical_clustering_coefficients}
\end{figure*}

\begin{figure*}[t]
    \centering
    \begin{subfigure}{0.32\textwidth}
        \centering
        \includegraphics[width=1.0\textwidth]{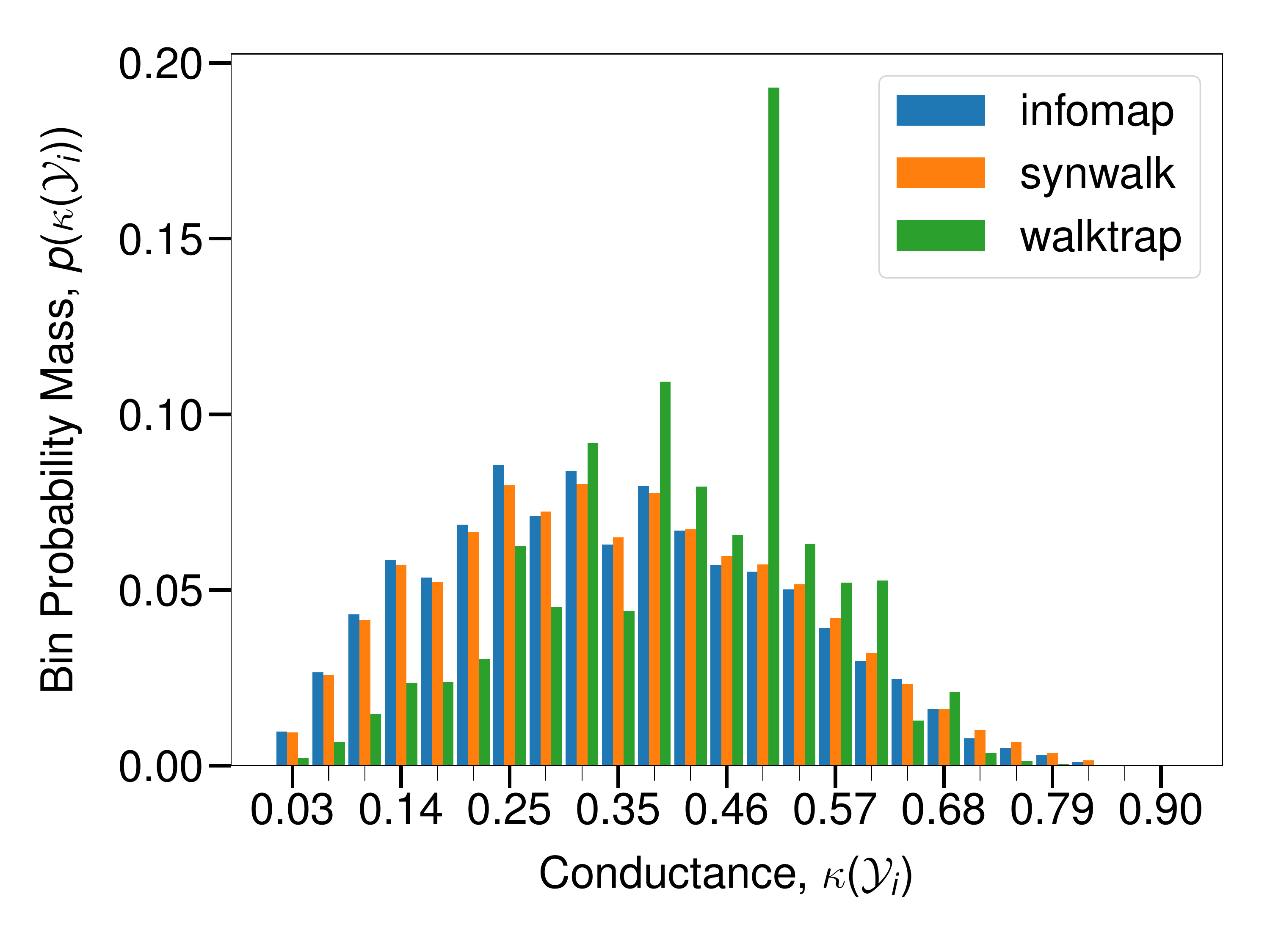}
        \caption{dblp.}
    \end{subfigure}
   \begin{subfigure}{0.32\textwidth}
        \centering
        \includegraphics[width=1.0\textwidth]{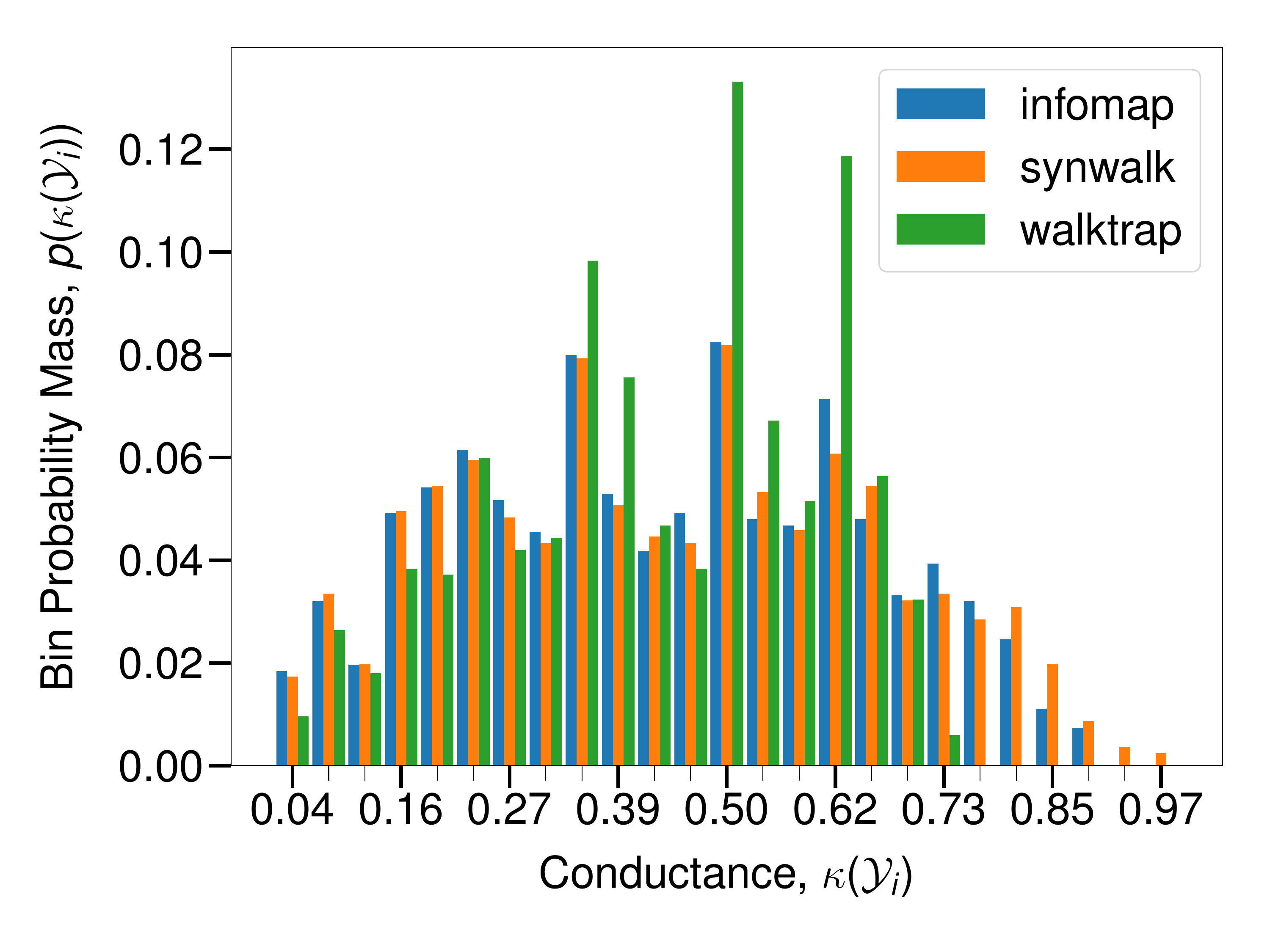}
        \caption{facebook.}
    \end{subfigure}
    \begin{subfigure}{0.32\textwidth}
        \centering
        \includegraphics[width=1.0\textwidth]{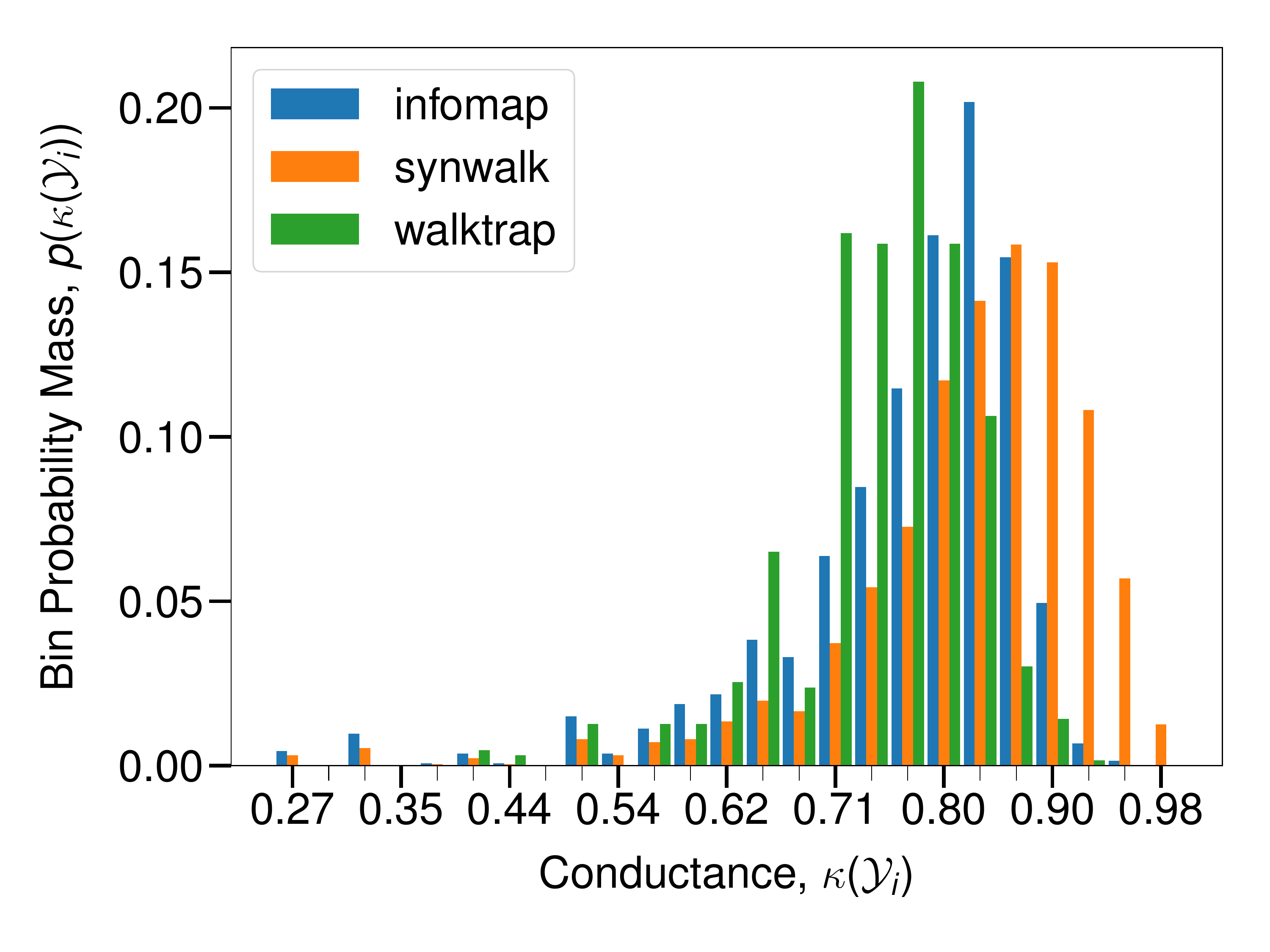}
        \caption{github.}
    \end{subfigure}
    \\
    \begin{subfigure}{0.32\textwidth}
        \centering
        \includegraphics[width=1.0\textwidth]{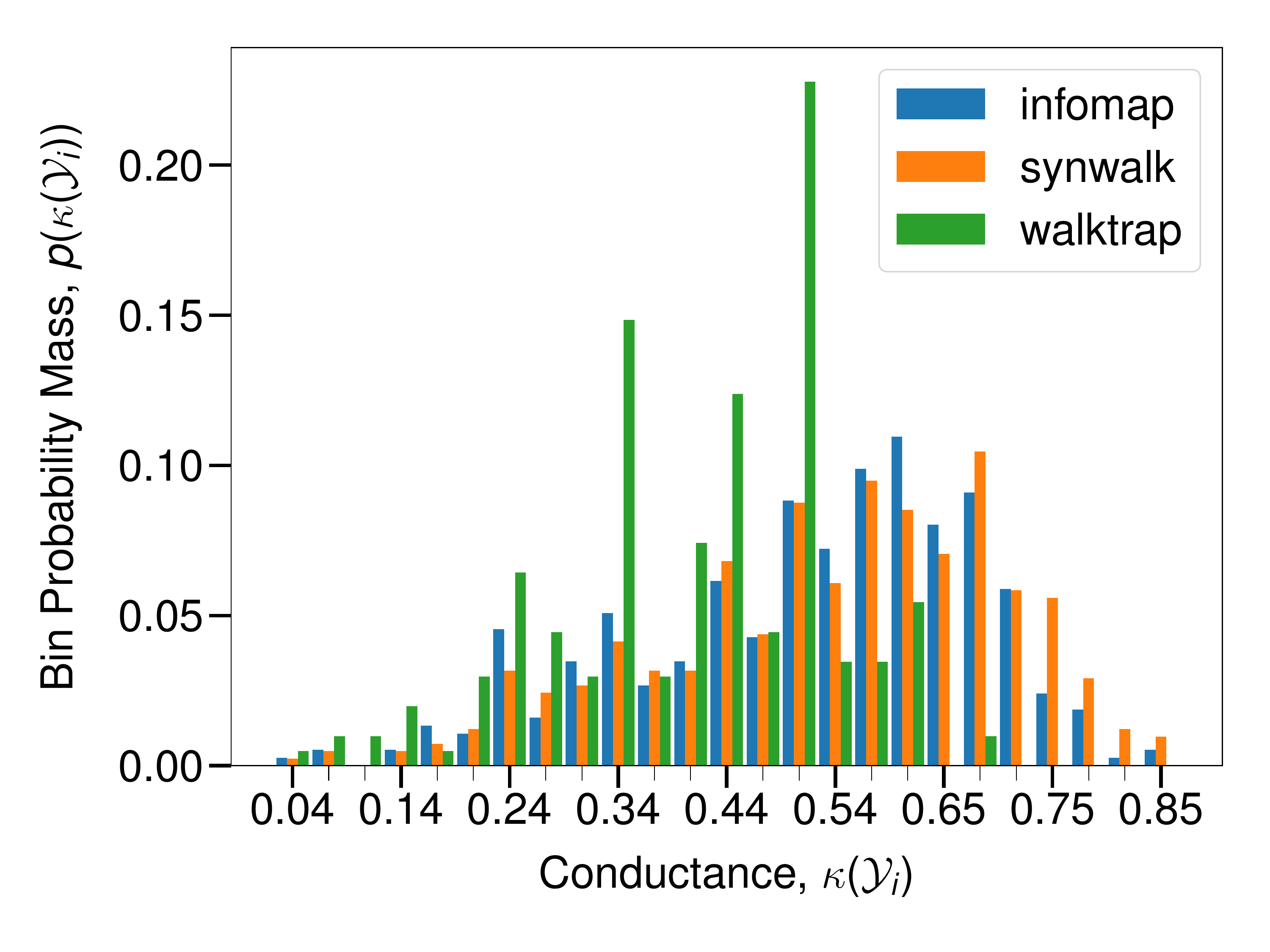}
        \caption{lastfm-asia.}
    \end{subfigure}
    \begin{subfigure}{0.32\textwidth}
        \centering
        \includegraphics[width=1.0\textwidth]{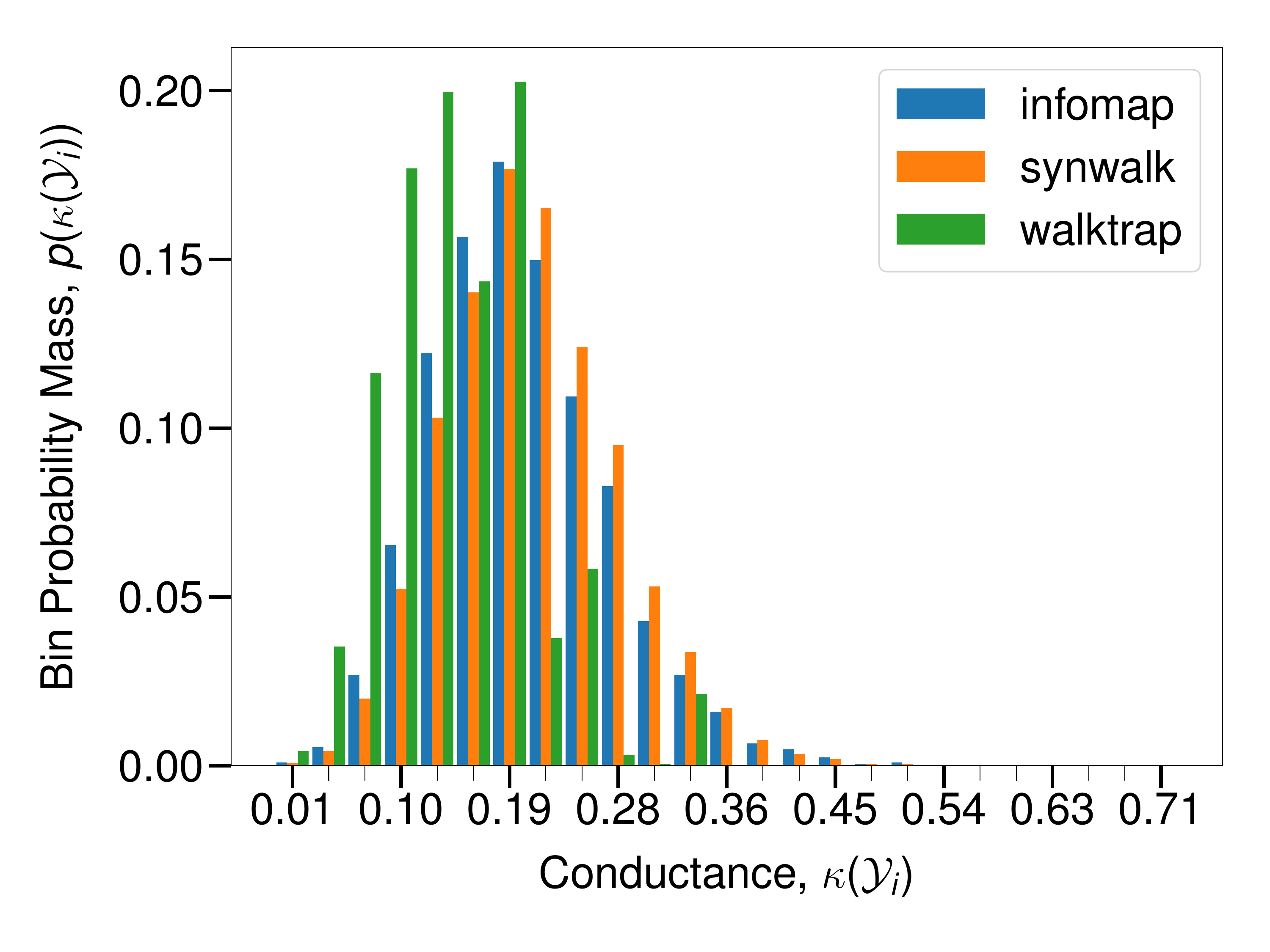}
        \caption{pennsylvania-roads.}
    \end{subfigure}
    \begin{subfigure}{0.32\textwidth}
        \centering
        \includegraphics[width=1.0\textwidth]{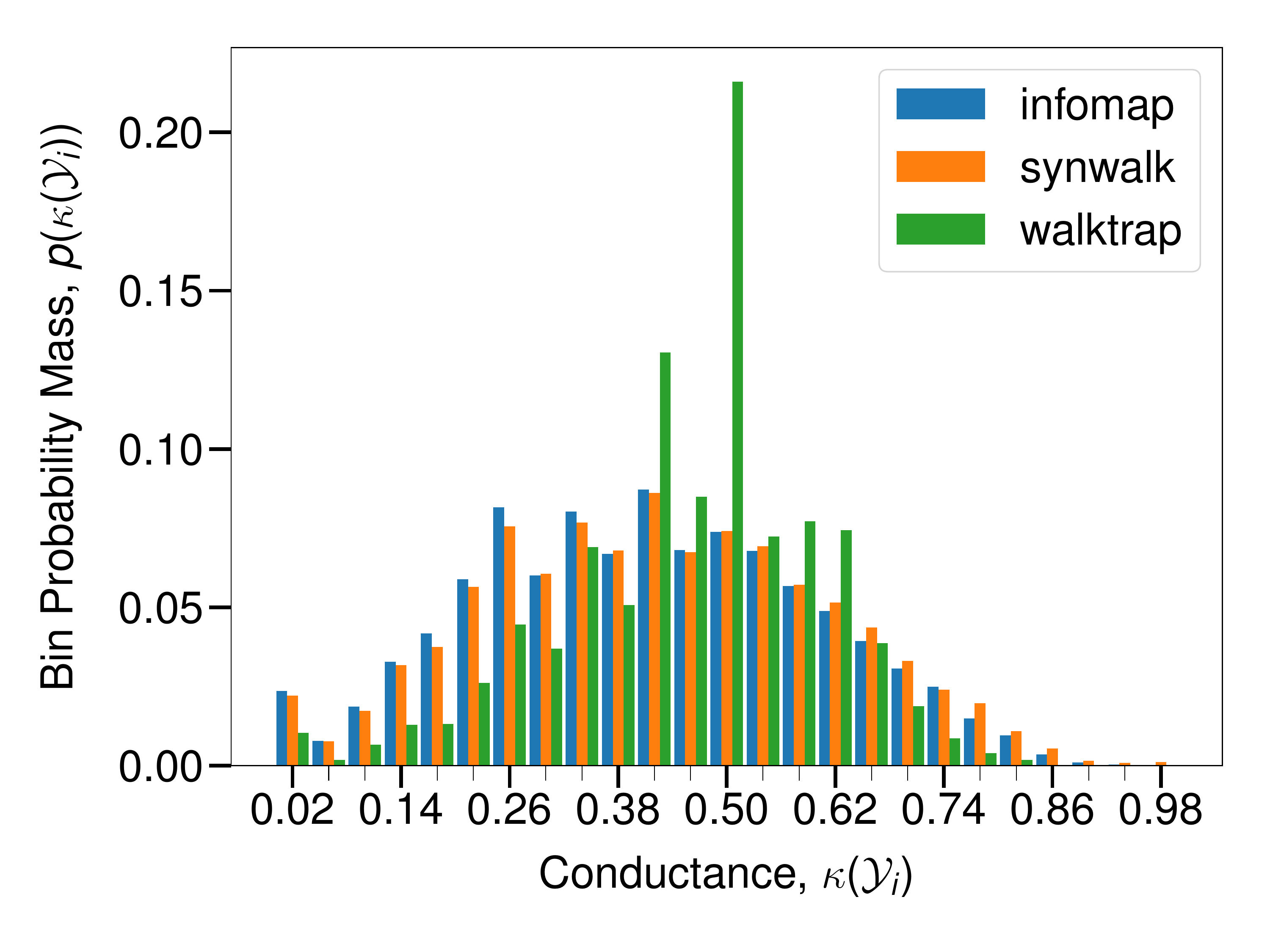}
        \caption{wordnet.}
    \end{subfigure}
    \caption{Distributions of cluster conductances for the detection results on empirical networks.}\label{fig:empirical_conductances}
\end{figure*}

\begin{figure*}[t]
    \centering
    \begin{subfigure}{0.32\textwidth}
        \centering
        \includegraphics[width=1.0\textwidth]{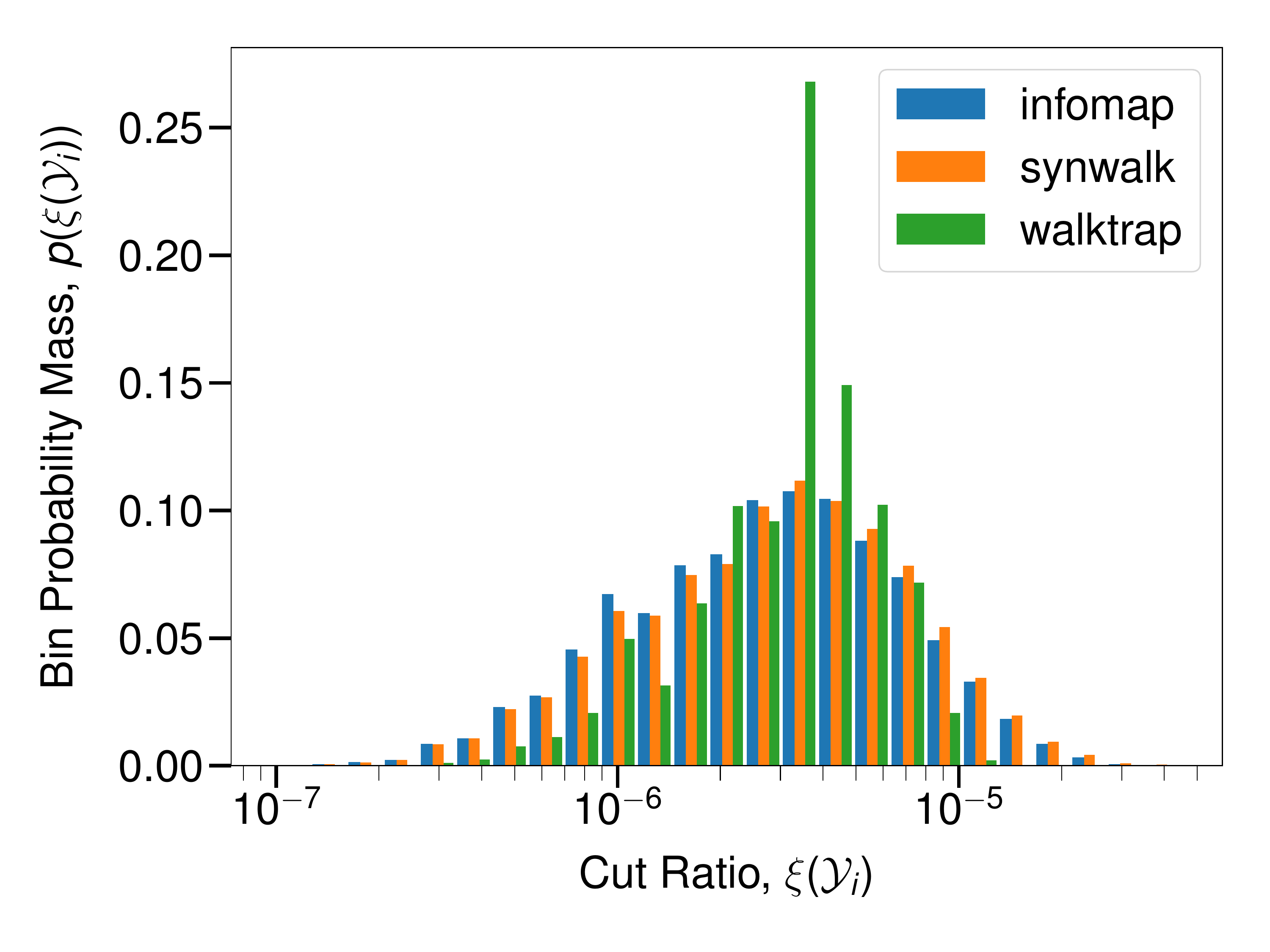}
        \caption{dblp.}
    \end{subfigure}
   \begin{subfigure}{0.32\textwidth}
        \centering
        \includegraphics[width=1.0\textwidth]{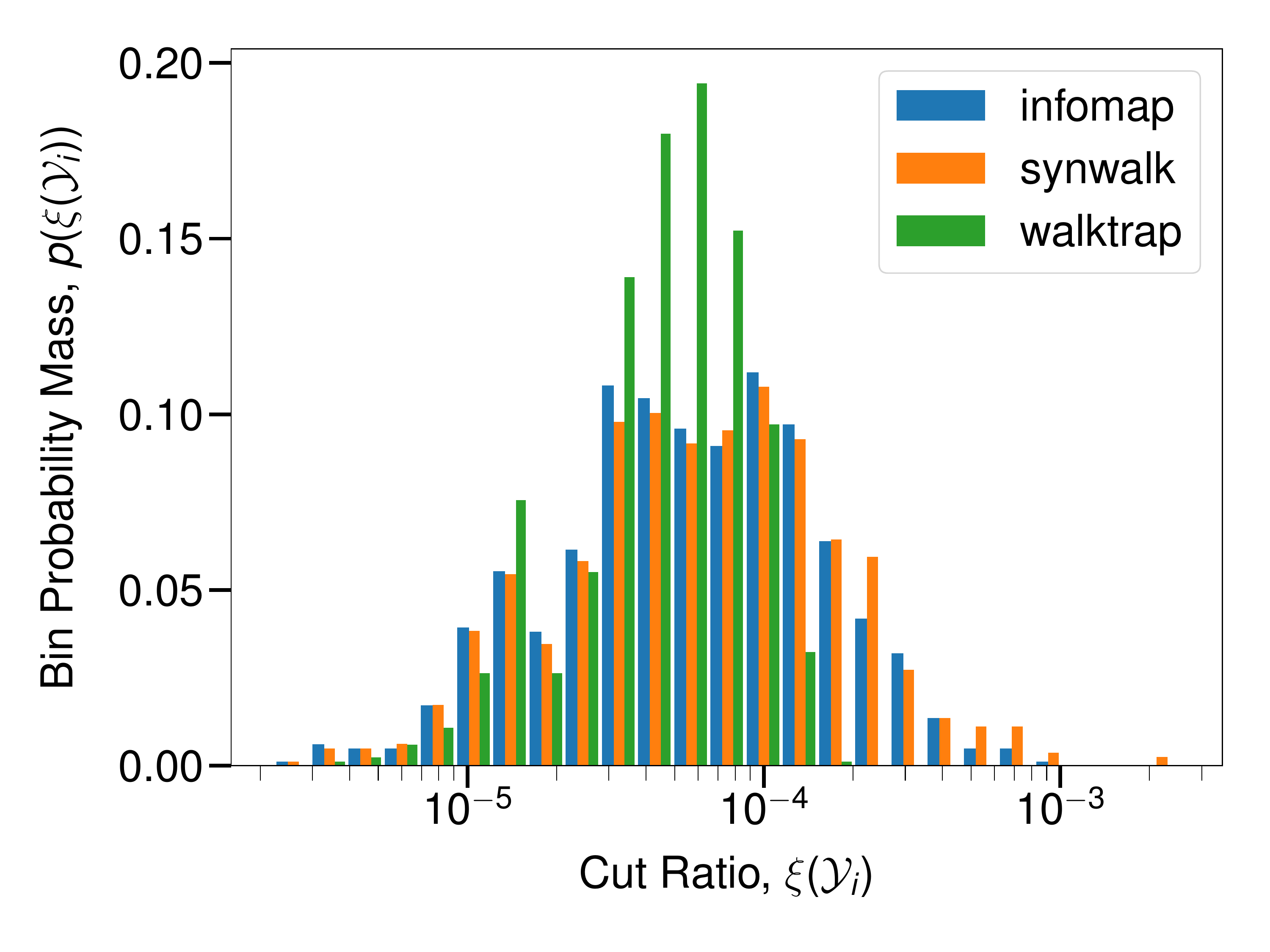}
        \caption{facebook.}
    \end{subfigure}
    \begin{subfigure}{0.32\textwidth}
        \centering
        \includegraphics[width=1.0\textwidth]{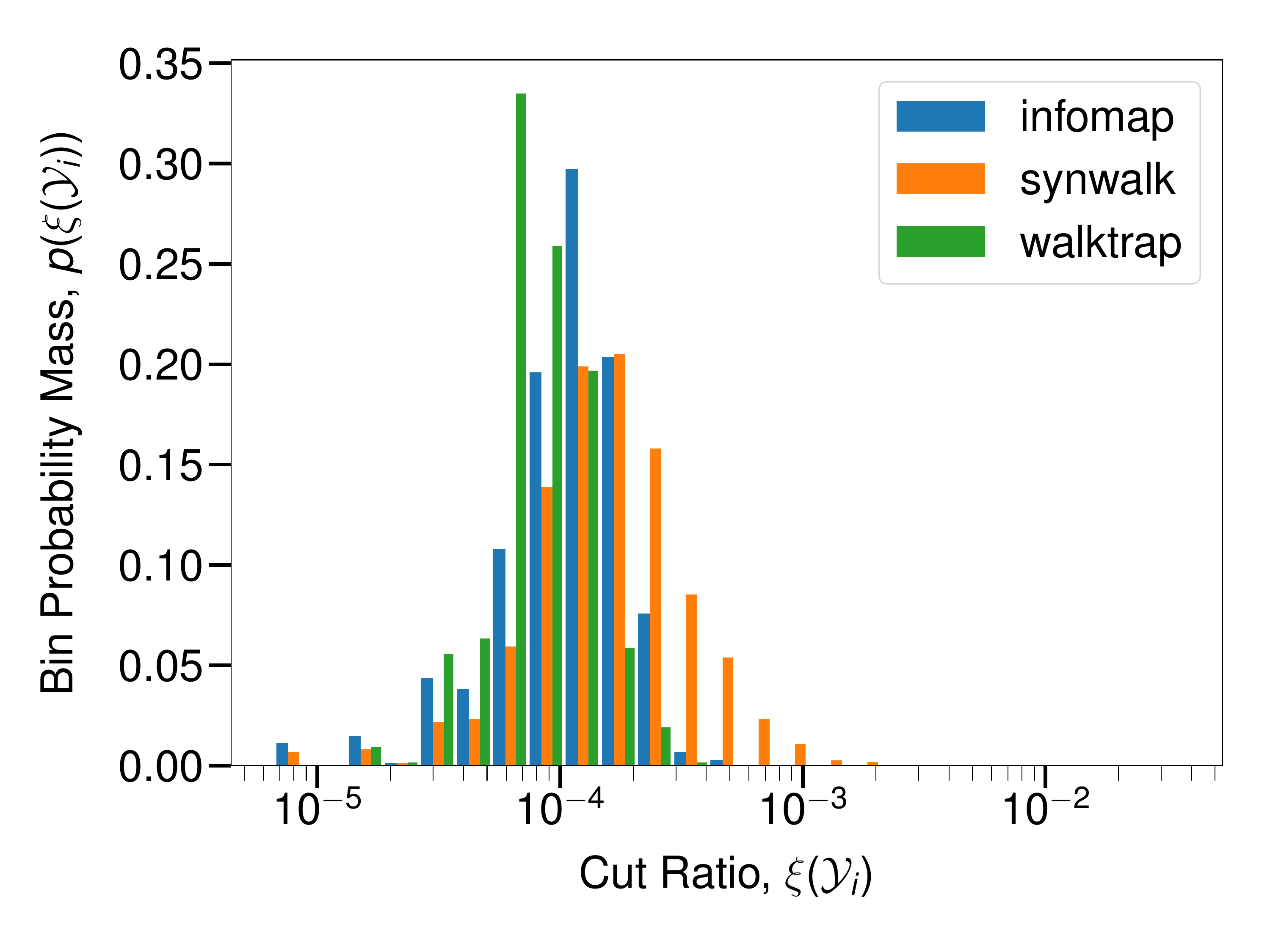}
        \caption{github.}
    \end{subfigure}
    \\
    \begin{subfigure}{0.32\textwidth}
        \centering
        \includegraphics[width=1.0\textwidth]{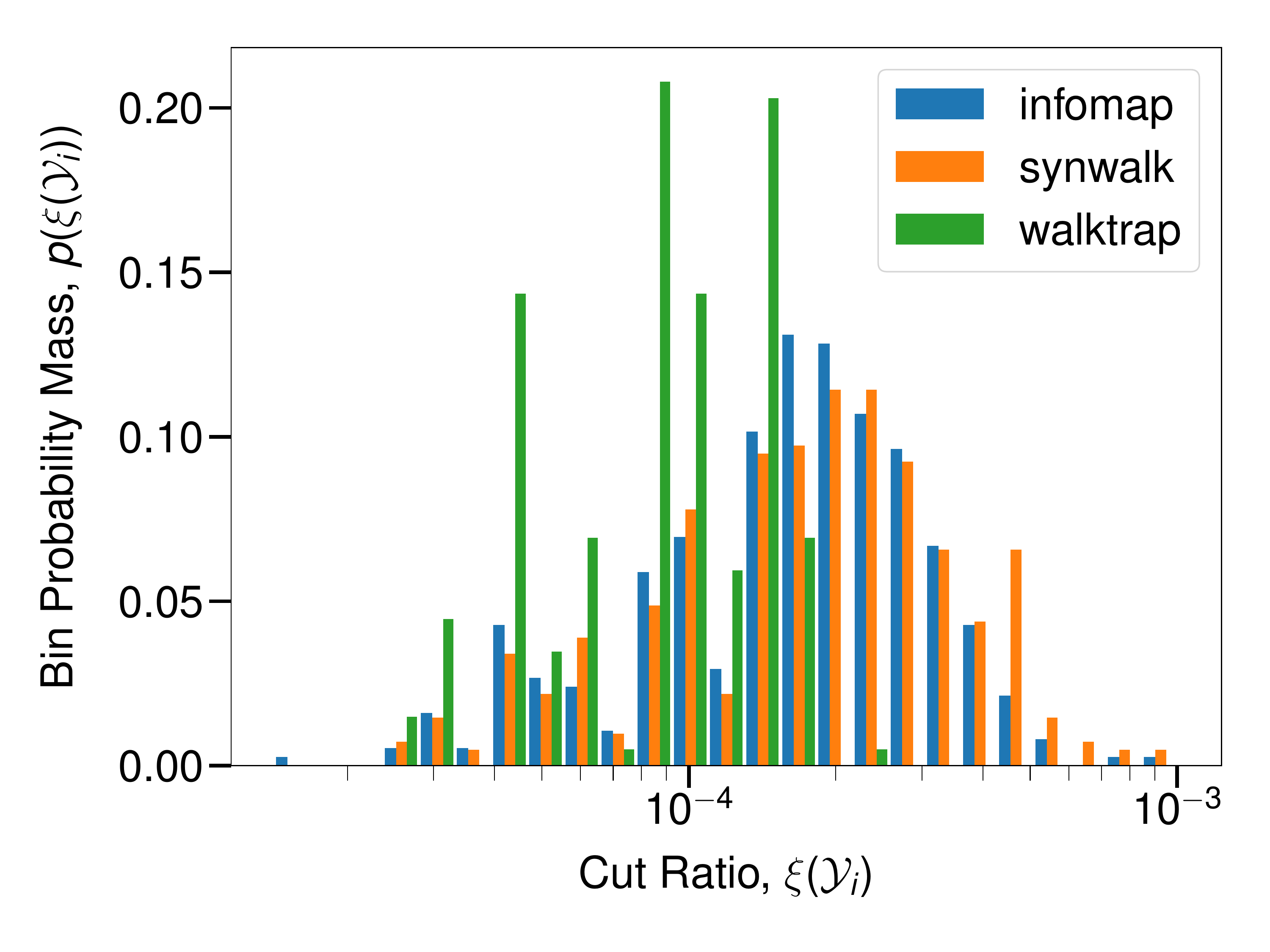}
        \caption{lastfm-asia.}
    \end{subfigure}
    \begin{subfigure}{0.32\textwidth}
        \centering
        \includegraphics[width=1.0\textwidth]{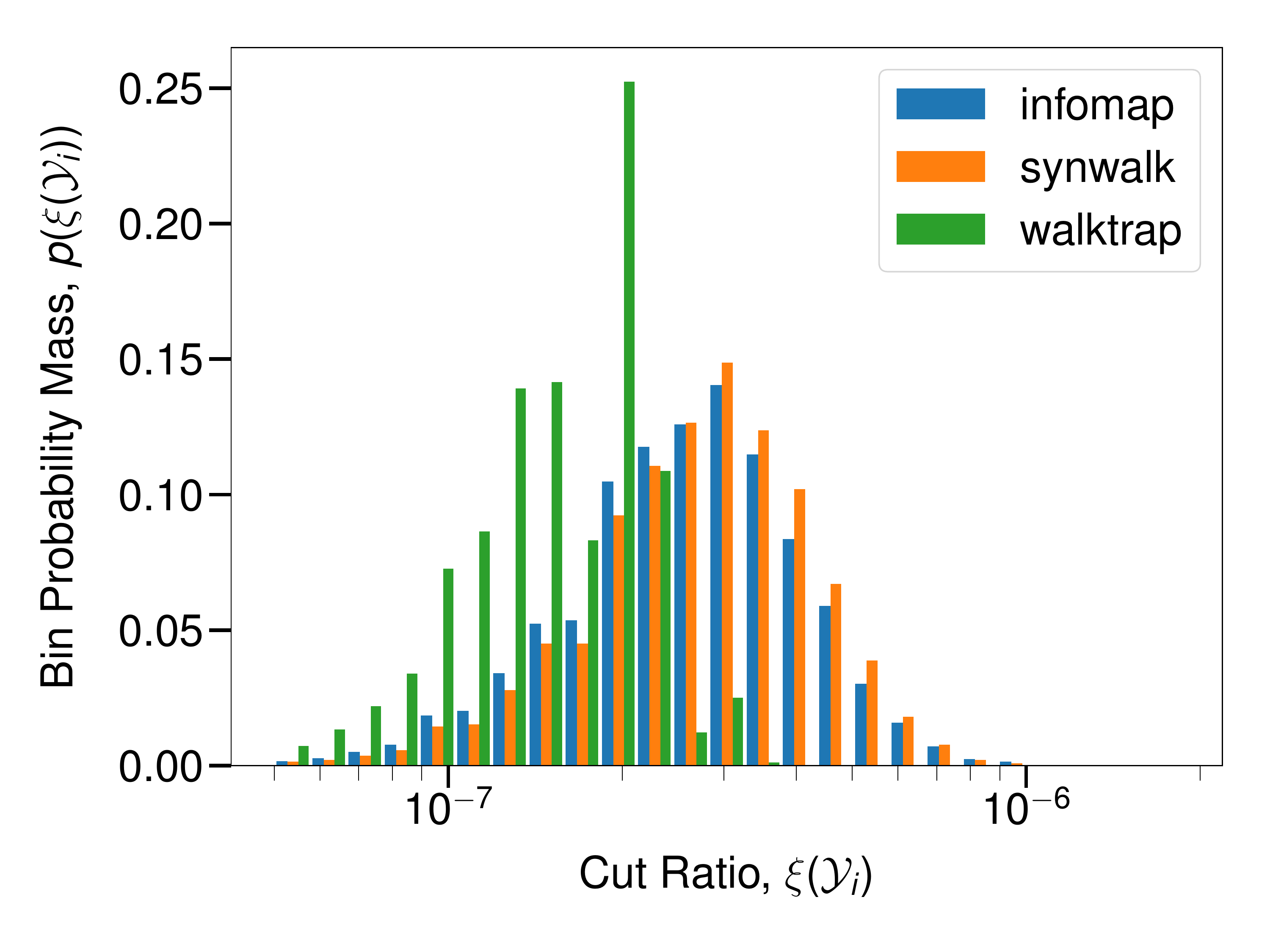}
        \caption{pennsylvania-roads.}
    \end{subfigure}
    \begin{subfigure}{0.32\textwidth}
        \centering
        \includegraphics[width=1.0\textwidth]{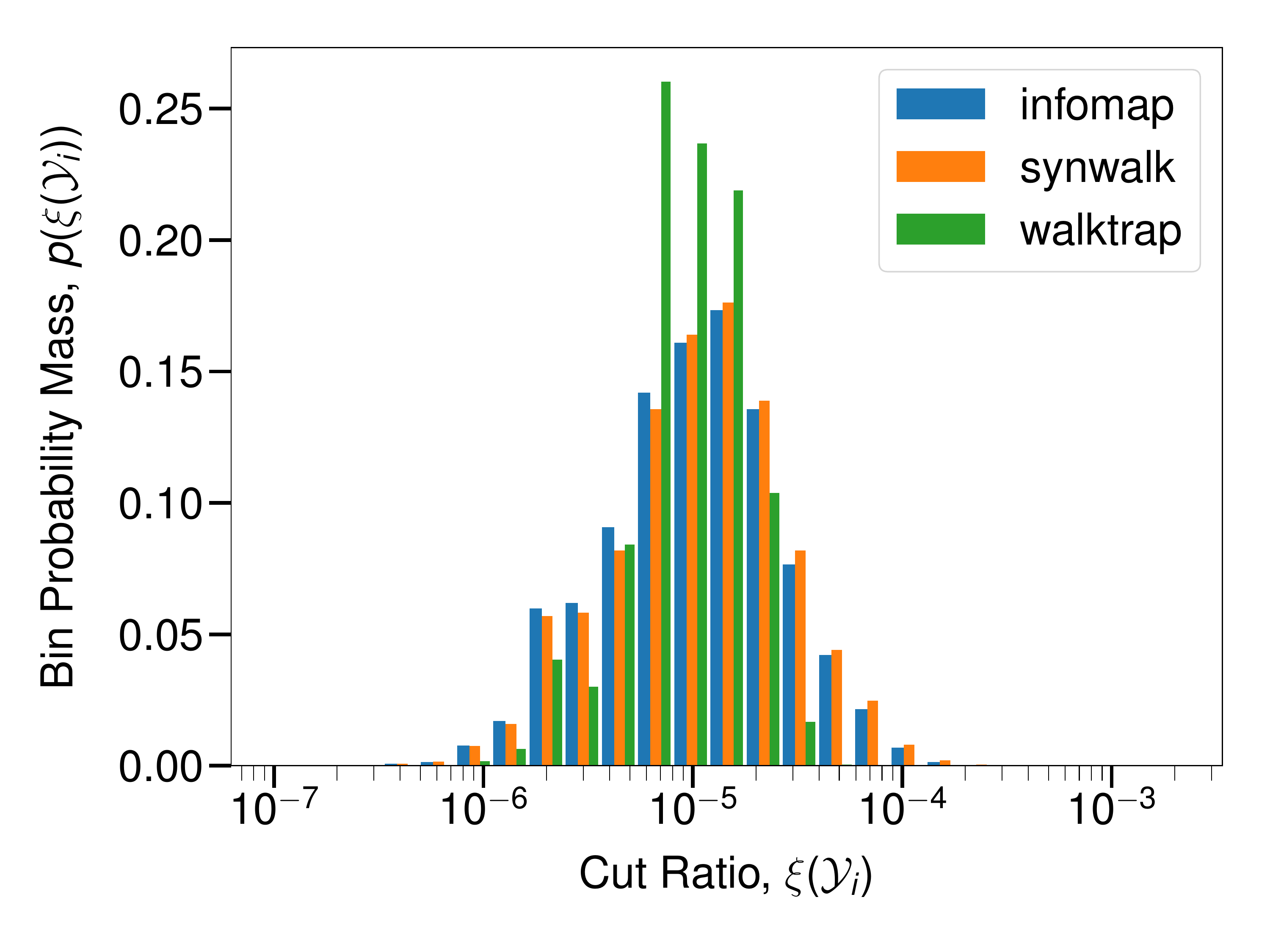}
        \caption{wordnet.}
    \end{subfigure}
    \caption{Distributions of cluster cut ratios for the detection results on empirical networks.}\label{fig:empirical_cut_ratios}
\end{figure*}

\begin{figure*}[t]
    \centering
    \begin{subfigure}{0.32\textwidth}
        \centering
        \includegraphics[width=1.0\textwidth]{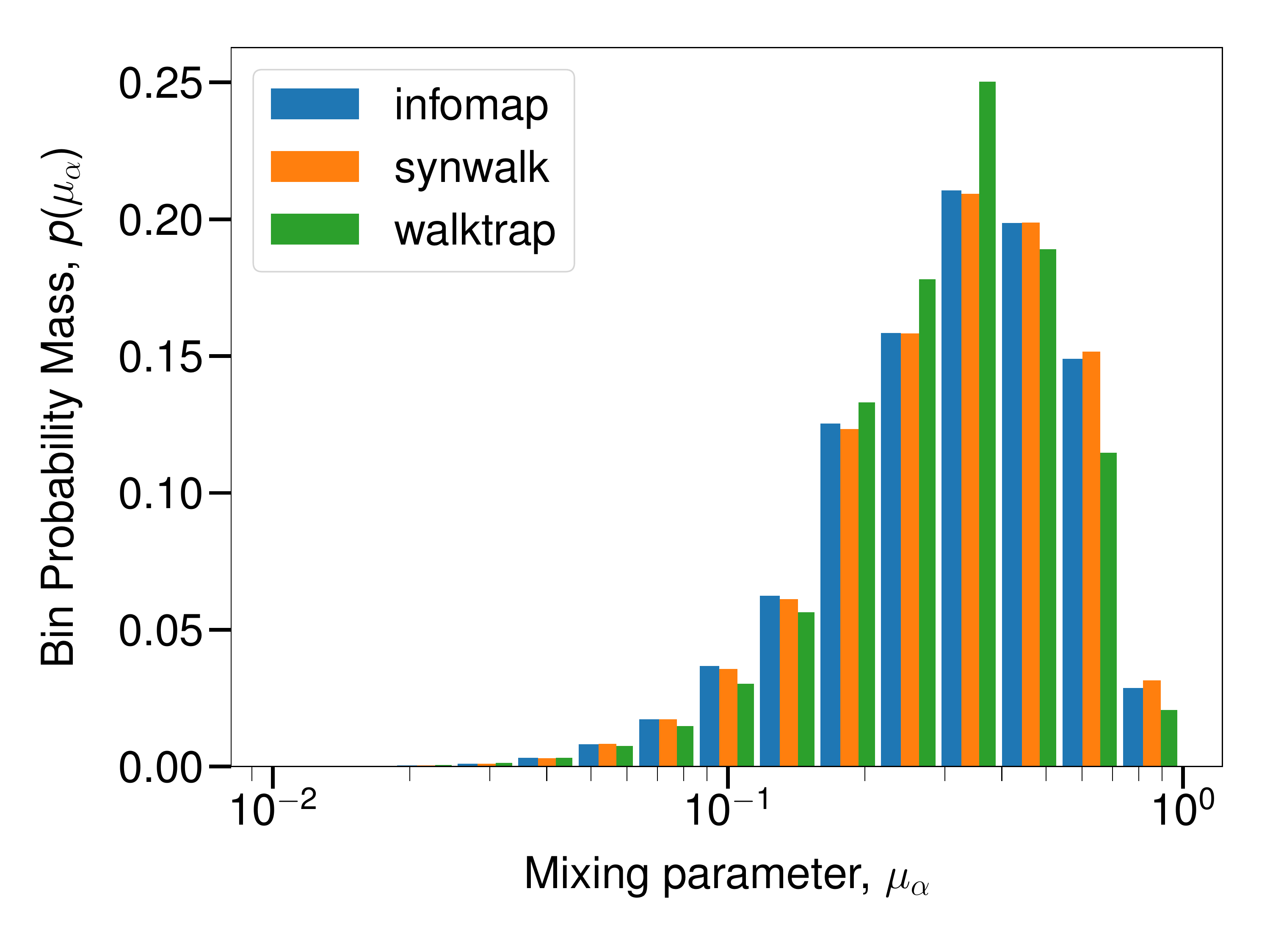}
        \caption{dblp.}
    \end{subfigure}
   \begin{subfigure}{0.32\textwidth}
        \centering
        \includegraphics[width=1.0\textwidth]{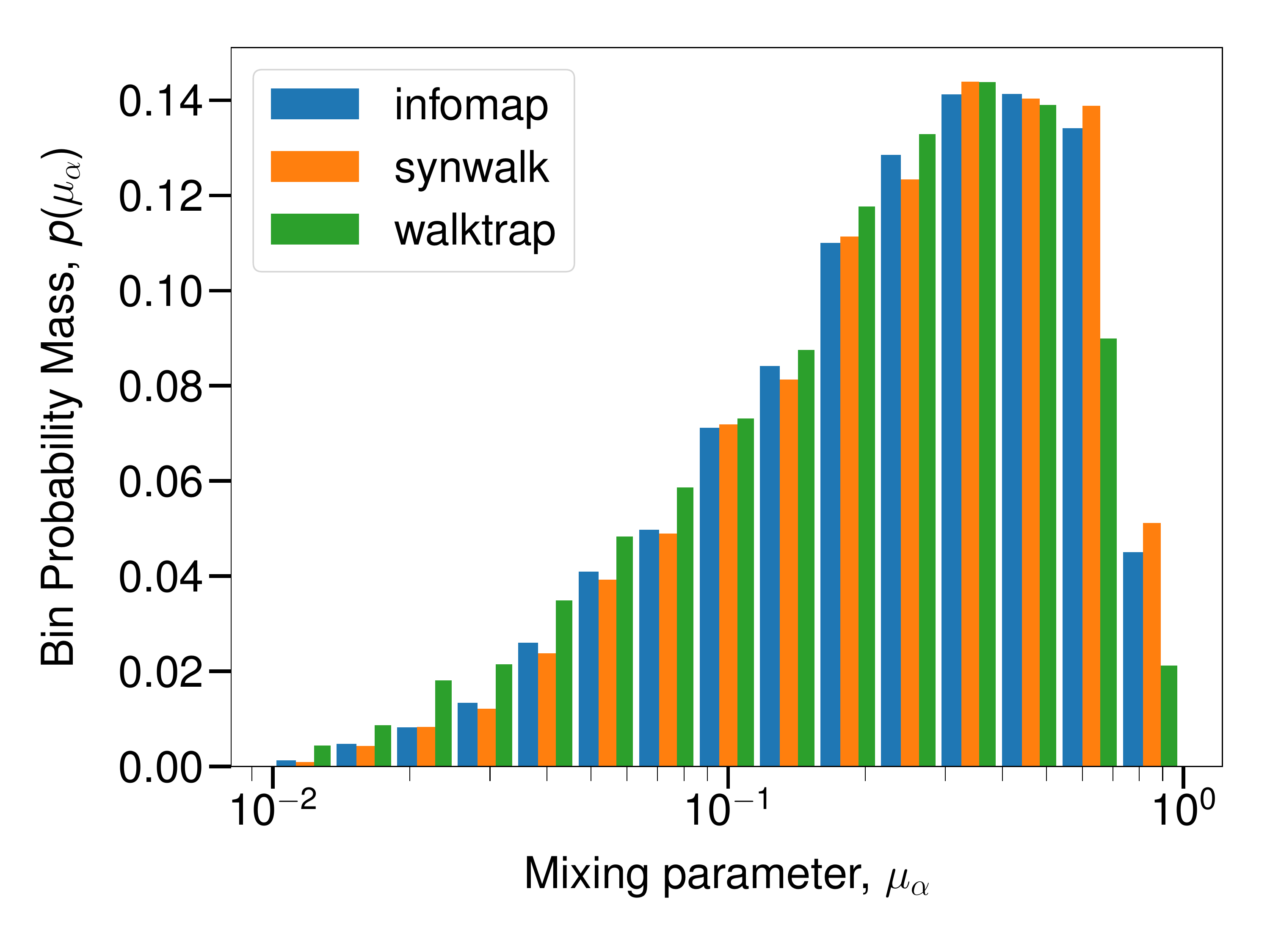}
        \caption{facebook.}
    \end{subfigure}
    \begin{subfigure}{0.32\textwidth}
        \centering
        \includegraphics[width=1.0\textwidth]{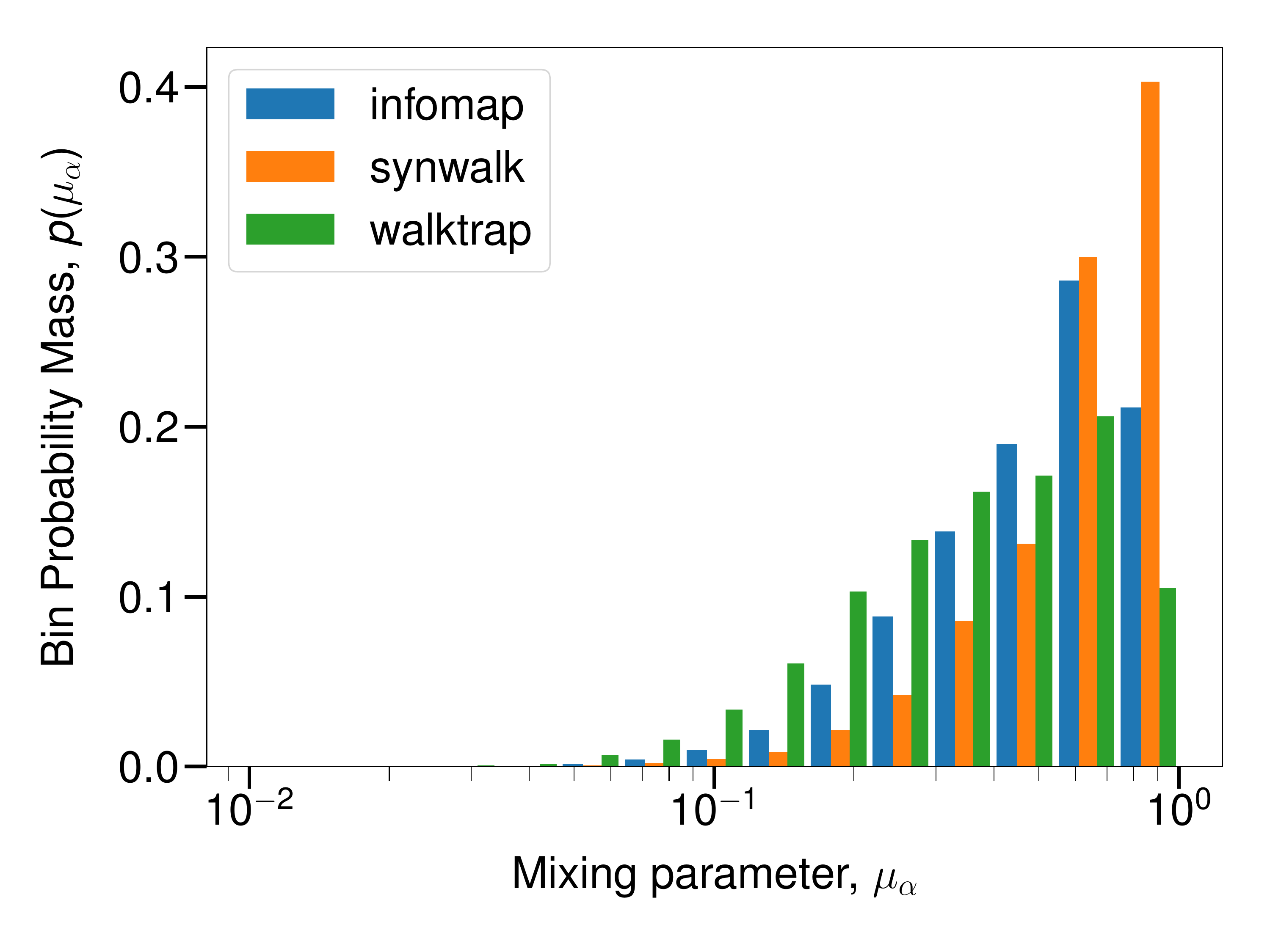}
        \caption{github.}
    \end{subfigure}
    \\
    \begin{subfigure}{0.32\textwidth}
        \centering
        \includegraphics[width=1.0\textwidth]{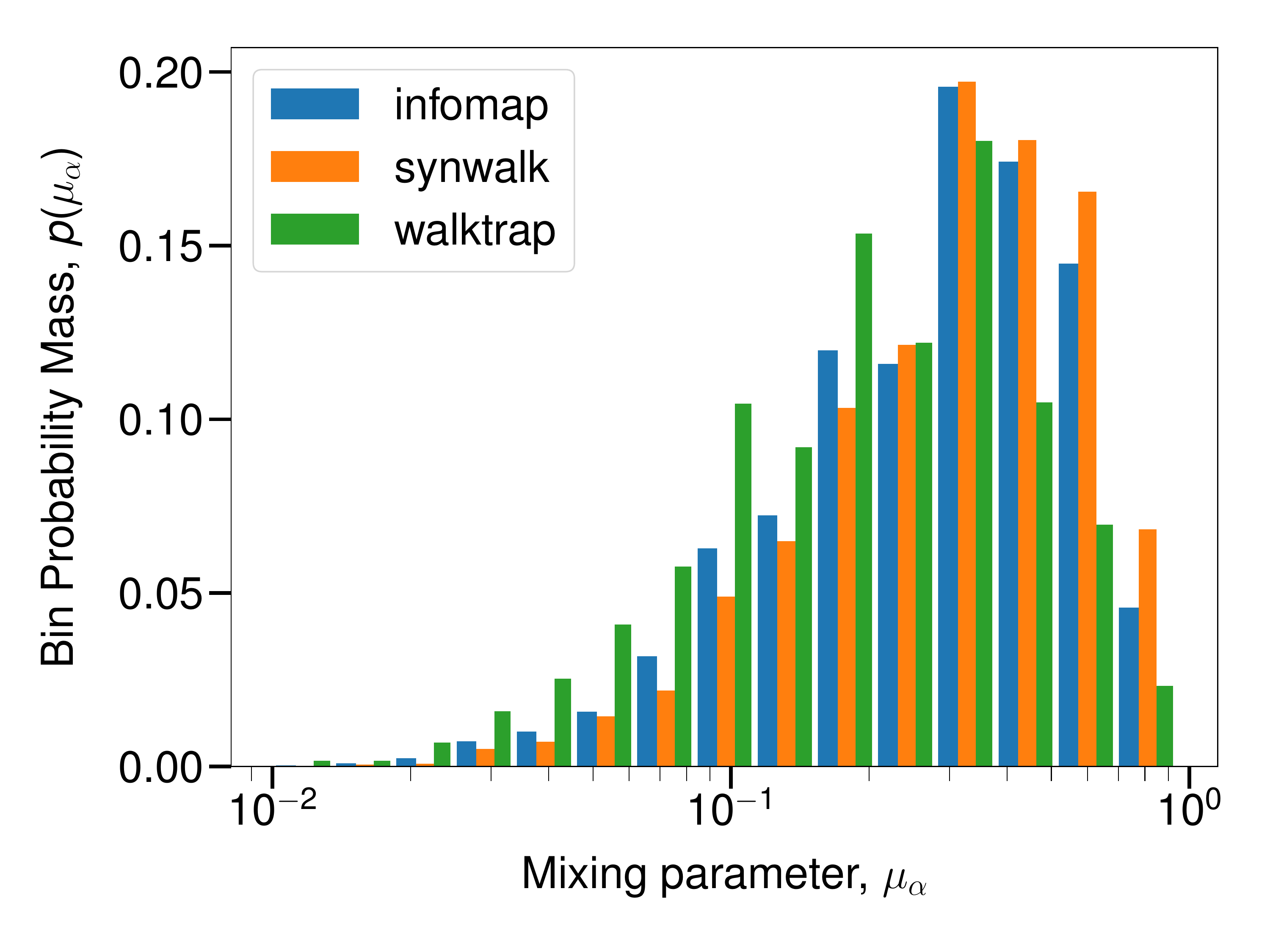}
        \caption{lastfm-asia.}
    \end{subfigure}
    \begin{subfigure}{0.32\textwidth}
        \centering
        \includegraphics[width=1.0\textwidth]{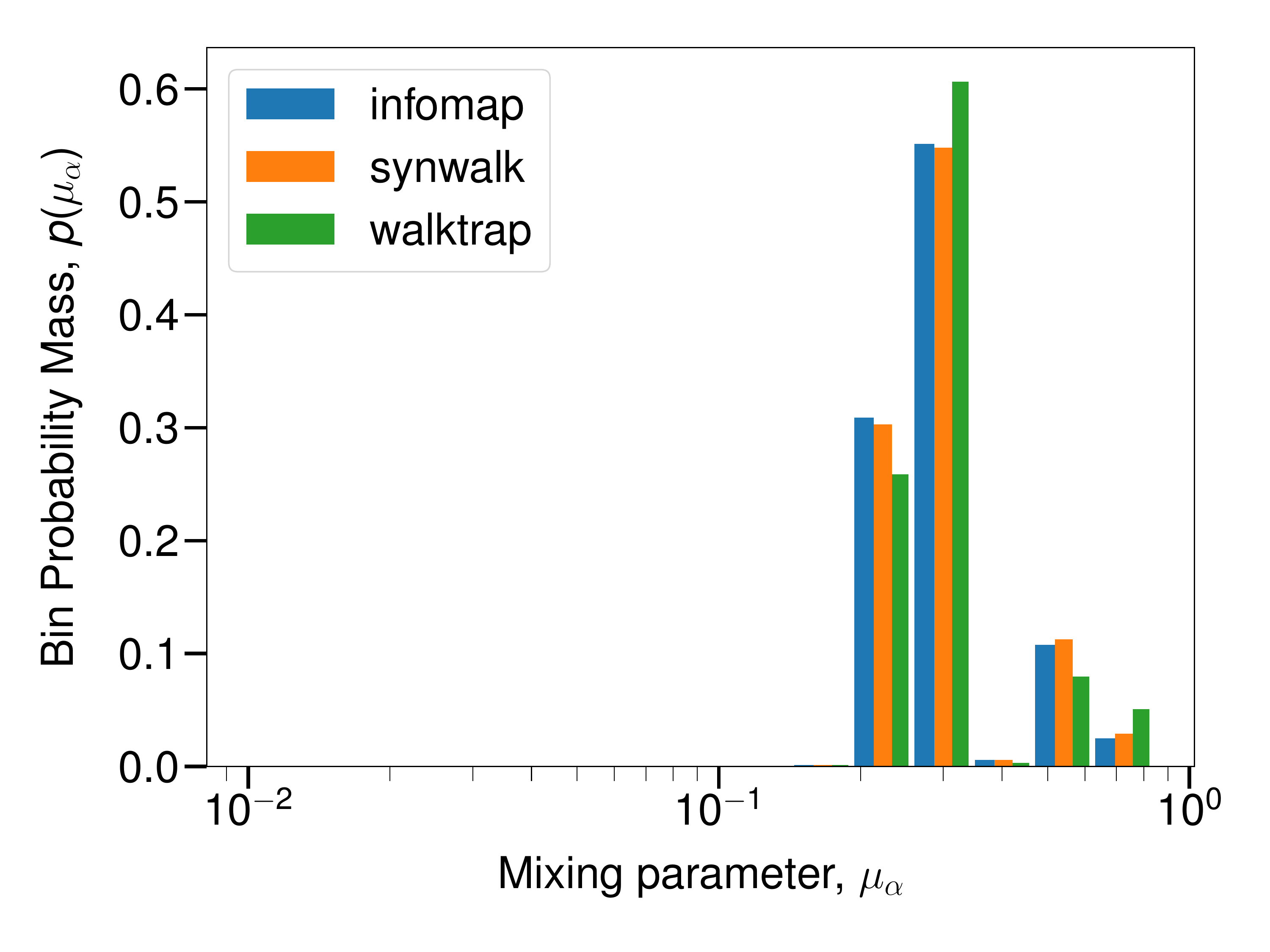}
        \caption{pennsylvania-roads.}
    \end{subfigure}
    \begin{subfigure}{0.32\textwidth}
        \centering
        \includegraphics[width=1.0\textwidth]{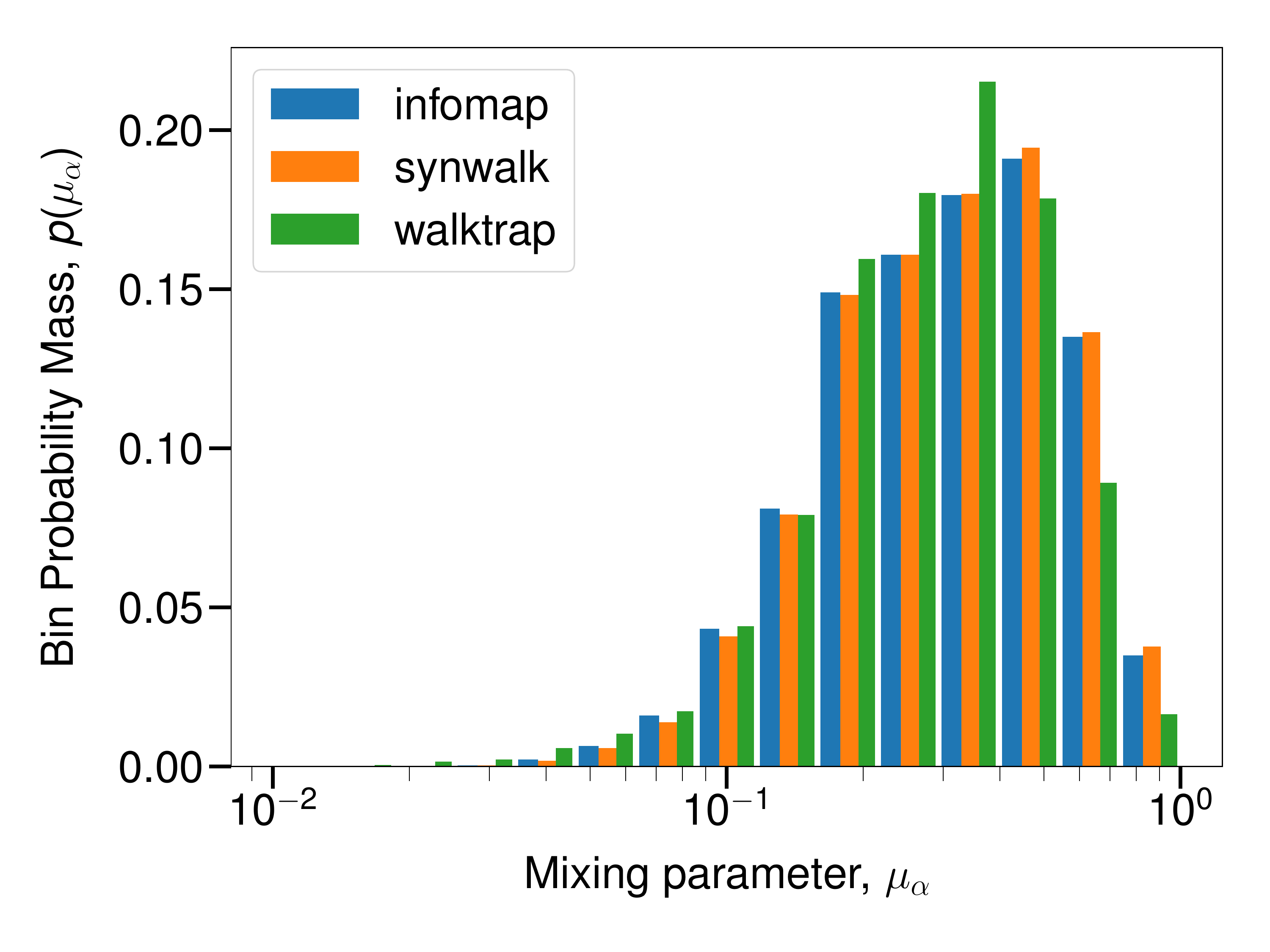}
        \caption{wordnet.}
    \end{subfigure}
    \caption{Distributions of node mixing parameters for the detection results on empirical networks.}\label{fig:empirical_mixing_parameters}
\end{figure*}


\end{document}